\documentclass[manuscript,nonacm]{acmart}
\settopmatter{printacmref=false}
\setcopyright{none}
\usepackage[T1]{fontenc} 
\usepackage{appendix}
\usepackage{graphicx}
\usepackage[colorinlistoftodos,prependcaption,textsize=tiny]{todonotes}
\usepackage[ruled,linesnumbered,algonl,procnumbered]{algorithm2e}
\usepackage{amsmath}
\SetInd{0.4em}{0.8em} 
\SetAlFnt{\footnotesize}
\SetKwProg{Fn}{Function}{}{}%
\SetKwProg{Macro}{Macro}{}{}%
\SetKwProg{struct}{Struct}{}{}%
\usepackage{wrapfig}
\usepackage{multicol}
\usepackage{amsthm}
\usepackage{algpseudocode}
\usepackage[bottom]{footmisc}
\usepackage{xstring}
\usepackage{enumitem}
\usepackage{subcaption}
\usepackage{listings}
\usepackage{float}
\usepackage{appendix}

\usepackage{mathtools}
\usepackage{amsfonts}
\usepackage{stmaryrd}
\usepackage{xcolor}
\usepackage{tikz}
\usepackage{url,xspace}
\newcommand{\figref}[1]{Figure \ref{#1}}

\newcommand{\relaxation}{relaxation\xspace}
\newcommand{\relaxing}{relaxing\xspace}

\newcommand{\relaxed}{relaxed\xspace}
\newcommand{\Relaxation}{Relaxation\xspace}

\newcommand{\accuracy}{\textit{accuracy}\xspace}

\newcommand{\access}{access\xspace}
\newcommand{\accessp}{access-point\xspace}

\newcommand{\substructure}{\textit{sub-structure}\xspace}

\newcommand{\substructures}{\textit{sub-structures}\xspace}
\newcommand{\koo}{\textit{k-out-of-order}\xspace}

\newcommand{\wDDc}{\textit{2Dc}\xspace}
\newcommand{\wDDd}{\textit{2Dd}\xspace}
\newcommand{\DDc}{\textit{2Dc-structure}\xspace}
\newcommand{\DDd}{\textit{2Dd-structure}\xspace}

\newcommand{\processor}{thread\xspace}
\newcommand{\processors}{threads\xspace}

\newcommand{\Processors}{Threads\xspace}
\newcommand{\hopo}{hop\xspace}
\newcommand{\hopping}{hopping\xspace}
\newcommand{\hopped}{hopped\xspace}
\newcommand{\hopos}{\textit{hops}\xspace}
\newcommand{\Hopos}{\textit{Hops}\xspace}
\newcommand{\window}{\textit{Window}\xspace}
\newcommand{\wincoupled}{\textit{WinCoupled}\xspace}
\newcommand{\windecoupled}{\textit{WinDecoupled}\xspace}
\newcommand{\windows}{\textit{Windows}\xspace}
\newcommand{\Window}{\textit{Window}\xspace}

\newcommand{\winmax}{\textit{$Win_{max}$}\xspace}
\newcommand{\winmin}{\textit{$Win_{min}$}\xspace}
\newcommand{\globalcounter}{\textit{$Win_{max}$}\xspace}
\newcommand{\winlimit}{\textit{$[\winmax,\winmin]$}\xspace}
\newcommand{\shiftup}{\textit{$shift_{up}$}\xspace}
\newcommand{\shiftdown}{\textit{$shift_{down}$}\xspace}

\newcommand{\depth}{\textit{depth}\xspace}
\newcommand{\width}{\textit{width}\xspace}

\newcommand{\slideop}{\textit{$W_{shift}$}\xspace}
\newcommand{\localcounter}{\textit{sub-count}\xspace}
\newcommand{\localgetcounter}{\textit{Get-sub-count}\xspace}
\newcommand{\localputcounter}{\textit{Put-sub-count}\xspace}

\newcommand{\putop}{\textit{Put}\xspace}
\newcommand{\getop}{\textit{Get}\xspace}

\newcommand{\random}{\textit{Random}\xspace}
\newcommand{\roundrobin}{\textit{Round-Robin}\xspace}
\newcommand{\randomc}{\textit{Random-C2}\xspace}

\newcommand{\sDD}{\textit{2D-Stack}\xspace}
\newcommand{\sDDd}{\textit{2Dd-Stack}\xspace}
\newcommand{\sDDc}{\textit{2Dc-Stack}\xspace}
\newcommand{\srandom}{\textit{S-random}\xspace}
\newcommand{\srandomc}{\textit{S-random-c2}\xspace}
\newcommand{\srobin}{\textit{S-robin}\xspace}
\newcommand{\kstack}{\textit{k-Stack}\xspace}
\newcommand{\elimination}{\textit{Elimination}\xspace}

\newcommand{\tsstack}{\textit{TS-Stack}\xspace}
\newcommand{\substak}{\textit{sub-stack}\xspace}
\newcommand{\substaks}{\textit{sub-stacks}\xspace}
\newcommand{\popo}{\textit{Pop}\xspace}
\newcommand{\pusho}{\textit{Push}\xspace}
\newcommand{\descriptor}{\textit{descriptor}\xspace}

\newcommand{\qDD}{\textit{2D-Queue}\xspace}
\newcommand{\qDDd}{\textit{2Dd-Queue}\xspace}
\newcommand{\qsegment}{\textit{Q-segment}\xspace}
\newcommand{\qrandom}{\textit{Q-random}\xspace}
\newcommand{\qrandomc}{\textit{Q-random-c2}\xspace}
\newcommand{\qrobin}{\textit{Q-robin}\xspace}
\newcommand{\msqueue}{\textit{MS-queue}\xspace}
\newcommand{\wfqueue}{\textit{wfqueue}\xspace}
\newcommand{\lru}{\textit{lru}\xspace}
\newcommand{\subq}{\textit{sub-queue}\xspace}
\newcommand{\subqs}{\textit{sub-queues}\xspace}
\newcommand{\globalcounterEnq}{\textit{$Win_{max}^{put}$}\xspace}
\newcommand{\globalcounterDeq}{\textit{$Win_{max}^{get}$}\xspace}

\newcommand{\subqueue}{\textit{sub-queue}\xspace}
\newcommand{\subqueues}{\textit{sub-queues}\xspace}
\newcommand{\enqop}{\textit{Enqueue}\xspace}
\newcommand{\deqop}{\textit{Dequeue}\xspace}

\newcommand{\cDD}{\textit{2D-Counter}\xspace}
\newcommand{\cDDd}{\textit{2Dd-Counter}\xspace}
\newcommand{\cDDc}{\textit{2Dc-Counter}\xspace}
\newcommand{\crandom}{\textit{C-random}\xspace}
\newcommand{\crandomc}{\textit{C-random-c2}\xspace}
\newcommand{\crobin}{\textit{C-robin}\xspace}
\newcommand{\cfaa}{\textit{C-FAA}\xspace}
\newcommand{\increment}{\textit{increment}\xspace}
\newcommand{\decrement}{\textit{decrement}\xspace}
\newcommand{\subcounter}{\textit{sub-counter}\xspace}
\newcommand{\subcounters}{\textit{sub-counters}\xspace}

\newcommand{\dDDd}{\textit{2Dd-Deque}\xspace}
\newcommand{\dDD}{\textit{2D-Deque}\xspace}
\newcommand{\subdeque}{\textit{sub-deque}\xspace}
\newcommand{\subdeques}{\textit{sub-deques}\xspace}
\newcommand{\drobin}{\textit{Deque-robin}\xspace}
\newcommand{\drandom}{\textit{Deque-Random}\xspace}
\newcommand{\dmaged}{\textit{Deque-Maged}\xspace}
\newcommand{\dsundell}{\textit{Deque-Sundell}\xspace}
\newcommand{\pushleftop}{\textit{PushLeft}\xspace}
\newcommand{\popleftop}{\textit{PopLeft}\xspace}
\newcommand{\pushrightop}{\textit{PushRight}\xspace}
\newcommand{\poprightop}{\textit{PopRight}\xspace}

\newcommand{\codetxt}[1]{\texttt{#1}\xspace}

\newcommand{\caeo}{\codetxt{CAS}}
\newcommand{\faa}{\codetxt{FAA}}

\newcommand{\multisocket}{\textit{Multi-S}\xspace}
\newcommand{\singlesocket}{\textit{Single-S}\xspace}


\newcommand{\sjwitems}{\textit{$N^{active}_i$}\xspace}

\newcommand{\sjitems}{\textit{$N_i$}\xspace}
\newcommand{\witems}{\textit{$N^{active}$}\xspace}
\newcommand{\fshift}{\textit{$shift^{up}$}\xspace}
\newcommand{\bshift}{\textit{$shift_{down}$}\xspace}
\newcommand{\wsize}{\textit{$K$}\xspace}

\newcommand{\shifted}{shifted\xspace}
\newcommand{\shift}{shift\xspace}
\newcommand{\shifts}{shifts\xspace}
\newcommand{\shifting}{shifting\xspace}
\newcommand{\Shifting}{Shifting\xspace}
\newcommand{\sta}[1]{\ema{\mathcal{S}_{#1}}}
\newcommand\floor[1]{\lfloor#1\rfloor}

\newtheorem{theorem}{\textbf{Theorem}}
\newtheorem{lemma}[theorem]{\textbf{Lemma}}

\newcommand{\treiberop}{\textit{$op$}\xspace}

\newcommand{\ignore}[1]{}

\newcommand{\ie}{\textit{i.e.}\xspace}



\newcommand{\nil}{\textit{NULL}\xspace}
\newcommand{\ema}[1]{\ensuremath{#1}\xspace}

\newcommand{\pro}[1]{\ema{\mathbb{P}\left(#1\right)}}
\newcommand{\expe}[1]{\ema{\mathbb{E}\left(#1\right)}}

\newcommand{\intedef}[2][0]{\ema{\left\llbracket #1,#2 \right\rrbracket}}
\newcommand{\inte}[2]{\ema{\left\llbracket #1,#2 \right\rrbracket}}

\begin{document}

\title[Relaxing Concurrent Data-structure Semantics for Increasing Performance]{Relaxing Concurrent Data-structure Semantics for Increasing Performance: A Multi-structure 2D Design Framework}.

\author{Adones Rukundo}
\email{adones@chalmers.se, adones@must.ac.ug}
\author{Aras Atalar}
\email{aaras@chalmers.se}
\author{Philippas Tsigas}
\email{tsigas@chalmers.se}
\affiliation{%
  \institution{Chalmers University of Technology}
  \city{Gothenburg}
  \country{Sweden}
}

\renewcommand{\shortauthors}{Adones Rukundo, et al.}

\begin{abstract}
There has been a significant amount of work  in the literature proposing semantic relaxation of concurrent data structures for improving scalability and performance. By relaxing the semantics of a data structure, a bigger design space, that allows weaker synchronization and more useful parallelism, is unveiled. Investigating new data structure designs, capable of trading semantics for achieving better performance in a monotonic way, is a major challenge in the area. We algorithmically address this challenge in this paper.
 
To address this challenge, we present an efficient lock-free, concurrent data structure design framework for {\it out-of-order} semantic relaxation. Our framework introduces a new two dimensional algorithmic design, that uses multiple instances of an implementation of the given data structure. The first dimension of our design is the number of data structure instances onto which operations are spread to, in order to achieve increased parallelism through disjoint memory accesses. The second dimension is the number of consecutive operations of a single \processor that can stay at the same data structure instance in order to benefit from data locality. Our design can flexibly explore this two-dimensional space to achieve the property of monotonically increasing throughput performance via relaxing concurrent data structure semantics within a tight deterministic relaxation bound, as we prove in the paper.
 
We show how our framework can instantiate lock-free {\it out-of-order} queues, stacks, counters and dequeues. We provide implementations of these \relaxed data structures and evaluate their performance and behaviour on two parallel architectures. The experimental evaluation shows that our two-dimensional data structures: i)  significantly outperform the respected previous proposed ones with respect to scalability and throughput performance and ii)  monotonically increase throughput as relaxation  increases. 

\end{abstract}
\keywords{semantics relaxation, data structures, concurrency, lock-freedom, performance, scalability, stack, deque, queue, counter}
\maketitle

\section{Introduction}
\label{sec:introduction}

\begin{figure}[b]
    \begin{minipage}[c]{0.21\textwidth}
        \includegraphics[scale=0.17]{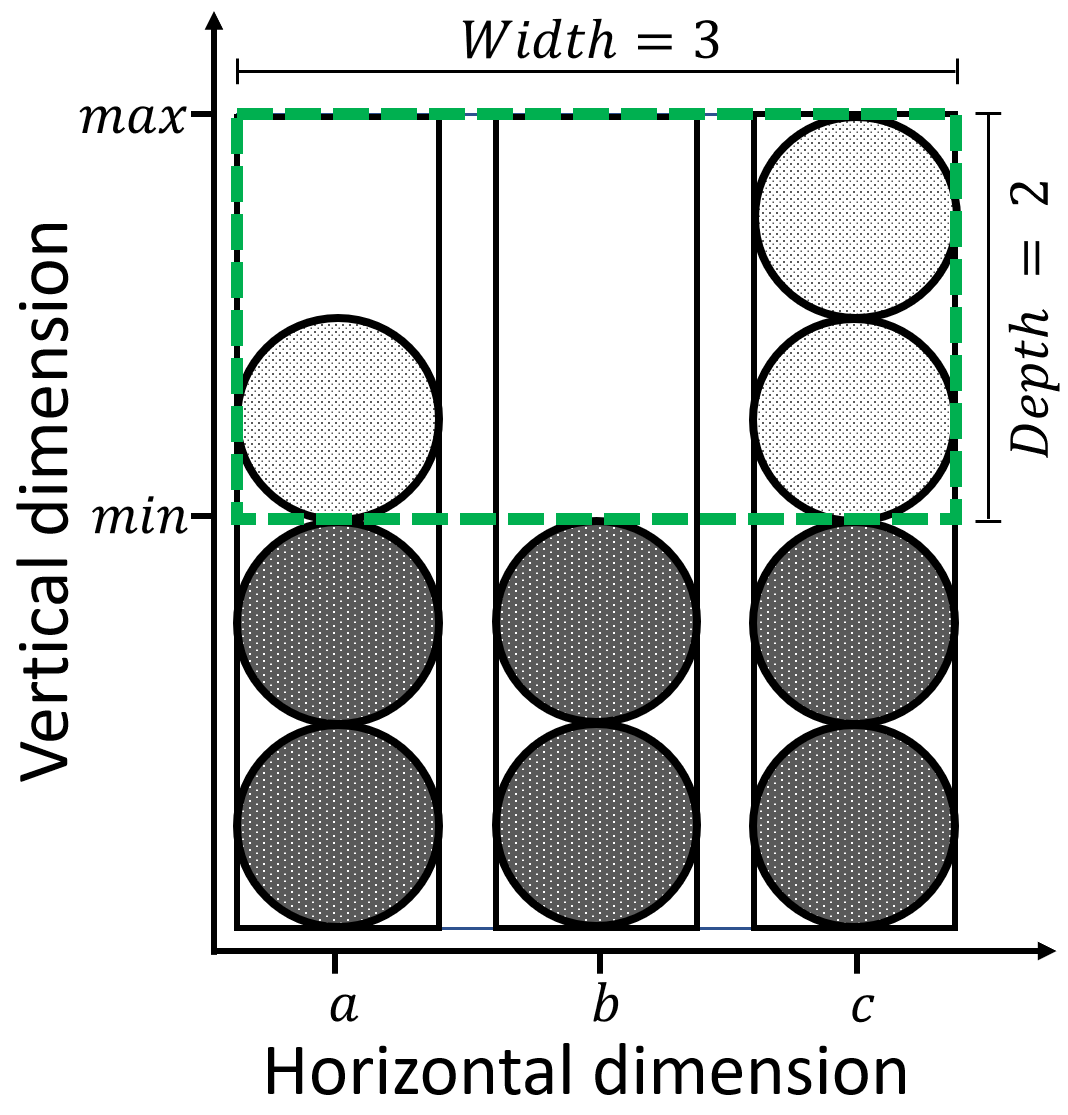}
    \end{minipage}\hfill
    \begin{minipage}[c]{0.79\textwidth}
        \caption{An illustration of our 2D design using a Stack as an example. The example employs three \substaks $a$, $b$ and $c$. The relaxation bound $k$ is proportional to the area of the green dashed rectangle in which stack operations are bounded to occur. Circles represent stack items. Grey circles depict items that can be Poped. $a$ can be used for both \pusho and \popo. $b$ can be used for \pusho but not for \popo. $c$ can be used for \popo but not for \pusho.}
        \label{fig:2Dframework}
    \end{minipage}\hfill
\end{figure}

Concurrent data structures allow operations by multiple \processors to concurrently access the data structures, and this requires synchronisation to guarantee consistency with respect to their sequential semantics \cite{dijkstra1965solution,Dijkstra:1968:SLS:363095.363143}. The synchronisation of concurrent accesses aims at enforcing some notion of atomicity, where, an operation appears to occur at a single instant between its invocation and its response. A concurrent data structure is typically designed around one or more synchronisation \access points, from where \processors compute, consistently, the current state of the data structure. Synchronisation is vital in achieving consistency and cannot be eliminated \cite{Attiya:2011:LOE:1925844.1926442}. Whereas this is true, synchronisation might generate contention in memory resources hurting scalability and performance.

The necessity of reducing contention at the synchronisation \access points, and consequently improving scalability, is and has been a major focus for concurrent data structure researchers. Aiming to address this challenge, techniques like: elimination \cite{afek2010scalable,hendler2010scalable,shavit1995elimination}, combining \cite{shavit2000combining}, dynamic elimination-combining \cite{DBLP:journals/corr/abs-1106-6304}, back-off, operation buffers and batching have been proposed. To address, in a more significant way, the challenge of scalability bottlenecks of concurrent data structures, it has been proposed that the semantic legal behaviour of data structures should be extended \cite{Shavit:2011:DSM:1897852.1897873}. This line of research has led to the introduction of an extended set of weak semantics including weak internal ordering, weakening consistency and semantic \relaxation. 

One of the main definitions of data structure semantic \relaxation, proposed and used in the literature, is \koo \cite{afek2010quasi,henzinger2013quantitative,Talmage2017,Haas:2013:DQS:2482767.2482789,rihani2015brief,wimmer2015lock}. \koo semantics allow operations to occur out of order within a given $k$ bound, e.g. a pop operation of a \koo stack can remove any item among the $k$ topmost stack items. By allowing a \popo operation to remove any item among the $k$ topmost stack items, the extended stack semantics do not anymore impose by definition a single \access point for all its operations. Thus, allowing for potentially  more efficient stack designs with lower synchronisation overhead.

Semantics \relaxation can be exploited to better use disjoint \access points to achieve improved parallelism and \processor local data processing \cite{DBLP:conf/wdag/RukundoAT19}. Disjoint \access is popularly achieved by distributing data structure operations over multiple disjoint instances of a given data structure \cite{Haas:2013:DQS:2482767.2482789,rihani2015brief,DBLP:conf/concur/HaasHHKLPSSV16,DBLP:conf/esa/Williams0D21}. On the other hand, the locality is generally achieved through letting a  \processor stick to the same memory location for a given number of consecutive operations \cite{wimmer2015lock,gidron2012salsa,DBLP:conf/concur/HaasHHKLPSSV16}. 

In this paper, we introduce an efficient two-dimensional algorithmic design framework, that uses disjoint multiple instances (\substructures) of a given data structure as shown in Figure \ref{fig:2Dframework}. The first dimension of the framework is the number of \substructures operations are spread to, in order to benefit from parallelism through disjoint \access points. The second dimension is the number of consecutive operations that can occur on the same \substructure in order to benefit from data locality by a \processor sticking to the same memory location for a given number of operations. We use two parameters to control the dimensions; \width for the first dimension (horizontal) and \depth for the second dimension (vertical) as shown in Figure \ref{fig:2Dframework}.

A \processor can operate on a given \substructure for as long as a set of conditions hold for this \substructure with respect to the operation on hand, in this case we call the \substructure valid. As an example, a valid \substructure might be one that does not exceed \emph{max} or go below \emph{min}, as depicted by the dashed green rectangle in Figure \ref{fig:2Dframework}. Validity conditions make \substructures valid or invalid for a given operation. This implies that \processors have to search for a valid \substructure for the respective operation, increasing operation cost (latency). Our framework efficiently controls this search overhead by limiting the number of \substructures utilised, and by allowing a \processor to operate on the same \substructure consecutively for as long as the validity conditions hold. Using these validity conditions, our framework generates tenable and tunable \relaxation behaviour, tightly bound by a deterministic \relaxation bound.

Our algorithmic design framework can be used to extend existing lock-free data structure algorithms to achieve \koo semantics. This can be achieved with minimal modifications to the original data structure algorithm as we later show in Section \ref{sec:algorithms}. Using our framework, we derive semantically \relaxed lock-free \koo stacks, queues, dequeue and counters together with proof of correctness. Experimental evaluation shows that the derived data structures significantly outperform all respective previous known data structure implementations and monotonically increase throughput as relaxation increases.

The rest of the paper is structured as follows. In Section \ref{sec:relatedwork} we discuss the literature related to this work. We present the $2D$ framework in Section \ref{sec:2Dtechnique} followed by the optimisation discussion in Section \ref{sec:optimization} and complexity analysis in Section \ref{sec:complexity}. The derived data structure algorithms are presented in Section \ref{sec:algorithms}. With the aim to enrich the  evaluation of our algorithmic design, a pallet of additional multi structure algorithms are presented in Section \ref{sec:otheralgorithms}. An experimental evaluation is presented and discussed in Section \ref{sec:evaluation}. The paper concludes in Section \ref{sec:conclusion}.
\section{Related Work}
\label{sec:relatedwork}
Recently, data structure semantic \relaxation has attracted the attention of researchers, as a promising direction towards improving concurrent data structures' scalability \cite{henzinger2013quantitative,Talmage2017,Shavit:2011:DSM:1897852.1897873,DBLP:conf/esa/Williams0D21}. It has also been shown that small changes on the semantics of a data structure can have a significant effect on the computational power of the data structure \cite{Shavit:2016:CRD:3011492.3011515}. The interest in semantic \relaxation is further supported by the need to optimise concurrent data structures for the ever growing parallel/multi-core computing hardware. As discussed earlier in Section \ref{sec:introduction}, one of the main definitions of semantic \relaxation proposed and used in the literature is \koo.

Using the \koo definition, a segmentation technique has been proposed in \cite{afek2010quasi}, later revisited in \cite{henzinger2013quantitative} realising a \relaxed Stack (\kstack) and FIFO Queue (\qsegment) both with \koo semantics. The technique involves a linked-list of memory segments with $k$ number of indexes on which an item can be added or removed per index. The stack items are accessed through the topmost segment, whereas the queue has a tail and head segment from which \enqop and \deqop can occur respectively. Segments can be added and removed. \Relaxation is only controlled through varying the number of indexes per segment. As discussed in Section \ref{sec:introduction}, increasing the number of indexes increases operation latency and later becomes a performance bottleneck. This limits the performance benefits of the technique to a small range of \relaxation values. 

Also, load balancing together with multiple queue instances (\subqueues) has been used to design a \relaxed FIFO queue (\lru) with \koo semantics \cite{Haas:2013:DQS:2482767.2482789}. Each \subqueue maintains two counters, one for \enqop another for \deqop, while two global counters, one for \enqop and one for \deqop maintain the total number of operations for all \subqueues. The global counters are used to calculate the expected number of operations on the least-recently-used \subqueue. \Processors can only operate on the least-recently-used \subqueue. This implies that for every operation, \processors must synchronise on the global counter, making it a sequential bottleneck. Moreover, \processors have to search for the least-recently-used \subqueue leading to high operation and memory latency due to lack of \processor locality.

Random distribution of data structure operations over multiple instances (\substructures) of a given data structure has been used to design \relaxed priority queues \cite{rihani2015brief}, and has been tested on other data structures including stacks, counters and FIFO queues \cite{DBLP:conf/wdag/RukundoAT19}. For each operation, a \processor randomly selects a \substructure onto which it goes ahead to perform that given operation. To improve on the quality (order guarantees), a given \processor can select a set of \substructure uniformly at random, and then selects an appropriate \substructure (according to the data structure semantics) from the given set of \substructures onto which the \processor proceeds to operate. The technique significantly reduces contention where the number of \substructures is higher than the number of \processors and operations are distributed uniformly at random. However, as discussed in \cite{DBLP:conf/spaa/RukundoAT22}, as the number of \substructures increase, memory latency increases due to lack of \processor locality and limits the performance benefits gained from reduced contention. In terms of \koo, the technique does not provide for a deterministic \relaxation bound ($k$) but rather a probabilistic one under specific assumptions \cite{Alistarh:2017:PCP:3087801.3087810}.

The proposed \relaxation techniques, mentioned above, apply \relaxation in one dimension, i.e, increase the number of disjoint \access points to improve parallelism and reduce contention. However, this also increases operation latency due to the increased number of \access points to select from, and memory latency due to lack of locality. Without a remedy to this downside, the proposed techniques cannot provide monotonic \relaxation for better performance. Towards addressing the challenge of locality, some designs such as that of a priority \cite{rihani2015brief} have been revisited \cite{DBLP:conf/esa/Williams0D21} to introduce \processor locality. The authors noted that a \processor accesses several cache lines from randomly chosen \substructures which are rarely reused but usually cause cache invalidation costs later, a downside also studied and discussed in \cite{DBLP:conf/spaa/RukundoAT22}. To improve locality \cite{DBLP:conf/esa/Williams0D21}, three measures are applied; insertion and deletion buffers, operation batching and a \processor sticking on the same set of \substructures for several consecutive operations. Operation buffers and batching have been frequently used in the literature to reduce contention at a specific memory location \cite{10.1145/3210377.3210388,alistarh2015spraylist,10.1007/978-3-030-85665-6_24,wimmer2015lock}. Buffering and batching can also help to improve locality where \processors can have exclusive buffers (similar to sticking to a given memory location) \cite{wimmer2015lock}, otherwise, the technique would also suffer from memory latency due to cache invalidation costs. One of the most efficient techniques for improving \processor locality is having a \processor operate on the same memory location (sticking) for several consecutive operations to avoid cache line sharing and invalidation \cite{DBLP:conf/spaa/RukundoAT22}. 

Although \processor memory sticking can improve locality, it can significantly reduce the quality (order guarantees) of a relaxed data structure, outweighing the performance gained from locality as observed in \cite{DBLP:conf/esa/Williams0D21}. As the \processor stickiness increases, the expected quality drops at a higher rate than the measured increase in throughput performance. It is also noted that \processor stickiness under random distribution complicates the provision of order guarantees. In other works, the \processor buffer technique has been used to realise a relaxed priority queue \cite{wimmer2015lock}. Here each \processor is assigned a local buffer (a sticky \substructure) onto which it performs its priority queue operations, later when the buffer is full, the given \processor flushes its buffer to a shared priority queue. On the other hand when the \processor buffer is empty, the given \processor can get (work stealing) items from other available \processor buffers or get items from the shared priority queue. In this case, \processor buffers provides a deterministic \relaxation bound, however, the bound is a product of the degree of stickiness (number of operations per buffer) and number of \processors. This implies that an increase in number of \processors or buffer size leads to a drop in data quality, facing the same challenge as discussed above. Buffer techniques can face a challenge of merging data especially for priority queues where keys might have to be inserted independently at different points within the data structure as observed in \cite{wimmer2015lock}. As noted in \cite{wimmer2015lock} work stealing can increase the complexity of the data structure design and also reduce cache efficiency since it involves spying and modifying other \processor buffers. 

Without using \substructures, a relaxed skip list based priority queue is proposed, where the delete operations are randomly distributed (sprayed) to different points within a skip list \cite{alistarh2015spraylist}. The delete operations avoid the sequential bottleneck at the head by randomly walking the list and returning any element among the first $O(plog^3p)$, where $p$ is the number of threads. The technique reduces contention that can arise from concurrent delete operations trying to access and return the same element at the head. The technique provides a probabilistic \relaxation bound that is proportional to the number of \processors. This implies, that trying to increase throughput through increasing number of \processors reduces the quality of the data. 

Apart from semantic \relaxation, other design strategies such as internal weak ordering \cite{Dodds:2015:SCT:2775051.2676963}, and local linearizability \cite{DBLP:conf/concur/HaasHHKLPSSV16} have been proposed to extend semantics for improving scalability. Weak internal ordering has been proposed and used to implement a timestamped stack (\tsstack) \cite{Dodds:2015:SCT:2775051.2676963}, where \pusho timestamps each pushed item. Each \processor has its local buffer onto which it performs \pusho operations. However, \popo operations pay the cost of searching for the latest item. In the worst case, \popo operations might contend on the same latest item if there are no concurrent \pusho operations. This leads to search retries, especially for workloads with higher \popo rates (bursts) than \pusho ones. 

Local linearizability has also been proposed for designing efficient concurrent data structures such as; FIFO queues and Stacks \cite{DBLP:conf/concur/HaasHHKLPSSV16}. Locally linearizable FIFO queues and Stacks proposed in \cite{DBLP:conf/concur/HaasHHKLPSSV16} relies on multiple instances of a given data structure. Each \processor is assigned an instance on which it locally linearizes all its operations. Operations: \enqop (FIFO queue) or \pusho (Stack) occur on the assigned instance for a given \processor, whereas, \deqop or \popo can occur on any of the available instances. With \deqop or \popo occurring more frequently, contention quickly builds as \processors try to access remote buffers. The \processors also lose the locality advantage while accessing remote buffers, cancelling out the caching advantage especially for single access data structures such as the Stack \cite{Hackenberg:2009:CCA:1669112.1669165,David:2013:EYA:2517349.2522714,Schweizer:2015:ECA:2923305.2923811}. 

Without \relaxing or weakening consistence of the data structure sequential semantics, elimination \cite{afek2010scalable,shavit1995elimination,DBLP:journals/corr/abs-1106-6304} and combining \cite{shavit2000combining} have also been proposed as way to improve scalability by reducing contention at specific points within a given data structure. Elimination implements a collision path on which different concurrent operations (such as \pusho and \popo for the case of a stack) try to collide and cancel out, without accessing a joint \access point (such as a stack's top), otherwise, they proceed to access the central structure \cite{hendler2010scalable,Moir:2005:UEI:1073970.1074013}. Such operation pairs create disjoint collisions that can be executed in parallel.
Combining, on the other hand, allows operations from multiple \processors to be combined and executed by a single \processor without the other \processors contending on the central structure \cite{Fatourou:2012:RCS:2145816.2145849,Hendler:2010:FCS:1810479.1810540}. However, their performance depends on the specific workload characteristics. Elimination mostly benefits symmetric workloads, whereas combining mostly benefits asymmetric workloads. Furthermore, the central structure sequential bottleneck problem still persists. 
\section{The 2D Framework}
\label{sec:2Dtechnique}

In this section, we describe our disjoint multi-structure $2D$ design framework. The 2D framework uses disjoint multiple instances (\substructures) of the given data structure as depicted in \figref{fig:2Dframework}. \Processors can select and operate on any of the \substructures following a maximum and minimum operation count threshold. Herein, \emph{operation} refers to the process that updates the data structure state by adding (\putop) or removing (\getop) an item (\pusho and \popo respectively for the stack example). Each \substructure holds a counter (\localcounter) that counts the number of local successful operations. 

A combination of operation count threshold and number of \substructures, form a logical count period, we refer to it as \window, depicted by the dashed green rectangle in Figure \ref{fig:2Dframework}. The \window limits the number of operations that can occur on each \substructure (\winlimit): maximum (\winmax) and minimum (\winmin) operation count threshold for all \substructures, for a given period. This implies, that for a given period, a \substructure can be operated on (valid) or not (invalid) as exemplified in \figref{fig:2Dframework}, and a \window can be full or empty. The \window is \emph{full} if all \substructures have the maximum number of operations ($\localcounter=\winmax$), \emph{empty}, if all \substructures have the  minimum number of operations ($\localcounter=\winmin$). The \window is defined by two parameters; \width and \depth as exemplified in Figure \ref{fig:2Dframework}. $\width=\#\substructures$, and $\depth=\winmax-\winmin$.

To validate a \substructure, its \localcounter is compared with either \winmax or \winmin depending on the operation; $\localcounter<\winmax$ or $\localcounter>\winmin$. If the given \substructure is invalid, the \processor has to \hopo to another \substructure until a valid \substructure is found (validity is operation specific as we discuss later). If a \processor cannot find a valid \substructure, then, the \window is either full or empty. The \processor will then, either increment or decrement both \winmax and \winmin, the process we refer to as, \window \shifting. A \window can \shift up or down, and is controlled by \shiftup or \shiftdown values respectively, where, $0<\shiftup,\shiftdown\leq\depth$. The \processor can only shift the \window by a given \shift value. \shiftup and \shiftdown values can be configured differently to optimise for different workloads and execution environment.

\Relaxation is controlled by the two parameters that define a \window: \width and \depth. \width and \depth provide a deterministic \relaxation bound for the derived \koo \relaxed data structure as proved in Section \ref{sec:modelcorrectness}. We define and present the implementation of two types of \windows: \wincoupled (\wDDc) and \windecoupled (\wDDd). \wincoupled implements a single \access \window for all data structure operations, whereas \windecoupled implements an independent \window for each data structure operation.
\subsection{\wincoupled}
\begin{algorithm*}
\SetAlgoVlined 
\SetAlgoSkip{}
\SetNlSty{}{}{} 

\caption{Window Coupled (2Dc)}
\label{algo:DDc}

\begin{multicols}{2}[]
\struct{Descriptor Des}
{                              
	*item\;                 \label{line1:struct1}
	count\;
	version\;               \label{line1:struct2}
}
\struct{Window Win}
{
	max\;
	version\;
} 
\Fn{Window(Op,index,contention)}
{
	IndexSearch $\gets$ Random $\gets$ notempty $\gets$ 0\;         \label{line1:serachentry}
	LWin $\gets$ Win\;                                              \label{line1:winlocal}
	\If{contention = True}
	{
	    index $\gets$ RandomIndex(); contention $\gets$ False\;           \label{line1:contselect}
	}
	\While{True}
	{	
		\If{IndexSearch = \width}
		{
			SHIFTWINDOW()\;
		}
		
		Des $\gets$ Array[index]\;                              \label{line1:readdesciptor}
		\uIf{Op = put $\land$ Des.count $<$ Win.max}
		{                                                       \label{line1:checkcount1}
			return \{Des,index\}\;                              \label{line1:returndescriptor1}
		}
		\uElseIf{Op = get $\land$ Des.count $>$ (Win.max - depth)}
		{                                                       \label{line1:checkcount2}
		    return \{Des,index\}\;                              \label{line1:returndescriptor2}
		}
		\uElseIf{LWin = Win}
		{                                       \label{line1:winshiftcheck1} 
		   HOP()\;
		}
		\Else
		{
		    LWin $\gets$ Win; IndexSearch $\gets$ 0\;      \label{line1:changewindow1}
		}
	}
}
\columnbreak
\Macro{SHIFTWINDOW()}
{
    \If{Op = get $\land$notempty = 0}
	{
		return \{Des,index\};                       \label{line1:returnempty}
	}
	\If{LWin = Win }
    {                                               \label{line1:winshiftcheck2}
	    \uIf{Op = put}
	    {                                           \label{line1:shiftup}
	        NWin.max $\gets$ LWin.max + ShiftUp\;
	    }
		\ElseIf{Op = get $\land$ Win.max $>$ depth}
		{                                           \label{line1:shiftdown}
		    NWin.max $\gets$ LWin.max - ShiftDown\;
		}
		NWin.version $\gets$ LWin.version + 1\;
		CAS(Win,LWin,NWin)\;                        \label{line1:shiftwindow}
	}
	LWin $\gets$ Win; IndexSearch $\gets$ 0\;       \label{line1:changewindow2}
}
\Macro{HOP()}
{
    \eIf{Random$<$2}
	{
		index $\gets$ RandomIndex(); 
		Random += 1\;                               \label{line1:randomhops}
	}
    {
        \If{Op = get $\land$ Des.item != NULL}
        {
            notempty $\gets$ 1\;                    \label{line1:nullcheck}
        }
        \eIf{index = (width - 1)}
        {
            index $\gets$ 0\;
        }
        {
            index += 1\;                            \label{line1:robinhops}
        }
         IndexSearch += 1\;
    }
}
\end{multicols}
\end{algorithm*}

\wincoupled couples both \putop and \getop to share the same \window and \localcounter for each \substructure. A successful \putop increments whereas, a successful \getop decrements the given \localcounter. On a full \window, \putop increments \winmax \shifting the \window up by a given value (\shiftup), whereas, on an empty \window, \getop decrements \winmax, \shifting the \window down by a given value (\shiftdown). \wincoupled resembles elimination \cite{shavit1995elimination}, only that here, we cancel out operation counts for matching \putop and \getop on the same \substructure within the same \window. Just like elimination reduces joint \access updates, \wincoupled reduces \window \shift updates.

In Algorithm \ref{algo:DDc}, we present the algorithmic steps for \wincoupled. Recall, $\width = \#\substructures$ and $\depth=\winmax-\winmin$. Each \substructure is uniquely identified by a descriptor that holds information including a pointer to the \substructure, \localcounter counter, and a version number (Line \ref{line1:struct1} to \ref{line1:struct2}). The version number is to avoid ABA\footnote{The ABA problem occurs when multiple processors access a shared location without noticing each others changes.} related issues, where different \substructure states might hold same descriptor update values. Using a wide \caeo, we update the descriptor information in a single atomic step (Line \ref{line1:shiftwindow}). This helps maintain the integrity of the underlying \substructures with minimal structural modifications. We use an array of descriptors through which \processors can access any of the available \substructures. 

To perform an operation, the \processor has to search and select a valid \substructure within a \window period. At the beginning of every search, the \processor stores a copy of the current \window locally (Line \ref{line1:winlocal}) which is used to detect \window \shifts while searching (Line \ref{line1:winshiftcheck1}, \ref{line1:winshiftcheck2}). During the search, the \processor validates each \substructure count against \winmax (Line \ref{line1:checkcount1}, \ref{line1:checkcount2}). If no valid \substructure is found, \winmax is updated atomically, \shifting the \window up or down (Line \ref{line1:shiftwindow}). \putop increments \winmax to \shift the \window up (Line \ref{line1:shiftup}), whereas, \getop decrements \winmax to \shift the \window down (Line \ref{line1:shiftdown}). Before index hopping or \window \shifting, the \processor must verify that the \window has not yet \shifted (Line \ref{line1:winshiftcheck1} or \ref{line1:winshiftcheck2} respectively). For every \window \shift during the search, the \processor restarts the search with the new \window (Line \ref{line1:changewindow1}, \ref{line1:changewindow2}). 

If a valid \substructure is selected, the respective descriptor state (Line \ref{line1:readdesciptor}) and index are returned  (Line \ref{line1:returndescriptor1}, \ref{line1:returndescriptor2}). The \processor can then proceed to try and operate on the given \substructure using the respective descriptor information. As a way to check if the data structure is empty, the \window search can only return an empty \substructure (Line \ref{line1:returnempty}), if during the search, all \substructures where empty ($NULL$ pointer) (Line \ref{line1:nullcheck}). Using the \window parameters, \width, and \depth, we can tightly bound the \relaxation behaviour of the derived 2Dc data-structure as discussed later in Section \ref{sec:modelcorrectness}.

However, we should note that, \window and \substructure updates occur independent of each other. For \wincoupled, this can lead to a \substructure being updated although the \window it was selected from has since \shifted. Take as an example, a \getop selecting a \substructure from a full \window ($w_1$) at time ($t_1$), followed by a \putop that reads the full \window and \shifts it up to $w_2$ at $t_2$. It is possible for the \getop to update the selected \substructure at $t_3$ based on $w_1$ that has since \shifted to $w_2$. This difference is however bounded as proved in Section \ref{sec:modelcorrectness}.
\subsection{\windecoupled}
\label{app:windecoupled}
\begin{algorithm*}
\SetAlgoVlined 
\SetAlgoSkip{}
\SetNlSty{}{}{} 

\caption{Window Decoupled (2Dd)}
\label{algo:DDd}

\begin{multicols}{2}[]
\struct{Descriptor Des}
{
	*item\;
	getcount\;                                      \label{line2:getscount}
	putcount\;                                      \label{line2:putscount}
}
\struct{Window Win}
{
	max\;
}
\Fn{Window(Op,index,contention)}
{
    \If{contention = True}
    {
        index $\gets$ RandomIndex(); contention=False\;   \label{line2:contselect}
    }
    \eIf{op = put}
    {
        return PutWindow(index)\;                   \label{line2:putwindow}
    }
    {
        return GetWindow(index)\;                   \label{line2:getwindow}
    }
}
\Fn{PutWindow(index)}
{
	IndexSearch $\gets$ Random $\gets$ 0\;          \label{line2:serachentry}
	LpWin $\gets$ pWin\;           \label{line2:winlocal1}
	\While{True}
	{	
		\uIf{IndexSearch = \width}
		{
			\If{LpWin = pWin}
			{                                                      \label{line2:winshiftcheck1}
			    NWin.max $\gets$ LpWin.max + \depth\;              \label{line2:adddepth1}
			    CAS(pWin,LpWin,NWin)\;                             \label{line2:shiftwindow1}
			}
			LpWin $\gets$ pWin; IndexSearch $\gets$ 0\;
		}
		
		Des $\gets$ putArray[index]\;                                \label{line2:readdesciptor1}
		\uIf{Des.putcount $<$ pWin.max}
		{                                                       \label{line2:checkcount1}
		    return \{Des,index\}\;                              \label{line2:returndescriptor1}
		}
		\uElseIf{LpWin = pWin}
		{                                                       \label{line2:winshiftcheck2}
		    HOP()\;
		}
		\Else
		{
		    LpWin $\gets$ pWin; IndexSearch $\gets$ 0\;         \label{line2:changewindow1}
		}
	}
}
\Fn{GetWindow(index)}
{
	IndexSearch $\gets$ Random $\gets$ notempty $\gets$ 0; LgWin $\gets$ gWin\;   \label{line2:winlocal2}
	\While{True}
	{	
		\uIf{IndexSearch = \width}
		{
			\If{LgWin = gWin}
			{                                                   \label{line2:winshiftcheck3}
			    NWin.max $\gets$ LgWin.max + \depth\;           \label{line2:adddepth2}
			    CAS(gWin,LgWin,NWin)\;                          \label{line2:shiftwindow2}
			}
			LgWin $\gets$ gWin\; 
			IndexSearch $\gets$ notempty $\gets$ 0\;
		}
		Des $\gets$ getArray[index]\;                                \label{line2:readdesciptor2}
		\uIf{Des.getcount $<$ gWin.max $\land$ Des.item != NULL}
		{                                                        \label{line2:checkcount2}
		    return \{Des,index\}\;                               \label{line2:returndescriptor2}
		}
		\uElseIf{LgWin = gWin}
		{                                                       \label{line2:winshiftcheck4}
		    \If{Des.item != NULL}
		    {
		        notempty $\gets$ 1\;                            \label{line2:nullcheck}
		    }
		    HOP()\;
		}
		\Else
		{
		    LgWin $\gets$ gWin; IndexSearch $\gets$ notempty $\gets$ 0\;    \label{line2:changewindow2}
		}
	}
}
\Macro{HOP()}
{
    \eIf{Random $<$ 2}
	{
		index $\gets$ RandomIndex(); Random += 1\;          \label{line2:randomhops}
	}
    {
        IndexSearch += 1\;
        \uIf{IndexSearch = \width $\land$ notempty = 0}
		{
		    return \{Des,index\}\;                          \label{line2:returnempty}
		}
		index += 1\;                                    \label{line2:robinhops}
        \If{index = width}
        {
            index $\gets$ 0\;
        }
    }
}
\end{multicols}
\end{algorithm*}

\windecoupled decouples the data structure operations (\putop and \getop) and assigns them independent \windows. Also, an independent \localcounter is maintained for each data structure operation, on each \substructure. Unlike \wincoupled, each successful operation increments its respective \localcounter and \winmax on a full \window. This implies that both \localcounter and \window counters increase monotonically as shown in Algorithm \ref{algo:DDd}. 

Recall that operations are decoupled to operate within independent \windows (Line \ref{line2:putwindow}, \ref{line2:getwindow}). Therefore, the \substructure descriptor differs from that used for \wincoupled; it includes a pointer to the \substructure, \localgetcounter counter for \getop operations (Line \ref{line2:getscount}) and \localputcounter counter for \putop operations (Line \ref{line2:putscount}). Note that we do not need a version number since the counter updates monotonically increment it, which avoids the ABA problem. The descriptor can be updated in one atomic step using a wide \caeo instruction (Line \ref{line2:shiftwindow1}, \ref{line2:shiftwindow2}), just like in \wincoupled. Unlike \wincoupled, \windecoupled can use one or more arrays of descriptors where necessary. As an example, we describe the \windecoupled using two arrays of descriptors through which \processors can access any of the available \substructures; \putop descriptor's for \processors performing \putop operations, and \getop descriptor's for \processors performing \getop operations. Apart from accessing a different array for each operation, the \window search steps are similar to those of \wincoupled. 


To perform an operation, the \processor has to search and select a valid \substructure within a \window period for the given operation. The type of operation determines which descriptor array to access (Line \ref{line2:readdesciptor1}, \ref{line2:readdesciptor2}) and \window to operate with (Line \ref{line2:putwindow} or \ref{line2:getwindow}). At the beginning of every search, the \processor stores a copy of the \window locally (Line \ref{line2:winlocal1}, \ref{line2:winlocal2}) which is used to detect \window \shifts while searching (Line \ref{line2:winshiftcheck1}, \ref{line2:winshiftcheck2}, \ref{line2:winshiftcheck3}, \ref{line2:winshiftcheck4}). During the search, the \processor validates each \substructure against the given \winmax specific to the operation (Line \ref{line2:checkcount1}, \ref{line2:checkcount2}).

If no valid \substructure is found, the \winmax is incremented, \shifting the \window up. Both \putop and \getop \shift the \window up (Line \ref{line2:shiftwindow1}, \ref{line2:shiftwindow2}) by incrementing the \winmax count by \depth (Line \ref{line2:adddepth1}, \ref{line2:adddepth2}). Before \window \shifting or \hopping to another index, the \processor has to confirm that the \window has not \shifted from the locally known state (Line \ref{line2:winshiftcheck2}, \ref{line2:winshiftcheck4}). This guarantees that \processors always start their search within the most current \window for the respective operation. For every \window \shift during the search, the \processor restarts the search with the new \window (Line \ref{line2:changewindow1}, \ref{line2:changewindow2}).

If a valid \substructure is selected, the respective descriptor state (Line \ref{line2:readdesciptor1}, \ref{line2:readdesciptor2}) and index are returned  (Line \ref{line2:returndescriptor1}, \ref{line2:returndescriptor2}). The \processor can then proceed to try and operate on the given \substructure using the descriptor information. As a way to check if the data structure is empty, the \window search can only return an empty \substructure (Line \ref{line2:returnempty}), if during the search, all \substructures where empty ($NULL$ pointer) (Line \ref{line2:nullcheck}). Just like \wincoupled, we use the \window parameters, \width, and \depth, to tightly bound the \relaxation behaviour of the derived 2Dd data-structures as we will discuss in Section \ref{sec:modelcorrectness}.

\section{Optimisations}
\label{sec:optimization}
Our multi-structure 2D design framework can be tuned to optimise for; locality, contention and \substructure search overhead (\hopos), using the \width and \depth parameters. 

Locality occurs when a \processor accesses and operates on the same valid \substructure consecutively before another \processor operates on the given \substructure. This implies that, the data corresponding to the \accessp of the given \substructure will mostly be maintained in the local memory of the \processor operating on the \substructure for the given number of consecutive operations.

Contention occurs when more than one \processor try to access and operate on the same valid \substructure at the same time, in other words, \processors will contend for the given valid \substructure. In our case where synchronisation is implemented using CAS, only one of the \processors will succeed to operate on the \substructure, while the other contending \processors will have to retry. Each \processor maintains a contention variable (\codetxt{contention}) that is set to true when a \processor fails to operate on a valid \substructure due to a failed CAS, signalling the presence of contention.

A \textit{\hopo} occurs when a \processor tries to access an invalid \substructure. In this case, the \processor has to \hopo from the invalid \substructure to another \substructure while searching for a valid \substructure. \Hopos increase as the number of invalid \substructures increases. Note that this is different from when a \processor fails on a given \substructure due to contention and retries by searching for another valid \substructure.

\subsection{Locality} 
To exploit locality, the \processor starts its \substructure search from the previously known \substructure index on which it succeeded. This allows the \processor a chance to operate on the same \substructure multiple times locally, given that the \substructure is still valid. To further improve on locality, a \processor that fails on a selected \substructure due to contention (failed CAS), randomly searches for another \substructure (Line \ref{line1:contselect}, \ref{line2:contselect}) leaving the successful \processor to take over the selected \substructure locally. The \processor will operate on the same \substructure locally, for as long as it does not fail due to contention and the \substructure is valid. Using the \depth parameter, we can control the number of consecutive data structure operations that can be performed on a \substructure for a given \window period, consequently allowing us to tune locality. 

Working locally improves the caching behaviour of the system by reducing remote memory access, consequently improving throughput performance \cite{Hackenberg:2009:CCA:1669112.1669165,David:2013:EYA:2517349.2522714,Schweizer:2015:ECA:2923305.2923811}. Locality also reduces the number of \hopos as discussed in Section \ref{sec:complexity} and supported by Theorem \ref{th:per1} and \ref{th:per2}. The cost of \hopping includes reading new memory locations, which introduces higher memory latency and cache coherence costs. Reducing the number of \hopos improves throughput performance especially under a NUMA execution environment with high communication cost between NUMA nodes \cite{Hackenberg:2009:CCA:1669112.1669165,David:2013:EYA:2517349.2522714,Schweizer:2015:ECA:2923305.2923811}. 

\subsection{Contention}
A failed operation on a valid \substructure signals the possibility of contention. The \processor that fails on a valid \substructure, starts the \substructure search on a randomly selected index (Line \ref{line1:contselect}, \ref{line2:contselect}). This reduces possible contention that might arise if the failed \processors were to retry on the same \substructure. Furthermore, random selection avoids contention on individual \substructures by uniformly redistributing the failed \processors to all available \substructures.

At the beginning of the \substructure search (Line \ref{line1:serachentry}, \ref{line2:serachentry}, \ref{line2:winlocal2}), if the search start index has an invalid \substructure, the \processor tries a given number of uniformly distributed random \hopos (Line \ref{line1:randomhops}, \ref{line2:randomhops}) before switching to round robin \hopos (Line \ref{line1:robinhops}, \ref{line2:robinhops}) until a valid \substructure is found. The random \hopos evenly distributes the \processors to avoid contention and when a valid \substructure is not found in the initial random \hopos, \processors end up with randomly distributed starting points for the round robin search, further avoiding contention that can arise from \processors contending on the same \substructures during the search. In our implementation we use two random \hopos as the optimal number for a random search, based on the power of random two choices result \cite{mitzenmacher2001power}. However, this is a configurable parameter that can take any value. 

We further note that contention is inversely proportional to the \width. As a simple model, we split the latency of an operation into a contention ($op_{cont}$) and a contention-free ($op_{free}$) operation cost, estimated by the following formula: $op= op_{cont}/\width + op_{free}$. This means that we can increase the \width to further reduce contention when necessary.

\subsection{\Hopos}
The number of \hopos increases with an increase in \width due to the increase in possible numbers of invalid \substructures. This counteracts the performance benefits from contention reduction through increasing \width, necessitating a balance between contention and \hopos reduction. Based on our simple contention model above ($op= op_{cont}/\width + op_{free}$), the performance would increase as the contention factor vanishes with the increase of $\width$, but with an asymptote at $1/op_{free}$. This implies that beyond some point, one cannot really gain throughput by increasing the \width, however, throughput would get hurt due to the increased number of \hopos. At some point as \width increases, gains from the contention factor ($\lim_{\width \to \infty} op_{cont} \to 0$) are surpassed by the increasing cost of \hopos. This is something that we want to avoid in our effort of \relaxing semantics for gaining throughput. To avoid this, we switch to increasing \depth instead of \width, at the point of \width saturation. Increasing \depth reduces the number of \hopos. This is supported by our step complexity analysis presented in Section \ref{sec:complexity} Theorem \ref{th:per1} and \ref{th:per2}.

\wincoupled can further be optimised to minimise the number of search \hopos that arise from \shifting the \window. A \processor performing a \putop operation will \shift the \window up if all available \substructures have a maximum number of operation count. On the other hand, a \processor performing a \getop operation will \shift the \window down if all available \substructures have a minimum number of operation count given that the data structure is not empty. \Shifting the \window in intervals of \depth ($\fshift = \depth$), \shifts the \window up to an empty state where all \substructures will have the minimum operation count for the given \window, or \shifts the \window down to a full state where all \substructures will have the maximum operation count for the given \window. \Shifting the \window up invalidates all \substructures for \getop operations at that time (Line \ref{line1:checkcount2}), while \shifting the \window down invalidates all \substructures for \putop operations at that time (Line \ref{line1:checkcount1}). To reduce \substructure invalidation on \window \shifts and subsequently minimise the number of \hopos on both full and empty \window states, we configure the \window to \shift with respect to the operation rate. This implies that, if $\pusho rate > \popo rate$ then $\fshift>\bshift$ where $\fshift+\bshift=\depth$. With this configuration, the \window will \shift with a bigger interval for the operation with a higher rate, giving \processors performing that operation a higher chance to find a valid \substructure without invalidating \substructures for the other operation. This reduces \hopos and also enhances locality.

\section{Complexity Analysis}
\label{sec:complexity}
In this section, we analyse the correlation between \hopos and the \window parameters \width and \depth. We provide the expected step complexity of our framework by considering the sequential executions of our algorithm; where a single \processor executes a sequence of operations. The type of the operations in the sequence is determined independently with a fixed probability, where $p$ denotes the probability of a \putop operation. 
\subsection{\wincoupled}
\label{sec:DDccomplexity}
Recall \winmax regulates the maximum number of operations per \substructure within a given \window. Also recall that \width = \#\substructures. Let the number of operations of a \substructure $i$ be given by \sjitems. \putop operations are allowed to occur at \substructure $i$, if $\sjitems \in [\globalcounter - \depth, \globalcounter-1]$, whereas \getop operations are allowed to occur at \substructure $i$, if $\sjitems \in [(\globalcounter-\depth)+1, \globalcounter]$. This basically means that, at any time, the number of operations of a \substructure can only variate in the vicinity of \globalcounter, more precisely: $\forall i, (\globalcounter-\depth) \leq  \sjitems \leq \globalcounter$. 
We refer to this interval as the active region of the \substructure.

We introduce the random variable $\sjwitems = \sjitems - (\globalcounter-\depth)$ where $\sjwitems \in [0,\depth]$ that
provides the number of items in the active region of the \substructure $i$ and the random variable $\witems=\sum_{i=1}^{\width} \sjwitems$ that provides the total number of items in the \window.

As mentioned before the \depth dimension tries to exploit locality, thus, a \processor starts an operation with a query on the \substructure where the given \processor's last successful operation occurred. This means that a \processor will \hopo iff $\sjwitems=\depth$ for a \putop operation, whereas for a \getop operation, the \processor will \hopo iff $\sjwitems=0$. Therefore, the number of \substructures, whose active regions are full, is given by $\floor{(\witems/\depth)}$ at a given time, because the \processor does not leave a \substructure until its active region gets either full or empty. If the \processor has to \hopo a \substructure, then a new \substructure is selected uniformly at random from the remaining set of \substructures. If none of the \substructures fulfills the condition (implies that $\witems = 0$ for a \getop, or, $\witems=\depth \times \width$ for a \putop), then the \window \shifts based on a given \shift parameter (\ie for a \putop operation $\globalcounter=\globalcounter+\fshift$ and for a \getop operation $\globalcounter=\globalcounter-\bshift$, where $1 \leq \bshift,\fshift \leq \depth$). One can observe that the value of \witems before an operation defines the expected number of \hopos and the \shift of the \window.

To compute the expected step complexity of an operation that occurs at a random time, we model the random variation process around the \globalcounter with a Markov chain, where the sequence of \putop and \getop operations lead to the state transitions. We denote the \window \shift parameter as $\shift$, where $\shift = \fshift = \bshift = \depth$. As a remark, we consider the performance of the \substructures mostly when they are non-empty, since \getop($NULL$) and \putop would have no \hopos in this case. The Markov chain is strongly related to \witems. It is composed of $\wsize+1$ states $\sta{0}, \sta{1}, \dots, \sta{\wsize}$, where $\wsize=\depth \times \width$.
For all $i \in \inte{0}{\wsize}$, the operation is in state \sta{i} iff $\witems=i$.
For all $(i,j) \in \intedef{\wsize+1}^2$, $\pro{\sta{i} \rightarrow \sta{j}}$ denotes the state transition probability, that is given by the following function, where $p$ denotes the probability of a \putop:

$$
\begin{cases}
\pro{\sta{i} \rightarrow \sta{i+1}}=p, & \text{if } 0<i<\wsize \\
\pro{\sta{i} \rightarrow \sta{i-1}}=1-p, & \text{if } 0<i<\wsize \\
\pro{\sta{i} \rightarrow \sta{\wsize-(\shift \times \width + 1)}}=p, & \text{if } i=\wsize\\
\pro{\sta{i} \rightarrow \sta{(\shift \times \width - 1)}}=1-p, & \text{if } i=0\\
\pro{\sta{i} \rightarrow \sta{j}}=0, & \text{otherwise}
\end{cases}
$$

The stationary distribution (denoted by the vector $\pi = (\pi_i)_{i\in \inte{0}{\wsize}}$) exists for the Markov chain
, since the chain is 
irreducible and positive recurrent (note that state space is finite). The left eigenvector of the transition matrix with eigenvalue $1$ provides the unique stationary distribution.

\begin{lemma}
\label{lemma:per1}
For the Markov chain that is initialized with $p=1/2$ and $\shift$, where $l=\shift \times \width -1$, the stationary distribution is given by the vector $\pi^l=(\pi^l_0 \pi^l_1 .. \pi^l_\wsize)$, assuming $ \wsize-l >= l$ (for $l>K-l$, one can obtain the vector from the symmetry $\pi^l = \pi^{\wsize-l}$): 
(i) $\pi^l_i=\frac{i+1}{(l+1)(\wsize+1-l)}, \text{ if } i < l$; (ii) $\pi^l_i=\frac{l+1}{(l+1)(\wsize+1-l)}, \text{ if } l \leq i \leq \wsize-l$; 
(iii) $\pi^l_i=\frac{\wsize-i+1}{(l+1)(\wsize+1-l)}, \text{ if } i > \wsize-l$.

\end{lemma}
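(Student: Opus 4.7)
The plan is to verify that $\pi^l$ is the unique stationary distribution by (i) establishing existence and uniqueness and (ii) checking directly that $\pi^l$ satisfies the global balance equations $\pi^l = \pi^l P$ and normalises to one. For (i), the state space $\inte{0}{\wsize}$ is finite and the chain is irreducible: from any interior state the nearest-neighbour moves eventually reach $\sta{0}$ and $\sta{\wsize}$, while the two long jumps of length $l$ reconnect those boundaries back into the interior. A finite irreducible chain is positive recurrent with a unique stationary law, so any probability vector satisfying the balance equations must coincide with it.

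For (ii), I would partition the states by their position relative to the breakpoints $l$ and $\wsize - l$ and dispatch the balance equation $\pi^l_i = \sum_j \pi^l_j \, \pro{\sta{j}\to\sta{i}}$ case by case. In the open interior of each regime (strictly inside the left ramp, the flat plateau, or the right ramp) the equation collapses to the symmetric random-walk identity $\pi^l_i = \tfrac{1}{2}\pi^l_{i-1} + \tfrac{1}{2}\pi^l_{i+1}$, which is immediate because the formula is arithmetic (respectively constant) on each piece. The delicate cases are the four seams: the reflecting ends $i=0$ and $i=\wsize$ and the jump landing sites $i=l$ and $i=\wsize-l$. At $i=0$ the equation reduces to $\pi^l_0 = \tfrac{1}{2}\pi^l_1$, giving the numerator identity $1 = \tfrac{1}{2}\cdot 2$ after clearing the common denominator $(l+1)(\wsize+1-l)$. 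At $i=l$ the extra teleportation term from $\sta{0}$ appears, so the equation $\pi^l_l = \tfrac{1}{2}(\pi^l_{l-1} + \pi^l_{l+1} + \pi^l_0)$ reduces to $2(l+1) = l + (l+1) + 1$, which holds. The cases at $i=\wsize$ and $i=\wsize-l$ are mirror images under $i \mapsto \wsize-i$.

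For normalisation I would sum the closed form piecewise: the two ramps contribute $2\sum_{j=1}^{l} j = l(l+1)$ in the numerator, and the flat plateau of length $\wsize - 2l + 1$ contributes $(\wsize - 2l + 1)(l+1)$, totalling $(l+1)(\wsize + 1 - l)$, which exactly cancels the denominator. The companion claim $\pi^l = \pi^{\wsize-l}$ for $l > \wsize-l$ follows from the invariance of the chain at $p=1/2$ under the relabeling $i \mapsto \wsize-i$, which swaps the two teleportations. The main obstacle is bookkeeping rather than any single calculation: the landing sites $l$ and $\wsize-l$ sit precisely at the transitions between regimes, so the error-prone step is making sure that each neighbour $i\pm 1$ and each teleportation destination is classified into the correct branch of the piecewise definition. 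The hypothesis $\wsize - l \geq l$ is exactly what keeps the three cases disjoint, and the borderline $\wsize = 2l$ (where both long jumps land at the same seam) needs a dedicated check that folds both extra terms into the balance at $i=l$ before the numerator identity reads $2(l+1) = l + l + 1 + 1$.
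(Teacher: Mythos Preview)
Your proposal is correct and follows essentially the same route as the paper: establish existence and uniqueness of the stationary law via finiteness and irreducibility, write down the global balance equations (nearest-neighbour interior, reflecting endpoints, and the two teleportation seams at $l$ and $\wsize-l$, merged when $\wsize=2l$), and check the closed form against them together with normalisation. The only difference is presentational: the paper records the linear system and states that it is solved by setting $\pi^l_0$ and propagating, whereas you plug the piecewise formula directly into each balance equation and the normalisation sum; both amount to the same verification.
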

\begin{proof}
We have stated that the stationary distribution exist since the chain is aperiodic and irreducible for all $p$ and $\shift$.\\ Let $(M_{i,j})_{(i,j) \in \intedef{\wsize}^2}$ denote the transition matrix for $p=1/2$ and $\shift$. The stationary distribution vector $\pi^l$ fulfills, $\pi^l M = \pi^l$, that provides the following system of linear equations:
(i) $2\pi^l_0 = \pi^l_1$; (ii) $2\pi^l_\wsize = \pi^l_{(\wsize-1)}$; (iii) $2\pi^l_i = \pi^l_{i-1} + \pi^l_{i+1}$; (iv) $2\pi^l_l = \pi^l_{i-1} + \pi^l_{i+1} + \pi^l_1$; (v) $2\pi^l_{\wsize-l} = \pi^l_{i-1} + \pi^l_{i+1} + \pi^l_\wsize$. 

In case, $l=\wsize-l$, then (iv) and (v) are replaced with $2\pi^l_{(l=\wsize-l)} = \pi^l_{i-1} + \pi^l_{i+1} + \pi^l_1 + \pi^l_\wsize$.

Based on a symmetry argument, one can observe that, for all $l$, $\pi^l_i=\pi^l_{\wsize-i}$ the system can be solved in linear time ($O(\wsize)$) by assigning any positive (for irreducible chain $\pi^l_i > 0$) value to $\pi^l_0$. The stationary distribution is unique thus for any $\pi^l_0$, $\pi^l$ spans the solution space. We know that $\sum_{i=0}^{K} \pi^l_i=1$, starting from $\pi^l_0=1$, we obtain and normalize each item by the sum.
\end{proof}

An operation starts with the search of an available \substructure. This search contains at least a single query at the \substructure where the \processor's last successful operation occurred, the rest incur a \hopo step. In addition, the operation might include the \shift of the window, as an extra step, denoted by \slideop. We denote the number of extra steps with $Extra=\hopo + \slideop$. With the linearity of expectation, we obtain $\expe{Extra} = \expe{\hopo} + \expe{\slideop}$. 
Relying on the law of total expectation, we obtain:\\ (i) $\expe{\hopo} = \sum_{i=0}^{\wsize} \sum_{\treiberop \in \{pop, push\}} \expe{\hopo \vert \sta{i}, \treiberop} \pro{\sta{i}, \treiberop}$;\\ 
(ii) $\expe{\slideop} = \sum_{i=0}^{\wsize} \sum_{\treiberop \in \{pop, push\}} \expe{\slideop \vert \sta{i}, \treiberop} \pro{\sta{i}, \treiberop}$;\\
where $\pro{\sta{i}, \treiberop}$ denotes the probability of an operation to occur in state $\sta{i}$. We analyze the 
algorithm for the setting where $\shift=\depth$ and $p=1/2$. We do this because the bound, that we manage to find in this case, is tighter, and gives a better idea of the influence of the $2D$ parameters to the expected performance. 
For this case the stationary distribution is given by Lemma~\ref{lemma:per1}.

\begin{theorem}
\label{th:per1}
For a \DDc that is initialized with parameters \depth, \width, $\shift=\depth$ and $p=1/2$, $\expe{Extra}=O(\frac{\ln \width}{\depth})$.
\end{theorem}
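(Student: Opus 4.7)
The plan is to decompose $\expe{Extra} = \expe{\hopo} + \expe{\slideop}$ by linearity of expectation and then apply the law of total expectation against the stationary distribution of Lemma~\ref{lemma:per1}, instantiated at $l = \shift \cdot \width - 1 = K - 1$. Since $l > K - l$ whenever $K > 2$, the symmetry $\pi^l = \pi^{K - l}$ given in the lemma lets us work with $\pi^1$, which puts mass $\pi_0 = \pi_K = 1/(2K)$ on the endpoints and $\pi_i = 1/K$ on every interior state. The shift term is immediate because a window shift occurs only from $\sta{K}$ under a \putop or from $\sta{0}$ under a \getop, so $\expe{\slideop} = p\pi_K + (1-p)\pi_0 = 1/(2K) = O(1/(\width \cdot \depth))$, already well inside the target bound.

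For $\expe{\hopo}$ I would exploit a structural invariant specific to the sequential process: because only the sub-structure currently held by the processor is modified, at any moment at most one sub-structure lies strictly between $0$ and $\depth$, while the remaining $\width - 1$ are pinned to the active-region boundaries $\{0, \depth\}$. Hence in state $\sta{i}$ there are exactly $j = \floor{i/\depth}$ full and $\width - 1 - j$ empty sub-structures, and a \putop (resp.\ \getop) triggers a hop only if its current sub-structure is full (resp.\ empty). Once a hop is triggered, a uniformly random search over the remaining $\width - 1$ positions locates a valid sub-structure in expected $O(j/(\width - j + 1))$ additional queries for \putop and $O((\width - 1 - j)/(j + 2))$ for \getop, via a standard negative-hypergeometric estimate.

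The crucial quantitative ingredient is an amortized $O(1/\depth)$ bound on the probability of actually needing a hop. Between two forced hops the processor stays at one sub-structure and its count performs a symmetric random walk on $\{0, 1, \ldots, \depth\}$ with absorbing boundaries; the entry count just after a hop is $1$ (for a \putop-induced hop into a previously empty sub-structure) or $\depth - 1$ (for a \getop-induced hop into a previously full one), so by the gambler's-ruin formula the expected visit length is $\Theta(\depth)$. Thus $\pro{\text{hop needed} \mid \sta{i}, op} = O(1/\depth)$ uniformly in $i$, giving
\[
\expe{\hopo \mid \sta{i}, op} = O\!\left(\frac{1}{\depth} \cdot \frac{f}{\width - f + 1}\right),
\]
where $f$ is the number of op-invalid sub-structures.

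Summing against the stationary distribution, I would group the $\depth$ consecutive indices sharing a common $j = \floor{i/\depth}$ (total mass $\approx 1/\width$ per $j$) and use the \putop/\getop symmetry to collapse the two contributions into
\[
\expe{\hopo} = O\!\left(\frac{1}{\depth \cdot \width} \sum_{j=1}^{\width - 1} \frac{j}{\width - j + 1}\right) = O\!\left(\frac{\ln \width}{\depth}\right),
\]
since the inner harmonic-like sum is $\Theta(\width \ln \width)$. Combining with $\expe{\slideop}$ yields the claimed $\expe{Extra} = O(\ln \width / \depth)$. The main obstacle is making the $\Theta(\depth)$ visit-length step fully rigorous inside the Markov chain of Lemma~\ref{lemma:per1}; the natural route is to refine the chain on the pair $(j, c)$, where $c$ is the current sub-structure's count, and verify that the stationary marginal at $c \in \{0, \depth\}$ is $\Theta(1/\depth)$.
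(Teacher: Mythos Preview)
Your decomposition, your use of Lemma~\ref{lemma:per1}, and your treatment of $\expe{\slideop}$ all match the paper. The divergence is in how you extract the $1/\depth$ factor for $\expe{\hopo}$.

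You try to obtain it via a gambler's-ruin argument on the current sub-structure's count, claiming $\pro{\text{hop needed}\mid \sta{i},op}=O(1/\depth)$ uniformly in $i$. As you yourself flag at the end, this per-state claim is false: conditioned on $\sta{i}$ with $i\equiv 0\pmod{\depth}$ and on $op=\putop$, the hop probability is $p=1/2$, not $O(1/\depth)$. What gambler's ruin actually gives is an \emph{ergodic} statement (the fraction of time the current count sits at a boundary is $\Theta(1/\depth)$), and turning that into a bound on $\expe{\hopo}$ indeed forces you to refine the chain to track the pair $(j,c)$. That is doable but laborious.

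The paper bypasses all of this with one structural observation: because only the current sub-structure is modified and the others are pinned to $\{0,\depth\}$, a hop can occur \emph{only} in states $\sta{i}$ with $i\equiv 0\pmod{\depth}$ (and moreover only if the preceding operation had the same type, contributing an extra factor $p$). Hence $\expe{\hopo\mid \sta{i},\putop}=0$ for all but the $\width+1$ multiples of $\depth$, and at those states it equals $p\cdot \width/(\width-f)$ with $f=i/\depth-1$. Summing against the stationary bound $\pi_i<2/(K+1)$ from Lemma~\ref{lemma:per1} gives
\[
\expe{\hopo\mid \putop}<\frac{2p}{K+1}\sum_{f=0}^{\width-1}\frac{\width}{\width-f}<\frac{\ln\width+\gamma}{\depth},
\]
and symmetry handles \getop. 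The $1/\depth$ factor thus falls out of the \emph{sparsity} of contributing states in the $\sta{i}$-chain itself, with no need for a refined chain or a visit-length argument. Your negative-hypergeometric estimate for the number of hops once a hop is triggered is essentially the same as the paper's $\width/(\width-f)$ computation; it is only the mechanism for producing the $1/\depth$ that you have made harder than necessary.
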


\begin{proof}

Firstly, we consider the expected number of extra steps for a \putop operation. Given that there are \witems items, a \putop attempt would generate an extra step if it attempts to add and item to a \substructure that has $\sjwitems = \depth$ items. 
Recall that the \processor sticks to a \substructure until it is not possible to conduct an operation on it. This implies that the extra steps can be taken only in the states $\sta{i}$ such that $i (mod\text{ }\depth)=0$, because the \processor does not leave a \substructure before $\sjwitems = 0$ or $\sjwitems = \depth$. In addition, a \putop (\getop) can only experience an extra step if the previous operation was also a \putop (\getop).

Given that we are in $\sta{i}$ such that $i (mod\text{ }\depth)=0$, then the first requirement is to have a \putop as the previous operation. If this is true, then the \putop operation has to \hopo to another \substructure, which is selected from the remaining set of \substructures uniformly at random. At this point, there are $f=\frac{i}{\depth}-1$ full \substructures in the remaining set of \substructures. If a full \substructure is selected from this set, this leads to another \hopo and again a \substructure is selected uniformly at random from the remaining set of \substructures.

Consider a full \substructure (one of the $f$), this \substructure would be \hopped if it is queried before querying the \substructures that are empty. There are $\width-f-1$ empty \substructures, thus a \hopo in this \substructure would occur with probability $1/(\width-f)$. There are $f$ such \substructures. With the linearity of expectation, the expected number of \hopos is given by: $f/(\width-f)+1=\width/(\width-f)$. Which leads to $\expe{\hopo \vert \sta{i}, \putop} = p \times \width/(\width-f)$ if $i (mod\text{ }\depth)=0$ or $\expe{\hopo \vert \sta{i}, \putop}=0$ otherwise.

From Lemma~\ref{lemma:per1}, $\pi_i < 2/(\wsize+1)$ we obtain:\\ 
\begin{align*}
\expe{\hopo \vert \putop}
= \sum_{i=0}^{\wsize} \pi_i \expe{\hopo \vert \sta{i}, \putop}
< (\sum_{f=0}^{\width-1} \frac{\width}{\width-f}) \frac{2p}{\wsize+1}
< (\ln (\width-1) + \gamma) \frac{\width}{\wsize+1}
< (\ln \width + \gamma) \frac{1}{\depth}
\end{align*}

The bounds for $\expe{\hopo \vert \putop}$ also hold for $\expe{\hopo  \vert \getop}$. Given that there are $\wsize-i$ (system is in state $\sta{i}$) empty \substructures then there are $e=\floor{\frac{\wsize-i}{\depth}}-1$ \substructures whose \window regions are empty, minus the \substructure that the \processor last succeeded on. Using the same arguments that are illustrated above (replace $f$ with $e$ and p=1-p), we obtain the same bound.

\Window only \shifts at $\sta{\wsize}$ if a \putop operation happens and at $\sta{0}$ if a \getop operation happens. Hence: $\expe{\slideop} < \frac{2}{\wsize+1} p + \frac{2}{\wsize+1} (1-p)$.
Finally, using $\expe{Extra}=\expe{\hopo}+\expe{\slideop}$ we obtain the theorem.
\end{proof}
\subsection{\windecoupled}
\label{sec:DDdcomplexity}
Now, we apply the same reasoning for the \wincoupled to analyse \windecoupled behaviour. There are two \windows, let  \globalcounterEnq and \globalcounterDeq represent \winmax for \putop and \getop respectively. The two counters; \globalcounterEnq and \globalcounterDeq, increase monotonically.
\putop and \getop have the same complexity since they apply 
the same \window strategy with the only difference that one consumes and the other produces items. Therefore, we analyse only \putop.
Let $\sjwitems$ denote the number of items in the active region of 
the \substructure $i$ for \putop. 
We consider $\shift =\depth$ since the \globalcounterEnq is monotonically increasing. 

We again model the process with a Markov chain where the states are strongly related to $\witems=\sum_{i=1}^{\width} \sjwitems$.
It is composed of $\wsize$ states $\sta{1}, \dots, \sta{\wsize}$, 
where $\wsize=\depth \times \width$.
For all $(i,j) \in \intedef{\wsize}^2$, $\pro{\sta{i} \rightarrow \sta{j}}$ denotes the state transition probability, that is given by the following function: 
$\pro{\sta{i} \rightarrow \sta{i+1}}=1, \text{if } 1 \leq i \leq \wsize-1$ and $\pro{\sta{i} \rightarrow \sta{1}}=1, \text{if } i = \wsize$. 
The stationary distribution is given by the vector $\pi^l=(\pi^l_1 \pi^l_2 .. \pi^l_\wsize)$, where $\pi_i=1/\wsize$.

\begin{theorem}
\label{th:per2}
For a \DDd that is initialised with parameters \depth, \width, $\shift=\depth$, $\expe{Extra}=O(\frac{\ln \width}{\depth})$.
\end{theorem}
\begin{proof}
We consider the expected number of extra steps for an \putop that 
would generate an extra step if it attempts on a \substructure that has $\sjwitems = \depth$ items. 
Recall that the \processor sticks to a \substructure until it is not possible to conduct an operation on it, thus extra steps will 
be taken only in the states $\sta{i}$ such that $i (mod\text{ }\depth)=0$ and before the first hop, 
there are $f=\frac{i}{\depth}-1$ full \substructures in the remaining set of \substructures. Plugging, $\pi_i = 1/(\wsize)$ into the 
reasoning that is provided in Theorem~\ref{th:per1}, we obtain the theorem.
\end{proof}
\section{Deriving 2D Data structures}
\label{sec:algorithms}
In this Section, we show how our framework can be used to derive \koo data structures. Using \wincoupled we derive a \sDDc and a \cDDc, whereas by using \windecoupled, we derive a \sDDd, a \qDDd, a \cDDd and a \dDDd as shown in Table \ref{table:2Dalgorithms}. The base algorithms include but not limited to; Treiber's stack \cite{Treiber1986systems}, MS-queue \cite{Michael:1996:SFP:248052.248106} and Maged Deque \cite{DBLP:conf/europar/Michael03} for Stack, FIFO Queue and Deque respectively. In order to be precise and avoid reciting the base algorithms' information, we focus our discussion on the algorithmic modifications made to fit our design framework.  

\begin{table}
  \centering
  \begin{tabular}{l l l }
    \hline
    Algorithm & \wincoupled & \windecoupled \\ \hline
    \sDD & \sDDc & \sDDd \\
    \qDD & --- & \qDDd  \\
    \cDD & \cDDc & \cDDd \\
    \dDD & --- & \dDDd \\
  \end{tabular}
  \caption{$2D$ derived data structure algorithms}
  \label{table:2Dalgorithms}
\end{table}
\subsection{\sDD}
\begin{algorithm*}
\SetAlgoVlined 
\SetAlgoSkip{}
\SetNlSty{}{}{} 

\caption{\sDDc}
\label{algo:sDDc}

\begin{multicols}{2}[]
\struct{item}
{
	value;
	next\;
}
\struct{Descriptor Des}
{
    item;
    count;
    version\;              \label{line3:descriptor}
}
\Fn{Push(NewItem)}
{
	contention $\gets$ False\;
	\While{True}
	{
		\{Des,index\} $\gets$ Window(push,index,contention)\;        \label{line3:wincall1}
		NewItem.next $\gets$ Des.item\;
		NDes.item $\gets$ NewItem;
		NDes.count $\gets$ Des.count + 1;
		NDes.version $\gets$ Des.version + 1\;                    \label{line3:prepdescriptor1}
		\eIf{CAS(Array[index],Des,NDes)}
		{                                               \label{line3:cae1}
			return 1\;
		}
		{
		    contention $\gets$ True\;                          \label{line3:cont1}
		}
	}
}
\Fn{Pop()}
{
	contention $\gets$ False\;
	\While{True}
    {		
		\{Des,index\} $\gets$ Window(pop,index,contention)\;      \label{line3:wincall2}
		\uIf{Des.item != NULL}                      
		{                                           \label{line3:checknull}
			NDes.item $\gets$ Des.item.next;
			NDes.count $\gets$ Des.count - 1;
			NDes.version $\gets$ Des.version + 1\;                \label{line3:prepdescriptor2}
			\eIf{CAS(Array[index],Des,NDes)}
			{                                           \label{line3:cae2}
				return Des.item\;			
			}
			{
			    contention $\gets$ True\;                       \label{line3:cont2}               
			}
		}
		{
			return Null\;                                \label{line3:returnnull}
		}
    }
}
\end{multicols}
\end{algorithm*}
A stack is characterised by two operations: \pusho that adds an item and \popo that removes an item from the stack. Our derived \sDD algorithms are composed of multiple lock-free \substaks. Each \substak is implemented using a linked-list following the Treiber's stack design \cite{Treiber1986systems}, modified only to fit the \window framework design as discussed below. 

In Algorithm \ref{algo:sDDc} we present the algorithmic implementation of \sDDc as an example of how to use \windecoupled to derive \koo data structures. The stack head is modified to a descriptor containing the top item pointer, operation count, and descriptor version (Line \ref{line3:descriptor}). Note that, the descriptor is updated in single atomic step using a wide \caeo (Line \ref{line3:cae1}, \ref{line3:cae2}), the same way as in the Treiber's stack. \sDDc is accessed through a single array of descriptors for both \pusho and \popo operations. 

To perform an operation, a given \processor obtains a \substak by performing a \window search (Line \ref{line3:wincall1}, \ref{line3:wincall2}) as discussed earlier in Section \ref{sec:2Dtechnique}. The \processor then prepares a new descriptor based on the current descriptor at the given index (Line \ref{line3:prepdescriptor1}, \ref{line3:prepdescriptor2}). Using a \caeo, the \processor tries to atomically swap the current descriptor with the new one (Line \ref{line3:cae1}, \ref{line3:cae2}). If the \caeo fails, the \processor sets the contention indicator to true (Line \ref{line3:cont1}, \ref{line3:cont2}) and restart the \window search. The contention indicator signals the presence of contention prompting the \window search to starts from an index that is selected uniformly at random (Line \ref{line1:contselect}) as discussed earlier in Section \ref{sec:2Dtechnique}.

A successful \pusho increments whereas a \pusho decrements the operation count by one (Line \ref{line3:prepdescriptor1}, \ref{line3:prepdescriptor2}). Also, the topmost item pointer is updated. At this point, a \pusho adds an item whereas a \popo returns an item for a non-empty or $NULL$ for an empty stack (Line \ref{line3:returnnull}). An empty \substak is represented by a $NULL$ item pointer within the \descriptor (Line \ref{line3:checknull}). Recall that a \processor performs a special data structure emptiness check, by scanning all the available \substaks and returning $NULL$ only if all the \substaks where found to be empty.

\windecoupled is used to derive the \sDDd. This follows the same procedure as discussed above, with the difference being that; \pusho and \popo operations increment different operation counters on success. Also, we declare a single array of descriptors through which \processors performing a \pusho operation or \popo can access any of the available \substaks. Each descriptor maintains two different counters; one for \pusho and the other for \popo. However, the \pusho and \popo operation \windows are independent because they follow different operation counts. 
Note that, we do not need version number because the counters increase monotonically. 

\subsection{\qDD}
\label{sec:qDD}
\begin{algorithm*}
\SetAlgoVlined 
\SetAlgoSkip{}
\SetNlSty{}{}{} 

\caption{\qDDd Algorithm}
\label{algo:qDDd}

\begin{multicols}{2}[]
\struct{item}
{
	value;
	next\;
}
\struct{Descriptor Des}
{
    item;
    count\;              \label{line4:descriptor}
}
\Fn{Enqueue(NewItem)}
{
	contention $\gets$ False\;
	\While{True}
	{
		\{putDes,index\} $\gets$ Window(put,index,contention)\;     \label{line4:wincall1}
		Tail $\gets$ putDes.item\;
		NDes.item $\gets$ Item\;
		NDes.putcount $\gets$ putDes.putcount + 1\;                 \label{line4:eincrement}
		\eIf{Tail.next  =  NULL}
		{
		    \eIf{CAS(Tail.next, NULL, Item)}
		    {                                               \label{line4:add1}
		        break\;
		    }
		    {
			    contention $\gets$ True\;                          \label{line4:cont1}
			}
		}
		{
    		NDes.item $\gets$ Tail.next\;
    		\If{!CAS(putArray[index],putDes,NDes)}
    		{                                           \label{line4:help1}
			    contention $\gets$ True\;                       \label{line4:cont2}
			}
		}
	}
	\If{!CAS(putArray[index],putDes,NDes)}
	{                                                   \label{line4:cae1}
		contention $\gets$ True\;                               \label{line4:cont3}
	}
	return True\;
}
\vfill\null
\columnbreak
\Fn{Dequeue()}
{
	contention $\gets$ False\;
	\While{True}
    {		
		\{getDes,index\} $\gets$ Window(get,index,contention)\;      \label{line4:wincall2}
		Head $\gets$ getDes.item\;
		putDes $\gets$ putArray[index];
		Tail $\gets$ putDes.item\;
		\eIf{Head  =  Tail}
		{
		    \eIf{Head.next  =  NULL}
		    {
		        return NULL\;                               \label{line4:returnnull}
		    }
		    {
		        NDes.item $\gets$ Tail.next\;
    		    NDes.putcount $\gets$ putDes.putcount + 1\;    \label{line4:helpincrement}
    		    \If{!CAS(putArray[index],putDes,NDes)}
    		    {                                       \label{line4:help2}
    		        contention $\gets$ True\;              \label{line4:cont5}
    		    }
		    }
		}
		{
		    NDes.item $\gets$  Head.next\;
		    NDes.count $\gets$ getDes.getcount + 1\;            \label{line4:dincrement}
			\eIf{CAS(getArray[index],getDes,NDes)}
			{                                            \label{line4:cae2}
			    return Head.next.val\;                  \label{line4:returnv}
			}
			{
			    contention $\gets$ True\;               \label{line4:cont6}
			}
		}
    }
}
\vfill\null
\end{multicols}
\end{algorithm*}
FIFO Queues are characterised by two operations, \enqop which adds an item to the queue and \deqop which removes an item. The two operations access the queue from different points; head for \deqop and tail for \enqop. We use \windecoupled to derive a \qDD, due to its ability to maintain the independent operation counts and decriptor's access arrays. \qDD is composed of multiple lock-free \subqueues. Each \subqueue is implemented using a linked list following the Michael Scott FIFO queue (\msqueue) design \cite{Michael:1996:SFP:248052.248106}, modified only to fit the \window processes as shown in Algorithm \ref{algo:qDDd}. 

The queue head and tail are modified into an independent descriptor (Line \ref{line4:descriptor}) for each. The head descriptor contains the head item pointer and the \deqop operation counter, whereas the tail descriptor contains the tail item pointer and the \enqop operation counter. The \qDD descriptor is also updated in one atomic step using a wide \caeo (Line \ref{line4:cae1}, \ref{line4:cae2}), same as in \msqueue updates. \qDD composed of two arrays of descriptors; one holds the head descriptors (\codetxt{getDes}), while the other holds the tail descriptors (\codetxt{putDes}). Each \subqueue descriptor pair is uniquely identified by the same index value in both arrays.   

To perform an operation, a given \processor obtains a \subqueue by performing a \window search on a given descriptor array (Line \ref{line4:wincall1}, \ref{line4:wincall2}) as discussed earlier in Section \ref{sec:2Dtechnique}. An \enqop completes in two steps: First, the \processor tries to add the new item to the queue list (Line \ref{line4:add1}), if successful, the \processor then tries to update the tail descriptor with the new state (Line \ref{line4:cae1}). If a \processor encounters an incomplete \enqop, the \processor can help complete the pending \enqop by updating the tail descriptor accordingly (Line \ref{line4:help1}, \ref{line4:help2}). \caeo failure during an \enqop or \deqop operation signals the presence of contention (Line \ref{line4:cont1}, \ref{line4:cont2}, \ref{line4:cont3}, \ref{line4:cont5}, \ref{line4:cont6}) on the given \subqueue. When contention is detected, the \processor starts the next \window search on an index that is selected uniformly at random (Line \ref{line2:contselect}) as discussed earlier in Section \ref{sec:2Dtechnique}.

Both \enqop and \deqop increment their respective \subqueue descriptor operation count by one on success (Line \ref{line4:eincrement}, \ref{line4:dincrement}). Note that, \deqop increments the \enqop operation count if it helps complete a pending \enqop (Line \ref{line4:helpincrement}). On a successful operation, an \enqop adds an item whereas a \deqop returns a value for a non empty \subqueue (Line \ref{line4:returnv}) or $NULL$ (Line \ref{line4:returnnull}) for an empty Queue. Similar to \sDD, the \deqop only returns $NULL$ if the search within a given \window cannot find a non-empty \subqueue after scanning all the available \subqueues. 
\subsection{2D-Deque}
\begin{algorithm*}
\SetAlgoVlined 
\SetAlgoSkip{}
\SetNlSty{}{}{} 

\caption{2Dd-Deque Algorithm}
\label{algo:dDDd}

\begin{multicols}{2}[]
\struct{Descriptor Des}
{
    *right;
    *left;
    status\;
    GetRightCount;
    GetLeftCount;
    PutRightCount;
    PutLeftCount\;              \label{line5:descriptor}
}
\struct{item}
{
    *right;
    *left;
    value\;
}
\Fn{PushLeft(NewItem)}
{
	contention $\gets$ False\;
	\While{True} 
	{	
		\{Des,index\} $\gets$ Window(putL,index,contention)\;   \label{line5:wincall1}			
		\uIf{Des.left = NULL} 
		{
			NDes $\gets$ Des;
			NDes.PutLeftCount+=1\;
			NDes.left $\gets$ NewItem;
			NDes.right $\gets$ NewItem\;
			\If{CAS(Array[index],Des,NDes)}
			{	                                        		 \label{line5:pemptycae1}
				break;
			}
			contention $\gets$ True\;                           \label{line5:cont1}
		}
		\uElseIf{Des.status = STABLE}
		{
			NewItem.right $\gets$ Des.left\;
			NDes $\gets$ Des;
			NDes.PutLeftCount+=1\;
			NDes.left $\gets$ NewItem;
			NDes.status $\gets$ LEFTPUSH\;
			\If{CAS(Array[index],Des,NDes)} 
			{			                                        \label{line5:pushstp11}                                          
				StabiliseLeft(NDes,deque);                       \label{line5:pushstabilize1}
				break\;
			}
			contention $\gets$ True;                            \label{line5:cont2}
		}
		\Else
		{
		    stabilise(Des, index);                       \label{line5:stabilize1}
		}
	}
	return 1\;
}

\Fn{PushRight(NewItem)}
{
	contention $\gets$ False\;
	\While{True} 
	{	
		\{Des,index\} $\gets$ Window(putR,index,contention)\;	        \label{line5:wincall2}		
		\uIf{Des.right = NULL} 
		{
			NDes $\gets$ Des;
			NDes.PutRightCount+=1\;
			NDes.left $\gets$ NewItem;
			NDes.right $\gets$ NewItem\;
			\If{CAS(Array[index],Des,NDes)}
			{		                                                \label{line5:pemptycae2}	
				break;
			}
			contention $\gets$ True\;                                \label{line5:cont3}
		}
		\uElseIf{Des.status = STABLE}
		{
			NewItem.left $\gets$ Des.right\;
			NDes $\gets$ Des;
			NDes.PutRightCount+=1\;
			NDes.right $\gets$ NewItem;
			NDes.status $\gets$ RIGHTPUSH\;
			\If{CAS(Array[index],Des,NDes)} 
			{			                                            \label{line5:pushstp12}
				StabiliseLeft(NDes,deque);                           \label{line5:pushstabilize2}
				break\;
			}
			contention $\gets$ True;                                \label{line5:cont4}
		}
		\Else
		{
		    stabilise(Des, index);        \label{line5:stabilize2}
		}
	}
	return 1\;
}

\columnbreak

\Fn{PopLeft()} 
{
	contention $\gets$ empty $\gets$ False\;
	\While{True} 
	{
		\{Des,index\} $\gets$ Window(getL,index,contention)\;       \label{line5:wincall3}
		\If{empty = true} 
		{
			return NULL;                                            \label{line5:returnnull1}
		}
		\uIf{Des.right = Des.left} 
		{
			\If{Des.left  !=  NULL}
		    {                                                       \label{line5:empty1}
    			NDes $\gets$ Des;
    			NDes.GetLeftCount+=1\;
    			NDes.left $\gets$ NULL;
    			NDes.right $\gets$ NULL\;
    			\If{CAS(Array[index],Des,NDes)}
    			{                                                       \label{line5:popl1}
    				item $\gets$ Des.left;			
    				break;
    			}
			}
			contention $\gets$ True\;                           \label{line5:cont5}
		}
		\uElseIf{Des.status = STABLE} 
		{
			prev $\gets$ Des.left.right;
			NDes $\gets$ Des\;
			NDes.GetLeftcount+=1;
			NDes.left $\gets$ prev\;
			\If{CAS(Array[index],Des,NDes)}
			{                                                       \label{line5:popl2}
				item $\gets$ Des.left;				
				break;
			}
			contention $\gets$ True\;                       \label{line5:cont6}
		}
		\Else
		{
		    stabilise(Des, index);                           \label{line5:stabilize3}
		}
	}
	return item;
}
\Fn{PopRight()} 
{
	contention $\gets$ empty $\gets$ False\;
	\While{True} 
	{
		\{Des,index\} $\gets$ Window(getR,index,contention)\;       \label{line5:wincall4}
		\If{empty = true}
		{
			return NULL;                                            \label{line5:returnnull2}
		}
		\uIf{Des.right = Des.left} 
		{
			\If{Des.right  !=  NULL}
		    {                                                       \label{line5:empty2}
    			NDes $\gets$ Des;
    			NDes.GetRightCount+=1\;
    			NDes.left $\gets$ NULL;
    			NDes.right $\gets$ NULL\;
    			\If{CAS(Array[index],Des,NDes)}
    			{                                                       \label{line5:popr1}
    				item $\gets$ Des.right;			
    				break;
    			}
    		}
			contention $\gets$ True\;                               \label{line5:cont7}
		}
		\uElseIf{Des.status = STABLE} 
		{
			prev $\gets$ Des.right.left;
			NDes $\gets$ Des\;
			NDes.GetRightcount+=1;
			NDes.right $\gets$ prev\;
			\If{CAS(Array[index],Des,NDes)}
			{                                                       \label{line5:popr2}
				item $\gets$ Des.right;				
				break;
			}
			contention $\gets$ True\;                           \label{line5:cont8}
		}
		\Else
		{
		    stabilise(Des, index);                                  \label{line5:stabilize4}
		}
	}
	return item;
}

\Fn{stabilise(Des, index)}
{
	\eIf{Des.status = RIGHTPUSH}
	{
	    StabiliseRight(Des,index);
	}
	{
	    StabiliseLeft(Des,index);
	}
}

\end{multicols}
\end{algorithm*}

\begin{algorithm*}
\SetAlgoVlined 
\SetAlgoSkip{}
\SetNlSty{}{}{} 

\caption{2Dd-Deque Stabiliser Algorithm}
\label{algo:stabilizedeq}

\begin{multicols}{2}[]
\Fn{StabiliseLeft(Des, index)} 
{
	\If{Array[index]!=Des}{ return; }
	prev $\gets$ Des.left.right\;
	\If{Array[index]!=Des}{ return;	}
	prevnext $\gets$ prev.left\;
	\If{prevnext != Des.left}
	{
		\If{Array[index] != Des}{ return;}
		\If{!CAS(prev.left,prevnext,Des.left)}
		{
			return\;
		}
	}
	
	NDes $\gets$ Des;
	NDes.status $\gets$ STABLE\;
	CAS(Array[index],Des,NDes);
}

\columnbreak

\Fn{StabiliseRight(Des, index)} 
{
	\If{Array[index] != Des}{ return; }
	prev $\gets$ Des.right.left\;
	\If{Array[index] != Des}{ return;	}
	prevnext $\gets$ prev.right\;
	\If{prevnext != Des.right}
	{
		\If{Array[index] != Des}{ return;}
		\If{!CAS(prev.right,prevnext,Des.right)}
		{
			return\;
		}
	}
	
	NDes $\gets$ Des;
	NDes.status $\gets$ STABLE\;
	CAS(Array[index],Des,NDes)\;
}

\end{multicols}
\end{algorithm*}
Deques are characterised by four operations, \pushleftop which adds an item to the left of the deque, \pushrightop which adds an item to the right of the deque, \popleftop which removes an item from the left of the deque, if any and \poprightop which removes an item from the right of the deque, if any. We use \windecoupled to derive a \dDD, due to its ability have multiple \windows. Each of the four \dDD operations is assigned an independent operation \window. \dDD is composed of multiple lock-free \subdeques. Each \subdeque is implemented using a doubly-linked list following the Maged (\dmaged) design \cite{DBLP:conf/europar/Michael03}, modified only to fit the framework \window design  as shown in Algorithm \ref{algo:dDDd}. 

The \subdeque is modified, replacing the anchor with a descriptor that can also be updated in one atomic step using a \caeo instruction, same as in \dmaged updates. The descriptor, holds two pointers to the leftmost and rightmost items in the \subdeque, if any, and a status tag. Four operation counters are added to the descriptor, one for each deque operation (Line \ref{line5:descriptor}). \dDD is composed of a single array of descriptors through which the \processors performing either of the four \dDD operations can access any of the available \subdeques. We use a single array because our base algorithm \dmaged uses a single access point anchor for the four deque operations. For each operation on the given side of a \subdeque, a respective operation counter is incremented by one. 

To perform an operation, a \processor obtains a \subdeque by performing a \window search respective to the operation being performed (Line \ref{line5:wincall1}, \ref{line5:wincall2}, \ref{line5:wincall3}, \ref{line5:wincall4}) as discussed earlier in Section \ref{sec:2Dtechnique}. A \pusho operation on either side is completed in one atomic step for an empty \subdeque by replacing the \subdeque descriptor with a new descriptor pointing to the new item (Line \ref{line5:pemptycae1}, \ref{line5:pemptycae2}). If the \subdeque is stable and contains one or more items, the \pusho operations complete in two steps. First step is to swing the given \subdeque descriptor pointer to the new item and to indicate unstable \subdeque in the status tag, atomically (Line \ref{line5:pushstp11}, \ref{line5:pushstp12}). After this step, the given \subdeque is unstable. The next step of a \pusho operation is to stabilise the \subdeque (Line \ref{line5:pushstabilize1}, \ref{line5:pushstabilize2}). A \popo operation on either side of a non empty \subdeque completes in one atomic step (Line \ref{line5:popl1}, \ref{line5:popl2}, \ref{line5:popr1}, \ref{line5:popr2}). For each operation, if the \subdeque is unstable, it must be stabilised (Line \ref{line5:stabilize1}, \ref{line5:stabilize2}, \ref{line5:stabilize3}, \ref{line5:stabilize4}), before attempting the given operation on the \subdeque.

\caeo failure during any of the four operations signals the presence of contention (Line \ref{line5:cont1}, \ref{line5:cont2}, \ref{line5:cont3}, \ref{line5:cont5}, \ref{line5:cont6}, \ref{line5:cont7}, \ref{line5:cont8}) on the given \subdeque. The same applies when a \deqop operation on either sides of the \subdeque encounters an empty \subdeque (Line \ref{line5:empty1}, Line \ref{line5:empty2}). When contention is detected, the \processor starts the next \window search on an index that is selected uniformly at random (Line \ref{line2:contselect}) as discussed earlier in Section \ref{sec:2Dtechnique}.
\subsection{\cDD}
\cDD is characterized by two operations; \increment (\putop) which increases the counter and \decrement (\getop) which decreases the counter. It is composed of multiple \subcounters whose local count (\localcounter) can only be greater than or equal to zero. Both \cDDc and \cDDd follow the \sDDc and \sDDd implementation details. Following the same strategy, a given successful  operation, increments or decrements a given \subcounter's \localcounter, then calculates the estimated global count value as $Count = \localcounter \times \width$. $Count$ is then returned by the \processor.
\section{Correctness}
\label{sec:modelcorrectness}
In this section, we prove the correctness of the data structures we derived in Section \ref{sec:algorithms}, including their \relaxation bounds and lock freedom. All our derived 2D data structures are linearizable with respect to \koo semantics for the respective data structure. 
\subsection{\sDDc}
\sDDc is linearizable with respect to the sequential semantics of \koo stack \cite{henzinger2013quantitative}. \sDDc \pusho and \popo linearization points are similar to those of the original Treiber's Stack \cite{Treiber1986systems}. As shown in Algorithm \ref{algo:sDDc},  \popo linearizes either by returning \nil (Line \ref{line3:returnnull}) or with a successful \caeo (Line \ref{line3:cae2}). \pusho linearizes with a successful \caeo (Line \ref{line3:cae1}).

Relaxation can be applied method-wise and it is applied only to \popo operations, that is, a \popo pops one of the topmost $k$ items. Firstly, we require some notation. The \window defines the number of operations allowed to proceed on any given \substak. The \window is \shifted by the parameter $\shift,\  1 \leq \shift \leq \depth$ and $\width=\#\substaks$. For simplicity, let  $\shift = \shiftup = \shiftdown$.
A \window $i$ ( $W_i$) has an upper bound ( $W_i^{max}$) and a lower bound ($W_i^{min}$), where $W_i^{max}=i \times \shift$ and $W_i^{min}=(i \times \shift)-\depth$, respectively. For simplicity, let \globalcounter represent the current global upper bound.  A \window is active iff $W_i^{max}=\globalcounter$. The number of items of the \substak $j$ is denoted by $N_j$, $1 \leq j \leq \width$.  To recall, the top pointer, the version number and $N_j$ are embedded into the \descriptor of \substak $j$ and all can be modified atomically with a wide \caeo instruction.

\begin{lemma}
\label{lemma:cor1}
Given that $\globalcounter=\shift \times i$, it is impossible to observe a state($S$) such that 
$N_j> W_{i+1}^{max}$ (or $N_j < W_{i-1}^{min}$).
\end{lemma}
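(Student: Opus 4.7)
The plan is to establish the stated bound as a global invariant and prove it by induction on the execution history, stepping through the sequence of successful atomic state-changing actions: (a) a window \caeo that updates \globalcounter, and (b) a \substak descriptor \caeo that updates some $N_j$. The invariant I would maintain is exactly the lemma's bound: in every reachable state with $\globalcounter = i\shift$, every \substak count satisfies $W_{i-1}^{min} \leq N_j \leq W_{i+1}^{max}$. In the initial state every $N_j = 0$ and $\globalcounter = 0$, so the bound holds trivially.

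For a successful window \caeo, I would rely on the preconditions enforced by the validity tests on lines~28--30 of Algorithm~\ref{algo:DDc}: an up-shift from $W_i$ to $W_{i+1}$ is only attempted after every \substak has failed the \putop-validity check, i.e., $N_j \geq W_i^{max}$ for all $j$ at the shift instant. Combined with the inductive hypothesis this yields $N_j \in [W_i^{max}, W_{i+1}^{max}]$, which sits inside the required interval $[W_i^{min}, W_{i+2}^{max}]$ for the new window index; the down-shift case is symmetric. For a successful \pusho \caeo at \substak $j$ whose validating read saw the window at some $W_l$ with $l \leq i+1$, the validation $N_j < W_l^{max}$ directly gives $N_j + 1 \leq W_l^{max} \leq W_{i+1}^{max}$. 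Since \pusho only increases $N_j$, the lower bound is trivially preserved, and \popo is handled symmetrically for the lower bound.

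The main obstacle, and the place where the design choices really come into play, is the case of a successful \pusho (dually \popo) whose validating read saw a stale window $W_l$ with $l > i+1$. Here I would exploit the \emph{version} field of the descriptor defined in line~3 of Algorithm~\ref{algo:DDc}: the wide \caeo at line~8 of Algorithm~\ref{algo:sDDc} succeeds only when the full descriptor, \emph{including the version}, matches the earlier read, and every successful update increments that version. This pins $N_j$ at the read value $n$ from the instant of the validating read until the CAS commits, ruling out any ABA-style reasoning gap. Since \globalcounter migrated from $W_l^{max}$ down to $W_i^{max}$ during that interval, it necessarily crossed through $W_{i+1}$, and the down-shift from $W_{i+1}$ to $W_i$ is only attempted when every \substak satisfies the \getop-invalidity condition $N_{j'} \leq W_{i+1}^{min}$. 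Applied to our $j$, this yields $n \leq W_{i+1}^{min}$, so the updated value satisfies $n + 1 \leq W_{i+1}^{min} + 1 \leq W_{i+1}^{max}$, where the last step uses the assumption $1 \leq \shift < \depth$ stated in the lemma's setup. A mirror argument closes the lower bound for \popo, completing the induction.
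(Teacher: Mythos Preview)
Your inductive decomposition is a reasonable alternative to the paper's direct timing case analysis, but the window-shift step has a genuine gap. You assert that an up-shift from $W_i$ to $W_{i+1}$ is attempted only when ``$N_j \geq W_i^{max}$ for all $j$ at the shift instant'', and you use this to place every $N_j$ in $[W_i^{max}, W_{i+1}^{max}]$ after the shift. But the algorithm does not give you that: the scan of the \substaks (lines~27--30) and the \caeo on \codetxt{Win} (line~25) are separate atomic actions, and between the moment the shifting thread observed $N_j \geq W_i^{max}$ and the moment its window \caeo commits, concurrent \popo operations---including ones that validated against a stale window---can drive $N_j$ strictly below $W_i^{max}$. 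Your inductive hypothesis at window $W_i$ only supplies $N_j \geq W_{i-1}^{min}$, which is strictly weaker than the $N_j \geq W_i^{min}$ the new window index demands, so the step does not close as written. The down-shift case is symmetric and equally unproven.

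Closing this gap requires exactly the version-based pinning reasoning you reserve for your ``main obstacle'' case, applied now to the window-shift step: any \popo whose descriptor \caeo lands between the scan of \substak $j$ and the window \caeo either read its descriptor after the scan point (hence saw a count $\geq W_i^{max}$, so leaves $N_j \geq W_i^{max}-1 \geq W_i^{min}$) or must have validated against the current window $W_i$ (hence also leaves $N_j \geq W_i^{min}$); moreover, the \emph{window's} version field---not just the descriptor's---is what guarantees the entire scan interval sits inside the period where \globalcounter equalled $i \times \shift$. The paper sidesteps this mutual dependence by not inducting at all: it fixes the hypothetical violating state, takes the last \pusho that produced $N_j > W_{i+1}^{max}$ and the thread that last set \globalcounter to $\shift \times i$, and eliminates all four possible orderings of their observation and linearization times directly, using the descriptor and window version fields to force a failed \caeo in each case.
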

\begin{proof}
We show that this is impossible by considering the interleaving of operations. 
Without loss of generality, assume \processor $1$ ($P_1$) has set $\globalcounter=\shift \times i$ at time $t_{1}\prime$. To do this, $P_1$ should have observed either $\globalcounter=\shift \times (i-1)$ and then $N_j=W_{i-1}^{max}$ or $\globalcounter=\shift \times (i+1)$ and then $N_j=W_{i+1}^{min}$. Let this observation of \globalcounter happen at time $t_1$. Consider the last successful push operation at \substak $j$ before the state $S$ is observed for the first time 
(we do not consider \popo operations as they can only decrease $N_j$ to a value that is less than $W_{i+1}^{max}$, this case will be covered by the first item below). 
Assume thread $0$ ($P_0$) sets $N_j$ to $N_j> W_{i+1}^{max}$ in this push operation. $P_0$ should observe $N_j \geq W_{i+1}^{max}$ and $\globalcounter > W_{i+1}^{max}$. Let $j$ be selected at time $t_0$. And the linearization of the operation happens at $t_{0}\prime>t_0$.

\begin{itemize}
\item If $t_{0}\prime < t_{1}$,  the concerned state($S$) can 
not be observed since \globalcounter cannot be changed (to $\shift \times i$)  
after $N_j > W_{i+1}^{max}$ is observed.
\item Else if $t_{1}\prime < t_{0}$, the concerned state($S$) cannot be observed since the push operation cannot proceed 
after observing \globalcounter with such $N_j$.
\item Else if $t_1 > t_0$, then $P_0$ cannot linearize because, this implies $N_j$ has been modified (the difference between the value of $Global$ that is 
observed by $P_0$ and then by $P_1$ implies this) since $P_0$ had read the 
descriptor, at least the version numbers would have changed since then.
\item Else if $t_1 < t_0$, then this implies \globalcounter has been modified, since it was read by $P_1$, thus updating \globalcounter would fail, at least based on the version number.
\end{itemize}
\end{proof}

\begin{lemma}
\label{lemma:cor2}
At all times, there exist an $i$ such that $\forall j, 1 \leq j \leq \width$: 
$W_i^{min} \leq N_j \leq  W_{i+1}^{max}$. 
\end{lemma}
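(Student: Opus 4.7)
The plan is to exhibit such an $i$ as one of two candidates, $i=i_c-1$ or $i=i_c$, where $i_c$ is the integer for which the current value of \globalcounter equals $i_c\shift$. By Lemma~\ref{lemma:cor1}, every $N_j$ already lies in $[W_{i_c-1}^{min},W_{i_c+1}^{max}]=[(i_c-1)\shift-\depth,(i_c+1)\shift]$, an interval of width $2\shift+\depth$. The interval associated with either candidate $i$ has width only $\depth+\shift$, so the task reduces to ruling out the simultaneous occurrence of some $N_{j_1}$ in the top sub-band $(i_c\shift,(i_c+1)\shift]$ together with some $N_{j_2}$ in the bottom sub-band $[(i_c-1)\shift-\depth,i_c\shift-\depth)$. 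If the top sub-band is empty, then $i=i_c-1$ works because $[W_{i_c-1}^{min},W_{i_c}^{max}]$ covers every $N_j$; symmetrically, if the bottom sub-band is empty, then $i=i_c$ works.

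To establish this dichotomy I would argue by contradiction, assuming at the current time $t$ that both $N_{j_1}>i_c\shift$ and $N_{j_2}<i_c\shift-\depth$ hold. Following the style of Lemma~\ref{lemma:cor1}, I would trace back to the latest \pusho on $j_1$ whose linearization raised $N_{j_1}$ strictly above $i_c\shift$: its validity check must have observed a descriptor with count $\geq i_c\shift$ at some earlier time $t_1\leq t_1'$, and its accompanying read of \globalcounter must have returned a value $\geq(i_c+1)\shift$, because the observed count was required to be strictly less than the observed \winmax. Symmetrically, the latest \popo on $j_2$ that drove $N_{j_2}$ below $i_c\shift-\depth$ must have, at some earlier time $t_2\leq t_2'$, observed a \globalcounter value of at most $(i_c-1)\shift$. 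A case analysis on the ordering of $t_1,t_1',t_2,t_2'$ relative to $t$ and to each other, mirroring the four sub-cases of Lemma~\ref{lemma:cor1}, should then yield a contradiction: in every ordering, either the wide-\caso on the descriptor of $j_1$ or of $j_2$ must fail because the embedded version number changed, or the emptiness/fullness precondition of an intervening \window \shift is violated by the surviving extremal value of $N_{j_1}$ or $N_{j_2}$.

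The hard part is executing this last case analysis cleanly. Between the observations at $t_1$ and $t_2$ and the current time $t$, \globalcounter must traverse a range of at least $2\shift$ (from $\geq(i_c+1)\shift$ down to $\leq(i_c-1)\shift$ or vice versa, and then back to $i_c\shift$), so several \shift operations interleave with the \caso calls on the two descriptors, and each \shift brings its own precondition that must be contradicted. The key technical device, as in Lemma~\ref{lemma:cor1}, is the \descriptor version number (for \wDDc) or the strict monotonicity of the per-\substak counters (for \wDDd); these prevent the relevant wide-\caso from succeeding once an unobserved \shift has taken place, so each candidate interleaving collapses into either a failed \caso or a failed shift precondition, closing the contradiction and thereby establishing the dichotomy and the lemma.
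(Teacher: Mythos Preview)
Your approach is essentially the paper's: assume the claim fails, locate an extremal push and an extremal pop (your $j_1,j_2$; the paper's $z,y$), and derive a contradiction from a case analysis on the interleaving of their selection times and linearization points. Your preliminary step of invoking Lemma~\ref{lemma:cor1} to pin down the two candidate indices $i_c-1$ and $i_c$ is a cleaner setup than the paper's, which jumps straight to the existence of the offending pair without justifying that leap. One simplification you should take from the paper: your third paragraph worries about tracking \globalcounter through several \shifts, but the paper sidesteps this entirely in the non-overlapping cases ($t_0'<t_1$ or $t_1'<t_0$) by a single appeal to Lemma~\ref{lemma:cor1}---once the push on $j_1$ has linearized with $N_{j_1}$ large, Lemma~\ref{lemma:cor1} forbids \globalcounter from ever dropping low enough for the pop's validity check on $j_2$ to pass, so no trajectory analysis is needed; the remaining overlapping cases are then dispatched by the version-number argument exactly as you sketch.
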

\begin{proof}
Informally, the lemma states that the size (number of operations) of a \substak spans to at most two consecutive accessible  \windows. Assume that the statement is not true, then there should exist a pair of \substaks ($y$ and $z$) at some point in time such that $\exists i, N_y < W_{i}^{min}$ and $ N_z > W_{i+1}^{max}$. Consider the last \pusho at \substak $z$ and last \popo at 
\substak $y$ that linearize before or at the time $t$.

Assume thread $P_0$ (\pusho) sets $N_z$ and thread $P_1$ (\popo) sets $N_y$. To do this, $P_0$ should observe $N_z \geq W_{i+1}^{max}$ and $\globalcounter > W_{i+1}^{max}$, let \substak $z$ be selected at $t_0$. And, the linearization of the \pusho operation occurs at $t_{0}\prime>t_0$. Similarly, for $P_1$ \popo operation, let \substak $y$ be selected at $t_1$, $P_1$ should have observed $\globalcounter \leq W_{i}^{min}$. And, let the \popo operation linearize at time $t_{1}\prime>t_1$. Now, we consider the possible interleavings.

\begin{itemize}

\item If $t_{0}\prime < t_{1}$ (or the symmetric $t_{1}\prime < t_{0}$ for which we do not repeat the arguments), then for $P_1$ to proceed and pop an item from \substak $y$, it is required that $\globalcounter \leq W_{i}^{min}$. Based on Lemma~\ref{lemma:cor1}, this is impossible when $N_z > W_i^{max}$. 
\item Else if $t_1 > t_0$, then $P_0$ cannot linearize, because this implies that $N_z$ has been modified (the difference between the value of \globalcounter that is observed by $P_0$ and then by $P_1$ implies this) since $P_0$ has read the \descriptor. At least, the version number would have changed since then. 
\item Else if $t_0 > t_1$, the argument above holds for $P_1$ too, so $P_1$ should fail to linearize.
\end{itemize}
Such $N_z$ and $N_y$ pair can not co-exist at any time.
\end{proof}

\begin{theorem}

\sDDc is linearizable with respect to \koo stack semantics, where 
$k=(2\shift+\depth)(\width-1)$.

\end{theorem}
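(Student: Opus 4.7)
The plan is to establish linearizability with $k$-out-of-order stack semantics in three phases. First, I fix linearization points: a successful \pusho linearizes at its descriptor \caso (line 9 of Algorithm~\ref{algo:sDDc}); a non-\nil \popo linearizes at its descriptor \caso (line 22); and a \popo returning \nil linearizes at the last read of its emptiness traversal at which it observed the final empty \substak, using the fact that the \window search only commits to \nil when every \substak was witnessed empty within the same (unshifted) \window. This gives a total order on completed operations that I must then show is a valid \koo stack history.

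For the relaxation bound, consider a \popo linearized at time $t_p$ on \substak $j$ that returns item $y$ pushed at $t_y$ on the same \substak, and set $a+1 = N_j(t_y^+)$. Because $y$ is at the top of $j$ at $t_p^-$ and each \substak is a strict Treiber LIFO, every \pusho to $j$ inside $(t_y,t_p)$ must be matched by a \popo before $t_p$; hence $N_j(t)\geq a+1$ throughout $[t_y,t_p]$ and no item of $j$ pushed after $t_y$ remains in the stack at $t_p$. For each $j'\neq j$, by LIFO on $j'$ the items still present in $j'$ at $t_p$ that were pushed after $t_y$ form the top portion of $j'$ and their number equals $N_{j'}(t_p)-m_{j'}$ where $m_{j'}=\min_{t\in[t_y,t_p]} N_{j'}(t)$.

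I will use Lemmas~\ref{lemma:cor1} and~\ref{lemma:cor2} to control these two quantities. Applied at $t_p^-$, Lemma~\ref{lemma:cor2} gives $N_{j'}(t_p) - N_j(t_p^-) \leq W_{i+1}^{max} - W_i^{min}$, which together with the tighter upper envelope from Lemma~\ref{lemma:cor1} yields $N_{j'}(t_p)\leq a+1+\shift+\depth$. Applied pointwise for $t\in[t_y,t_p]$ and combined with $N_j(t)\geq a+1$, Lemma~\ref{lemma:cor2} forces $N_{j'}(t)\geq N_j(t)-(\shift+\depth)\geq a+1-\shift-\depth$, so $m_{j'}\geq a+1-\shift-\depth$ (truncated at $0$). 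Each $j'\neq j$ therefore contributes at most $2\shift+\depth$ items pushed after $y$, and summing over the $\width-1$ other \substaks yields the target $k=(2\shift+\depth)(\width-1)$. Finally, I need to extend the argument to \nil-returning pops by showing that at the chosen linearization point the total number of items in the stack is at most $k$, for which Lemma~\ref{lemma:cor2} again bounds every $N_{j'}$ by $\shift+\depth$ around the witnessed empty \substak.

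The main obstacle is the \nil-returning \popo: the \window emptiness check is not an instantaneous snapshot, so I must justify placing the linearization point at an instant where the invariants of Lemmas~\ref{lemma:cor1} and~\ref{lemma:cor2} transfer to a consistent global empty-region witness. A secondary difficulty is to pin down the exact constant $2\shift+\depth$ per \substak (rather than the looser $2\shift+2\depth$) by choosing, at $t_p^-$ and along $[t_y,t_p]$, whichever of Lemma~\ref{lemma:cor1} (with its extra \winlimit-anchored asymmetry) or Lemma~\ref{lemma:cor2} gives the tighter one-sided bound, while handling the subtlety noted at the end of Section~\ref{sec:2Dtechnique} that a \substak may be updated against a \window copy that has since \shifted.
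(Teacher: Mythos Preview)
Your overall architecture matches the paper's: fix linearization points at the descriptor \caso{}es, then for a popped item $e$ bound the number of items pushed after $t_e^{push}$ and not popped before $t_e^{pop}$ by summing per-\substak contributions. The paper does not treat the \nil-returning \popo separately, so your explicit handling of that case is additional work on your side.

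Where you diverge is in the per-\substak bound. You anchor to the scalar $a+1=N_j$ and apply Lemma~\ref{lemma:cor2} as a \emph{pairwise} gap $|N_{j'}-N_j|\le \shift+\depth$, once at $t_p^-$ (upper side) and once pointwise on $[t_y,t_p]$ (lower side, via $N_j(t)\ge a+1$). That yields $N_{j'}(t_p)-m_{j'}\le 2(\shift+\depth)$, as you note, and your plan to recover the missing $\depth$ by ``choosing whichever of Lemma~\ref{lemma:cor1} or Lemma~\ref{lemma:cor2} is tighter'' is not fleshed out; with only the pairwise gap $\shift+\depth$ on each side there is no asymmetry to exploit, so this route does not obviously close to $2\shift+\depth$.

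The paper instead anchors to a \emph{fixed} window index $i$ chosen so that $W_i^{min}\le N_j\le W_i^{max}$, and then argues directly from Lemma~\ref{lemma:cor2} that at $t_e^{push}$ every $N_\ell\ge W_i^{min}-\shift$ and at $t_e^{pop}$ every $N_\ell\le W_i^{max}+\shift$. The difference is $W_i^{max}-W_i^{min}+2\shift=\depth+2\shift$, giving the stated $k$ after summing over $\width-1$ other \substaks. The saving over your route comes from measuring both endpoints against the \emph{same} window $i$ rather than against $N_j$ twice; the $\depth$ term is paid once (as the width of $W_i$) instead of once per side. If you want the exact constant, replace your pairwise anchoring by this fixed-window anchoring; the two applications of Lemma~\ref{lemma:cor2} then become one-sided (lower bound at push time, upper bound at pop time) relative to $W_i$, and you avoid double-counting $\depth$.
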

\begin{proof}
Consider the \pusho ($t_e^{push}$) and \popo ($t_e^{pop}$) linearization points, that insert and remove an item $e$ for a given \substak $j$ respectively,  where, $t_e^{pop}>t_e^{push}$. Now, we bound the maximum number of items, that are pushed after $t_e^{push}$ and are not popped before $t_e^{pop}$, to obtain $k$. Let item $e$ be the ${N_j}^{th}$ item from the bottom of the \substak. Consider a \window $i$ such that: $W_{i}^{min} \leq N_j \leq W_i^{max}$.

Lemma~\ref{lemma:cor2} states that the sizes of the \substaks should reside in a bounded region. 
Relying on Lemma~\ref{lemma:cor2}, we can deduce that at time $t_e^{push}$, the following holds: 
$\forall i: N_j \geq W_i^{min} - \shift$. Similarly, we can deduce that 
at time $t_e^{pop}$, the following holds: $\forall i: N_j \leq W_i^{max} + \shift$. 
Therefore, the maximum number of items, that are
pushed to \substak $j$ after $t_e^{push}$ and are not popped before $t_e^{pop}$ 
is at most $W_i^{max} + \shift - (W_i^{min} - \shift) = \depth + 2 \shift$. 
We know that this number is zero for \substak $j$ (the \substak that $e$ is inserted) and 
we have $\width-1$ other \substaks. So, there can be at most $(2 \shift+\depth)(\width-1)$
items that are pushed after $t_e^{push}$ and are not popped before $t_e^{pop}$. 
\end{proof}

\subsection{\sDDd}
\label{sec:sDDdcorrectness}
\begin{theorem}
\label{th:sDDd}
\sDDd is linearizable with respect to \koo stack semantics, where $k=(3\depth)(\width-1)$.
\end{theorem}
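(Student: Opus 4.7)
The plan is to mirror the three-step template that was used for \sDDc (Lemma~\ref{lemma:cor1}, Lemma~\ref{lemma:cor2}, Theorem), but to apply it to the decoupled put/get counters of \sDDd. Fix notation: let $N^p_j$ and $N^g_j$ denote the put- and get-counters of \substak $j$, let $W^{p,max}$ and $W^{g,max}$ be the current upper bounds of the two independent \windows, and note that for \windecoupled the shift is fixed at $\shift=\depth$ in each direction. Linearization points are inherited from the underlying Treiber substack: a \pusho linearizes at its successful wide-\caeo (Algorithm~\ref{algo:sDDc}, line~9), a \popo linearizes at its successful wide-\caeo (line~22) or at the \nil return when the \window search observes all \substaks empty (line~27). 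Because \localputcounter and \localgetcounter are monotonically increasing in \sDDd, the version-number bookkeeping used in the \sDDc proof can be discarded: any pair of reads that would be used to detect an intervening modification can be replaced by a comparison of the monotone counter values.

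First I would prove the Lemma~\ref{lemma:cor1} analog for each counter separately: if $W^{p,max}=i\,\depth$, then no \substak can currently have $N^p_j>(i+1)\,\depth$ nor $N^p_j<(i-1)\,\depth$, and symmetrically for $N^g_j$ against $W^{g,max}$. The interleaving argument is the one from Lemma~\ref{lemma:cor1}, but simplified: a pushing \processor $P_0$ that would violate the upper bound must have read some $W^{p,max}$ that was later advanced by $P_1$, and the relevant \caeo instructions (either on the \substak descriptor or on the \window) are forced to fail by the monotonic counter comparison. From this I would derive the Lemma~\ref{lemma:cor2} analog: at every moment there is an $i$ with $W_i^{p,min}\le N^p_j\le W_{i+1}^{p,max}$ for all $j$, so the put-counters across \substaks fit in a band of width $\depth+\shift=2\depth$; the same holds independently for the get-counters.

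For the relaxation bound I would fix an item $e$ inserted into \substak $j$ at $t_e^{push}$ and removed at $t_e^{pop}$, and count items pushed in $[t_e^{push},t_e^{pop}]$ that are still in the abstract stack at $t_e^{pop}$. By the \lifo property inside each \substak and the fact that $e$ is at the top of $j$ at $t_e^{pop}$, the contribution of \substak $j$ itself is zero. For any other \substak $j'$, the contribution equals $\max(0,\Delta N^p_{j'}-\Delta N^g_{j'})$ over the interval. Combining the two Lemma~\ref{lemma:cor2} analogs at both endpoints with the slack argument of the \sDDc proof (extending each bound by $\shift$ because the relevant \window can shift between the \substak selection and the linearization), I would bound $\Delta N^p_{j'}-\Delta N^g_{j'}$ by $\depth+2\shift=3\depth$. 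Summing over the $\width-1$ other \substaks yields $k\le 3\depth(\width-1)$, and together with the linearization points above this gives linearizability with respect to \koo stack semantics.

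The hard part will be the last step: in the \windecoupled setting the put and get \windows advance independently, so the raw size $N^p_{j'}-N^g_{j'}$ of a \substak is not bounded by $3\depth$ at a single instant (it can grow unboundedly when puts outpace gets). The argument therefore cannot just pin the size pointwise; it must pin \emph{changes} to the size over $[t_e^{push},t_e^{pop}]$ by combining the two independent Lemma~\ref{lemma:cor2} bands with the monotonicity of the counters and with the fact that $e$'s own \pusho and \popo witness specific inequalities on the corresponding \windows at the endpoints. Getting the arithmetic to line up to exactly $3\depth$, rather than the looser $4\depth$ that a naive addition of the two $2\depth$ bands would give, is the main technical subtlety I expect to spend time on.
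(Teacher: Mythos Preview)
Your plan differs from the paper's proof in a substantive way, and the place where you flag uncertainty is exactly where the two routes diverge.

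The paper does \emph{not} rebuild analogues of Lemma~\ref{lemma:cor1} and Lemma~\ref{lemma:cor2} for the two decoupled counters. It argues directly on the \emph{increments} over the interval $[t_e^{push},t_e^{pop}]$: writing $push_j$ and $pop_j$ for the number of pushes and pops at \substak $j$ in that interval, it uses (i) $push_i=pop_i$ for the \substak $i$ containing $e$, together with the per-window bounds
\[
push_j \le push_i + \depth + \bigl(\depth - (push_i \bmod \depth)\bigr),\qquad
pop_j \ge pop_i - \depth - (pop_i \bmod \depth).
\]
Subtracting and using $push_i=pop_i$ makes the two $\bmod\,\depth$ terms cancel, which is precisely what turns the naive $4\depth$ into $3\depth$.

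Your proposed mechanism for getting $3\depth$---carrying over the \sDDc formula $\depth+2\shift$ with $\shift=\depth$---does not transfer. In \sDDc that formula bounded the spread of a \emph{single} size counter $N_j$ living in one \window that can move either way; here you have two independent monotone counters in two independent \windows, and the ``extend each bound by one $\shift$'' slack argument applied separately to the put band and the get band yields only $2\depth+2\depth=4\depth$, exactly the obstacle you anticipated. The missing ingredient in your outline is a device that correlates the put side and the get side through the anchor \substak $i$; in the paper this is the equality $push_i=pop_i$ combined with tracking the position of $push_i$ (hence also $pop_i$) modulo $\depth$. If you want to keep your Lemma~\ref{lemma:cor2}-style decomposition, you will need to refine the endpoint bounds so that the slack on the put side and the slack on the get side are expressed relative to the \emph{same} residue class, after which the cancellation goes through.
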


\begin{proof}

Consider the linearization points of \pusho and \popo operations that respectively insert and remove the item $e$ into and from a \substak (\substak $i$). 
Let $t_e^{push}$ and $t_e^{pop}$ denote 
these points, respectively. Now, we bound the maximum number of items, that are
pushed after $t_e^{push}$ and are not popped before $t_e^{pop}$, to obtain $k$.
We denote the number items that are pushed to (popped from) \substak $j$ 
in the time interval $[t_e^{push}, t_e^{pop}]$, with ${push}_{j}$ 
(${pop}_{j}$).

Regarding the interval $[t_e^{push}, t_e^{pop}]$, we have: (i) ${push}_{i} = {pop}_{i}$, since the number of items that are pushed into and popped from \substructure $i$ should be equal; (ii) 
$\forall j \in [1, \width], push_{j} \leq push_{i} + \depth + (\depth - (push_{i} \mod \depth))$; and (iii)
$\forall j \in [1, \width], pop_{j} \geq pop_{i} - \depth - (pop_{i} \mod \depth)$. 

Therefore, for any \substak, the number of items that are pushed after $t_e^{push}$ and are not popped before $t_e^{pop}$ ($push_{j}-pop_{j}$) is at most: $push_{i} + \depth + (\depth - (push_{i} \mod \depth)) - (pop_{i} - \depth - (pop_{i} \mod \depth)) = 3\depth$. Summing over all $j \neq i$, we obtain the theorem.
\end{proof}
\subsection{\qDDd}
\label{sec:qDDdcorrectness}

\begin{theorem}
\label{th:qDDd}
\qDDd is linearizable with respect to \koo stack semantics, where 
$k=(\depth)(\width-1)$.
\end{theorem}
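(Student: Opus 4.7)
The plan is to mirror the structure of the \sDDd proof (Theorem~\ref{th:sDDd}) but adapt it to FIFO semantics and to \windecoupled's two independent windows. Consider the linearization points $t_e^{enq}$ and $t_e^{deq}$ of the \enqop and \deqop that respectively insert and remove an item $e$ on sub-queue $i$, and let $c$ denote the value of sub-queue $i$'s put counter immediately after $t_e^{enq}$. Because each sub-queue preserves strict FIFO order internally, $c$ coincides with the value of sub-queue $i$'s get counter immediately after $t_e^{deq}$. The goal is to bound the number of items enqueued after $t_e^{enq}$ that are dequeued before $t_e^{deq}$, since this is what translates directly into the \koo bound $k$.

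The first step is to handle sub-queue $i$ itself. Any item enqueued at sub-queue $i$ after $e$ is assigned a put counter value strictly greater than $c$, so by the sub-queue's internal FIFO order it cannot be dequeued until the get counter of $i$ exceeds $c$. Since the get counter of $i$ equals $c$ right after $t_e^{deq}$, sub-queue $i$ contributes $0$ to $k$. This is analogous to the equality $push_i = pop_i$ used in Theorem~\ref{th:sDDd}.

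The second step is to bound the contribution from each sub-queue $j \neq i$. By FIFO inside $j$, items enqueued at $j$ after $t_e^{enq}$ have put values in $(P_j^{enq}, P_j^{deq}]$, and those dequeued before $t_e^{deq}$ are exactly the ones whose put value is at most $G_j^{deq}$. Thus the contribution of $j$ is $\max(0, G_j^{deq} - P_j^{enq})$. I will then invoke the \windecoupled invariants, namely $P_j \in [W^{put,max} - \depth, W^{put,max}]$ and $G_j \in [W^{get,max} - \depth, W^{get,max}]$, together with the monotonicity of the two window maxes and the sub-queue invariant $G_j \leq P_j$. Crucially, the shared value $c$ ties the two windows: $c \in [W^{put,max}(t_e^{enq}) - \depth + 1, W^{put,max}(t_e^{enq})]$ from $e$'s enqueue and $c \in [W^{get,max}(t_e^{deq}) - \depth + 1, W^{get,max}(t_e^{deq})]$ from $e$'s dequeue. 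Combining these with the validity conditions that governed $e$'s operations bounds $G_j^{deq} - P_j^{enq}$ by $\depth$ for each $j \neq i$. Summing the contributions of the $\width - 1$ other sub-queues yields $k = \depth(\width - 1)$.

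The main obstacle is precisely the tight per-sub-queue bound of $\depth$, rather than the looser $2\depth$ that a naive application of the two window invariants suggests. The saving hinges on exploiting both directions of the constraint forced by $e$: $c$ cannot lie too far below $W^{put,max}(t_e^{enq})$ nor above $W^{get,max}(t_e^{deq})$, so the two window maxes cannot drift apart by more than $\depth$ in either direction. Together with the FIFO constraint $G_j^{deq} \leq P_j^{deq}$, this forces $G_j^{deq} - P_j^{enq}$ to fit inside a single window interval. Making this case analysis watertight, especially in the regime where $W^{get,max}(t_e^{deq}) > W^{put,max}(t_e^{enq})$, is the delicate step; the rest of the proof is bookkeeping analogous to Theorem~\ref{th:sDDd}.
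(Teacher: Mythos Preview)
Your proposal has a genuine gap: the quantity you set out to bound is not the \koo error for a FIFO queue. For a queue, the \koo cost of dequeuing $e$ is the number of items \emph{enqueued before} $e$ that are \emph{still in the queue} at $t_e^{deq}$ (the items ``ahead of'' $e$ that were skipped). You instead propose to bound the number of items enqueued \emph{after} $e$ that are dequeued \emph{before} $e$. These two quantities are not the same and bounding the latter does not establish the theorem. The correct FIFO analogue of the stack quantity ``pushed after $t_e^{push}$ and not popped before $t_e^{pop}$'' is ``enqueued \emph{before} $t_e^{enq}$ and not dequeued before $t_e^{deq}$'', so the per-sub-queue contribution you should be bounding is $\max(0,\,P_j^{enq} - G_j^{deq})$, not $\max(0,\,G_j^{deq} - P_j^{enq})$.

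Even after this correction, your route is more laborious than the paper's. The paper does not mirror the \sDDd argument at all; it exploits a structural invariant specific to \windecoupled with $\shift=\depth$: both \globalcounterEnq and \globalcounterDeq advance monotonically by $\depth$, and an item enqueued while $\globalcounterEnq=i\cdot\depth$ is necessarily dequeued while $\globalcounterDeq=i\cdot\depth$. This ``batch'' observation immediately bounds the number of items enqueued before $e$ and not yet dequeued by $\width\cdot\depth$, and subtracting the FIFO-ordered sub-queue containing $e$ gives $(\width-1)\depth$. No case analysis on the relative positions of the two window maxima is needed, so the ``delicate step'' you flag as the main obstacle simply does not arise in the paper's proof.
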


\begin{proof}

The linearization points for \qDDd operations follow \msqueue design \cite{Michael:1996:SFP:248052.248106}. As shown in Algorithm \ref{algo:qDDd},  \deqop linearizes either by returning \nil (Line \ref{line4:returnnull}) or with a successful \caeo (Line \ref{line4:cae2}). \pusho linearizes with a successful \caeo (Line \ref{line4:cae1}). For readability reasons, We use  $\globalcounterEnq$ and $\globalcounterDeq$ as a representative for \enqop and \deqop \winmax respectively.
For \enqop (\deqop), \qDDd algorithm searches for a \subq whose enqueue (dequeue) counter is 
smaller than $\globalcounterEnq$ ($\globalcounterDeq$). If no such \subq exist, the $\globalcounterEnq$ ($\globalcounterDeq$)
is increased by $\depth$. 

One can observe that $\globalcounterEnq$ and $\globalcounterDeq$ are monotonically increasing. 
Also, $\globalcounterEnq= i \times \depth$ iff the \enqop counter for all \subqs are in the 
range $[(i-1) \times \depth+1, i \times \depth]$. The same holds for $\globalcounterDeq$. 
$\globalcounterEnq^i$ is updated to $\globalcounterEnq^{i+1} = \globalcounterEnq^i + \depth$
as a result of \enqop iff \enqop counter for all \subqs are equal to $\globalcounterEnq^i$ 
before the \enqop. These invariants hold at all times since the concurrent operations, that might violate 
them, would fail at \caeo instructions that modifies any \subq or any \globalcounter.

Based on these invariants, all the items that are enqueued while $\globalcounterEnq=i \times \depth$ 
will be dequeued while $\globalcounterDeq=i \times \depth$. Therefore, the maximum number 
of items that are enqueued before an item and are not dequeued before that 
item can be at most $\width \times \depth$. Disregarding the items that are 
enqueued on the same \subq, we obtain the theorem.
\end{proof}
\subsection{\dDDd}
\label{sec:dDDdcorrenctness}

\begin{theorem}
\label{th:dDDd}
\dDDd is linearizable with respect to k-out-of-order dequeue semantics, where 
$k=(8\depth)(\width-1)$.
\end{theorem}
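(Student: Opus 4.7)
I plan to follow the template of Theorem~\ref{th:sDDd}, adapted to the four operation types of a deque whose counters are independently maintained under \windecoupled. First, I would identify the linearization points in Algorithm~\ref{algo:deq}: each pop linearizes at the \caeo that atomically removes an end item together with incrementing the corresponding pop counter (or at the read that returns NULL for an empty sub-deque), and each push linearizes at the \caeo that transitions the descriptor from the \emph{stable} state to an \emph{unstable} state while installing the new head pointer and the incremented push counter. The subsequent Stabilize calls are helping steps only and do not introduce new linearization points, because they merely complete a previously installed push.

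Next, I would carry over the per-counter window invariant used in the proof of Theorem~\ref{th:qDDd}, applied independently to each of the four counters: for every operation type $c \in \{pushR, popR, pushL, popL\}$ and every time $t$, every sub-deque $j$ satisfies $c_j(t) \in [M_c(t) - \depth + 1,\, M_c(t)]$, where $M_c(t)$ is the current \winmax of type $c$. Applying this invariant at both endpoints of the interval $[t_e^{push}, t_e^{pop}]$ yields, for every $j \neq i$ and every operation type $c$, the uniform bound $c^{in}_j \leq c^{in}_i + 2(\depth - 1)$, where $c^{in}_\ell$ counts type-$c$ operations executed on sub-deque $\ell$ during that interval.

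The third step is to bound, for each $j \neq i$, the number of items that were pushed to $j$ during the interval and are still present at $t_e^{pop}$. Summing the per-type bound across the two push counters, and using the two pop counters to absorb the in-interval pushes that are also popped in the interval, I would obtain a per-sub-deque contribution of at most $8(\depth - 1)$. The main obstacle, which did not arise for the stack, is that on sub-deque $i$ the identity ``pushes equal pops'' no longer follows directly from $e$ being at the top, because $pushR$ and $pushL$ (respectively $popR$ and $popL$) act on opposite ends of $e$. I plan to overcome this by tracking separately the number of items to the left and to the right of $e$ throughout the interval, and arguing that $e$'s survival forces $pushR^{in}_i - popR^{in}_i$ and $pushL^{in}_i - popL^{in}_i$ to each stay within a small constant of zero (depending only on which ends were used to insert and to remove $e$). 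This keeps the sub-deque-$i$ term within the constants already absorbed into the $8\depth$ factor.

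Finally, summing over the $\width - 1$ sub-deques $j \neq i$ gives the claimed bound $k = (8\depth)(\width - 1)$. Linearizability with respect to the underlying sequential deque order, and lock-freedom, both follow from the original descriptor-based design of~\cite{DBLP:conf/europar/Michael03} together with the monotonicity of the four \windecoupled counters, which removes the ABA problem without an explicit version tag exactly as in the \sDDd algorithm.
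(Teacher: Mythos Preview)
Your plan works only for the case where $e$ is pushed and popped from the \emph{same} side of the deque; it breaks down in the opposite-side case, and this is precisely where the constant $8\depth$ (rather than $4\depth$) comes from.

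Concretely, suppose $e$ is pushed from the right and later popped from the left. Then the items that are ``ahead of'' $e$ at time $t_e^{pop}$ are the items lying to the \emph{left} of $e$ in the global deque order. These are not only items pushed during $[t_e^{push},t_e^{pop}]$: every item that was already present in any sub-deque $j$ at time $t_e^{push}$ is to the left of $e$, and if it has not been popped by $t_e^{pop}$ it must be counted. Your third step bounds only ``items pushed to $j$ during the interval and still present,'' so the pre-existing-size term is simply missing. The paper handles this by an explicit same-side / opposite-side case split; in the opposite-side case it introduces $size_j$ (the size of sub-deque $j$ at $t_e^{push}$), uses the identity $popLeft_i = pushLeft_i + size_i$ on sub-deque $i$, and the window-derived relation $size_i \geq size_j - 4\depth$ (two $\depth$'s from the push windows and two from the pop windows) to bound $pushLeft_j + size_j - popLeft_j \leq 8\depth$ for each $j$.

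Your handling of sub-deque $i$ is also incorrect in the opposite-side case. The claim that ``$e$'s survival forces $pushR^{in}_i - popR^{in}_i$ and $pushL^{in}_i - popL^{in}_i$ to each stay within a small constant of zero'' is false: push $e$ right, then push $M$ further items right, then pop all original left-side items and finally pop $e$ from the left. Here $pushR^{in}_i - popR^{in}_i = M$ is unbounded. What $e$'s survival actually forces is a relation among the \emph{left-side} quantities plus the initial size, namely $popLeft_i = pushLeft_i + size_i$; the right-side counters on sub-deque $i$ are irrelevant to the bound in this case. You need the two-case decomposition (or an equivalent argument that tracks, for each sub-deque, the items on the pop side of $e$ including those already present) to close the gap.
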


\begin{proof}
\dDD \pusho and \popo linearization points are similar to those of the original Maged Deque \cite{DBLP:conf/europar/Michael03}. As shown in Algorithm \ref{algo:dDDd},  \popo linearizes either by returning \nil (Line \ref{line5:returnnull1}, \ref{line5:returnnull2}) or with a successful \caeo (Line \ref{line5:popl1}, \ref{line5:popl2}, \ref{line5:popr1}, \ref{line5:popr2}). \pusho linearizes with a successful \caeo (Line \ref{line5:pemptycae1}, \ref{line5:pushstp11}, \ref{line5:pemptycae2}, \ref{line5:pushstp12}).

Consider the linearization points of \pusho and \popo operations that respectively insert and remove the item $e$ into and from a \subdeque $i$. Let $t_e^{push}$ and $t_e^{pop}$ denote these points, respectively. 

First, we consider the case where \pusho and \popo operations on item $e$ occur at the same side of the \subdeque. Without loss generality, we assume they happen at the right side of the \subdeque. 
In this case, we bound the maximum number of items, that are pushed after $t_e^{push}$ from right and are not popped before $t_e^{pop}$, to obtain $k$.
We denote the number items that are right pushed to (popped from) \subdeque $j$ in the time interval $[t_e^{push}, t_e^{pop}]$, with ${push}_{j}$ 
(${pop}_{j}$). Regarding the interval $[t_e^{push}, t_e^{pop}]$, we have: (i) ${push}_{i} = {pop}_{i}$, since the number of items that are pushed into and popped from \subdeque $i$, from the same side, should be equal; (ii)  
$\forall j \in [1, \width], push_{j} \leq push_{i} + 2\depth$; and (iii) $\forall j \in [1, \width], pop_{j} \geq pop_{i} - 2\depth$. 
Summing over all $\width$ \subdeques, we have for pushes at most: $\sum_{j=0}^{\width-1} push_{j} \leq \width (push_{i}) + 2(\width-1)\depth$ and for pops at least: $\sum_{j=0}^{\width-1} pop_{j} \geq \width (pop_{i}) - 2(\width-1)\depth$. Therefore, the number of items that are pushed from right after $t_e^{push}$ and are not popped before $t_e^{pop}$ ($push_{j}-pop_{j}$) is at most: $\sum_{j=0}^{\width-1} push_{j} - \sum_{j=0}^{\width-1} pop_{j} \leq 4(\width-1)\depth$.

Second, we consider the case where \pusho and \popo operations on item $e$ occur at the opposite sides of the deque. Without loss generality, we assume \pusho operation happens at the right side of the deque. 
In this case, we bound the maximum number of items: (i) that are pushed after $t_e^{push}$ from left and are not popped before $t_e^{pop}$; (ii) the items that are already inside the deque at $t_e^{push}$ and are not popped before $t_e^{pop}$. Summing these two terms will provide an upper bound (though not necessarily a tight one) and we obtain $k$. 
Let $size_{j}$ denotes the size of the \subdeque $j$ (number of items inside deque $j$) at time $t_e^{push}$. $pushLeft_{j}$ and $popLeft_{j}$ denotes the number of item that are pushed (and popped) to (from) \subdeque $j$ from left in the time interval $[t_e^{push}, t_e^{pop}]$. We know that for \subdeque $i$, we have $popLeft_{i} = pushLeft_{i} + size_{i}$. And, we have the following three relations: $\forall j \in [1, \width], popLeft_{i} \leq popLeft_{j} + 2\depth,   pushLeft_{i} \geq pushLeft_{i} - 2\depth , size_{i} \geq  size_{j} - 4\depth$. Thus, we obtain $\forall j \in [1, \width], 8\depth \geq pushLeft_{j} + size_{j} - popLeft_{j}$. Summing over all \subdeques $8\depth(\width-1) \geq \sum_{j=0}^{\width-1} pushLeft_{j} + size_{j} - popLeft_{j}$, we obtain an upper bound for the maximum number of items that could be covered by (i) and (ii). 

Finally, we obtain $k$ for a \popo operation by taking the maximum of case one and two:\\ $max(8\depth(\width-1), 4\depth(\width-1)) = 8\depth(\width-1)$.
\end{proof}

\subsection{\cDDc}
\label{sec:cDDccorrectness}

\begin{theorem}
\label{th:cDDc}
\cDDc is linearizable with respect to \koo counter semantics, where $k=(\shift + \depth)(\width-1)$.
\end{theorem}

\begin{proof}

Lemma~\ref{lemma:cor2} states that the size of \substructure can span to at most two consecutive accessible \windows, which implies that the difference between any two \subcounters can be at most $2\depth$ at any point in time. Let $counter_i$ denotes the counter value for \subcounter $i$. One can observe an error at most $(\shift + \depth)(\width-1)$ because $\lVert\sum_{j=1}^{\width} counter_j - (\width) counter_i\rVert \leq (\shift + \depth)(\width-1)$.
\end{proof}

\subsection{\cDDd}
\label{sec:cDDdcorrectness}

\begin{theorem}
\label{th:cDDd}
\cDDd is linearizable with respect to \koo counter semantics, where $k=(2\depth)(\width-1)$.
\end{theorem}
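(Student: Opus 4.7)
The plan is to mirror the structure of Theorem \ref{th:qDDd} for \qDDd and of Theorem \ref{th:cDDc} for the coupled counter, exploiting that under \windecoupled the per-\subcounter put and get counters, together with the two \winmax values, are all monotonically non-decreasing. First I would fix the linearization points: an \increment at \subcounter $i$ linearizes at the successful \caeo that advances $i$'s put counter; a \decrement linearizes at the successful \caeo that advances $i$'s get counter (or at an emptiness response if the \window search certifies all \subcounters empty). The value returned at such a linearization point is $\width\cdot(put_i - get_i)$, as specified in the \cDD construction.

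The central step is to establish the span invariant that, at every moment in time,
\begin{equation*}
\max_{j} put_j - \min_{j} put_j \;\leq\; \depth
\qquad\text{and}\qquad
\max_{j} get_j - \min_{j} get_j \;\leq\; \depth.
\end{equation*}
This is the \windecoupled analogue of Lemma \ref{lemma:cor2}. Informally, every $put_j$ stays in the interval $[\winmax-\depth,\winmax]$ of the put \window because (i) an \increment can only advance $put_j$ after observing $put_j < \winmax$, and (ii) \winmax is advanced by exactly \depth via a \caeo that succeeds only once all $put_j$ have reached the current \winmax; the same argument applies symmetrically to the get counters. The concurrent correctness relies on the same interleaving case analysis as in Theorem \ref{th:qDDd}: any execution that would push a $put_j$ outside the admissible interval is forced to fail either a \subcounter \caeo (the monotone counts embedded in the descriptor rule out ABA without version numbers) or the \caeo on \winmax.

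Once the invariant is in hand, the arithmetic is short. The true semantic counter value is $\sum_{j=1}^{\width}(put_j - get_j)$, while the value returned at the linearization point is $\width(put_i - get_i)$, so
\begin{equation*}
\Bigl\lvert \sum_{j=1}^{\width}(put_j - get_j) - \width(put_i - get_i)\Bigr\rvert
= \Bigl\lvert \sum_{j \neq i}\bigl[(put_j - put_i) - (get_j - get_i)\bigr]\Bigr\rvert
\;\leq\; \sum_{j \neq i}(\depth + \depth)
\;=\; 2\depth(\width-1),
\end{equation*}
which matches the claimed $k$. Hence every operation returns a value within $k$ of the exact counter, which is precisely the definition of \koo counter linearizability.

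I expect the main obstacle to be the concurrent part of the invariant rather than the arithmetic: I must argue that no interleaving of concurrent increments, decrements, and \window \shifts can ever push a $put_j$ or $get_j$ outside its admissible interval relative to its respective \winmax. The cleanest route is to transplant the four-case interleaving argument of Theorem \ref{th:qDDd} almost verbatim, replacing \enqop/\deqop counters with put/get counters; because \windecoupled counts are strictly monotone, the ABA concerns that required version numbers in \wincoupled do not arise and the case analysis shortens accordingly.
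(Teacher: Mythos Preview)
Your proposal is correct and follows essentially the same route as the paper: bound the pairwise size discrepancy between \subcounters by $2\depth$ and then sum over the $\width-1$ other \subcounters to get $k=2\depth(\width-1)$. The only cosmetic difference is that you derive the $2\depth$ bound by separately bounding the put-spread and the get-spread by $\depth$ each (natural for \windecoupled, and a slightly stronger intermediate statement), whereas the paper asserts directly that $\lvert(put_i-get_i)-(put_j-get_j)\rvert\le 2\depth$ and then refers back to the arithmetic of Theorem~\ref{th:cDDc}; your version is more explicit but not a different argument.
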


\begin{proof}

Let $counter_i^{decrement}$ (resp. $counter_i^{increment}$) denote \decrement (resp. \increment) counters for \subcounter $i$. We know that $\forall i, j \in [1, \width]: \lVert(counter_i^{increment} - counter_i^{decrement}) - (counter_j^{increment} - counter_j^{decrement})\rVert \leq 2\depth$. Informally, the difference between the sizes of any two \subcounters can be at most $2\depth$ at any point in time. The rest of the proof follows as Theorem~\ref{th:cDDd}.
\end{proof}
\subsection{Lock-freedom}
\wincoupled, \window \shifting is lock-free iff $\shift<\depth$ and obstruction free iff $\shift=\depth$. Take an example, a \getop operation might read an empty \window and \shift it down to a full state, but before it selects a \substructure, a subsequent \putop reads the full \window state and \shifts it up to an empty \window state. It is possible for this process continue forever leading to a system live lock. This is however avoided by setting the \shift parameters to less than \depth.Unlike \wincoupled, \windecoupled is always lock-free. 

Each \substructure is lock-free: An operation can fail on \caeo only if there is another successful operation. A \window \shift can only fail if there is another successful \shift operation preceded by a successful \putop or \getop, ensuring system progress. Thus, all our derived algorithms are lock-free with the exception of possible obstruction freedom as discussed above.
\section{Other Algorithms for Comparison}
\label{sec:otheralgorithms}
To facilitate a detailed study, we implement other given \relaxed data structure algorithms using three extra \relaxation techniques following the same multi \substructure design; \random, \randomc and \roundrobin. These present a combination of characteristics that add value to our evaluation, as shown in Table \ref{table:dsrlx}. Just like our design framework, we use the \width parameter to define the number of \substructures for all the derived algorithms. 

For \random, a \putop or \getop operation selects a \substructure uniformly at random and proceeds to operate on the given \substructure, whereas for \randomc, a \getop operation randomly selects two \substructures, compares their items returning the most correct depending on the data structure semantics \cite{rihani2015brief,Alistarh:2017:PCP:3087801.3087810}. Under \randomc, \putop operations time stamp items marking their time of entry. It is these timestamps that are compared to determine the order among the two items during a \getop operation. Due to the randomised distribution of operations, we expect low or no contention, low or no locality. These three characteristics help us compare and contrast with our optimisations discussed in Section \ref{sec:optimization}. Using the \random design technique we derive a \srandom stack, a \qrandom queue, a \crandom counter and a \drandom deque. Using the \randomc technique we derive a \srandomc stack, a \qrandomc queue and a \crandomc counter. We use similar base algorithms used to derive 2D algorithms discussed in Section \ref{sec:algorithms} for each data structure. \random and \randomc derived algorithms do not provide deterministic \koo relaxation bounds.

Under \roundrobin, a \processor selects and operates on a \substructure in a strict round-robin order following the \processor local counter. The \processor must succeed on the selected \substructure before proceeding to the next. This implies that in case of contention, a failing \processor must retry on the same \substructure until the given \processor succeeds. Due to retries on contended \substructures by the contending \processors, we expect more contention to be generated from the retries. The \processor always selects a different \substructure after a successful operation, hence we expect low or no locality. However, memory access using round robin scheduling can take advantage of hardware prefecthing, a good characteristic to compare and contrast with our locality optimisation. Using \roundrobin technique, we derive a \srobin stack, a \qrobin queue, a \crobin counter and a \drobin. \roundrobin provides deterministic \koo \relaxation bounds, we demonstrate this using \srobin whose bound is given by Theorem \ref{th:srobin}.

\subsection{\srobin Correctness}
\label{app:srobincorrectness}

\begin{theorem}
\label{th:srobin}
\srobin is linearizable with respect to \koo stack semantics, where $k=(2P-1)(\width-1)$. Where $P$ is number of \processors and \width is the number of \substaks
\end{theorem}

\begin{proof}

Consider the linearization points of \pusho and \popo operations that respectively insert and remove the
item $e$ into and from a \substak (let \substak $0$). 
Let $t_e^{push}$ and $t_e^{pop}$ denote 
these points, respectively. Now, we bound the maximum number of items, that are
pushed after $t_e^{push}$ and are not popped before $t_e^{pop}$, to obtain $k$.
We denote the number items that are pushed to (popped from) \substak $i$ by 
thread $j$ in the time interval $[t_e^{push}, t_e^{pop}]$, with ${push}_{i}^{j}$ 
(${pop}_{i}^{j}$).

Observe that each thread applies its operations in round robin fashion without 
skipping any \substak. If the previous successful \popo had occurred at
\substak $i$, the next \popo occurs at \substak $i+1 (mod\text{ }\width)$. The 
same applies for the push operations. 

Without loss of generality, assume that thread $0$ has inserted item $e$ to \substak $0$.
This implies that $\forall i, \width-1 \geq i>0, \ {push}_{0}^{0} \geq {push}_{i}^{0}$. 
Now, take another thread $j$, we have 
$\forall i: \width-1 \geq i>0, {push}_{0}^{j} \geq {push}_{i}^{j} - 1$. Informally, another 
thread can increase the number of items on any other \substak by at most one more compared to the number of items that pushes on \substak $0$.

For the pop operations, we have the same relation for all threads: $\forall i, 
\width \geq i > 0, \ {pop}_{0}^{j} \geq {pop}_{i}^{j} + 1$. Informally, a 
thread can pop at most 1 item less from any other \substak compared to the number that it pops from \substak $0$.
As the interval $[t_e^{push}, t_e^{pop}]$ starts with the push and ends with 
the pop of item $e$ at \substak $0$, we have 
$\sum_{j=0}^{P-1} {push}_{0}^{j} = \sum_{j=0}^{P-1} {pop}_{0}^{j} = Y$.

Summing over all threads and \substaks other than \substak $0$, we get at most 
$(Y+P-1)(\width-1)$ \pusho operations in the interval $[t_e^{push}, t_e^{pop}]$.
Summing over all threads and \substaks other than \substak $0$, we get at least
$(Y-P)(\width-1)$ \popo operations. Which leads to the theorem: 
$k \leq ((Y+P-1)-(Y-P))(\width-1) = (2P-1)(\width-1)$
\end{proof}
\section{Experimental Evaluation}
\label{sec:evaluation}

\begin{table}\centering \small
  \begin{tabular}{l l l l}
    \hline
    Algorithm & \width \\ \hline
    \random (Stack,Counter, Queue \& Deque) & $3P$  \\
    \randomc (Stack,Counter \& Queue) & $3P$  \\
    \srobin & $(k/(2P-1))+1$  \\
    \crobin & $(k/(2P-1))+1$ \\
    \qrobin & $(k/(P-1))+1$ \\
    \drobin & $(k/(2P-1))+1$ \\ 
    \kstack & $k+1$ \\
    \qsegment & $k+1$ \\
    \lru & $k+1$ \\
    \sDDd & $(k/(3\depth)) + 1$ \\
    \sDDc & $(k/(2\shift + \depth)) + 1$ \\
    \qDDd & $(k/\depth) + 1$ \\
    \dDDd & $(k/3\depth) + 1$ \\
    \cDDc & $(k/(\shift + \depth)) + 1$ \\
    \cDDd & $(k/2\depth) + 1$ \\
  \end{tabular}
  \caption{Execution parameter configuration for the different algorithms}
  \label{table:dsrlx}
\end{table}

We experimentally evaluate the performance of our derived algorithms, in comparison to \koo \relaxed algorithms available in the literature, and other state of the art data structure algorithms. \koo \relaxed algorithms include; Least recently used queue (\lru) \cite{Haas:2013:DQS:2482767.2482789}, Segmented queue (\qsegment) and \kstack \cite{afek2010quasi,henzinger2013quantitative},  other algorithms include; MS-queue (\msqueue) \cite{Michael:1996:SFP:248052.248106}, Wait free queue (\wfqueue) \cite{Yang:2016:WQF:2851141.2851168}, general doubly linked list Deque (\dsundell) \cite{DBLP:journals/jpdc/SundellT08}, Maged deque (\dmaged) \cite{DBLP:conf/europar/Michael03}, Time stamped stack (\tsstack) \cite{Dodds:2015:SCT:2775051.2676963} and Elimination back-off stack (\elimination) \cite{hendler2010scalable}. \width will be generally used to refer to number of \substructures for all algorithms using multiple \substructures.

To facilitate a uniform comparison, we implemented all the evaluated algorithms using the same development tools and environment. The source code is publicly available on this link https://github.com/dcs-chalmers/Semantic-relaxation-2D-design-framework. 

\subsection{System Description}
Experiments are run on two x86-64 machines: (i) Intel Xeon E5-2687W v2 dual socket machine (\multisocket) and (ii) Intel Xeon Phi 7290 single socket machine (\singlesocket). \multisocket has 8-core Intel Xeon processors per socket, each running at 3.4GHz, with three cache levels: core private L1 cache = 32 KB, core private L2 cache = 256 KB and a socket intra shared L3 cache = 25.6 MB. \singlesocket has 72-core processor running at 1.5GHz, core private L1 cache = 32 KB, L2 cache = 1024KB shared by two cores (tile). \multisocket and \singlesocket run on Ubuntu 16.04.2 LTS and CentOS Linux 7 Core Operating systems receptively. The \multisocket machine is used to evaluate inter-socket execution behaviour, whereas \singlesocket is used to evaluate intra-socket with high number of \processors. \Processors are pined one per core, for both machines excluding hyper-threading. Inter-socket execution is evaluated through pinning the \processors one per core per socket in round robin fashion. \Processors randomly select between \putop or \getop per operation with a given probability (operation rate). Memory is managed using the ASCYLIB framework SSMEM \cite{David:2015:ACS:2786763.2694359}.

Our main goal is to achieve scalability under high operation rate. To evaluate this, we simulate high operation rate by excluding work between operations. To reduce the effect of $NULL$ returns \footnote{Usually $NULL$ returns are very fast because they do not update the data structure state, if not minimised, they can give misleading performance results} for \getop operations, all algorithms are initialised with $2^{17}$ items. Each experiment is then run for five seconds obtaining an average of five repeats. Throughput is measured in terms of operations per second, whereas the \relaxation behaviour (\accuracy) is measured in terms of the error distance from the exact data structure sequential semantics \cite{henzinger2013quantitative}. The higher the error distance, the lower the \accuracy.

\subsection{Measuring \accuracy} We adopt a similar methodology used in the literature \cite{alistarh2015spraylist,rihani2015brief}. 
A sequential linked-list (doubly-linked list for deque) is run alongside the data structure being measured. Items on the data structure are duplicated on the linked-list and can be identified by their unique labels. For each operation \putop or \getop, a simultaneous insert or delete is performed on the linked-list respectively, following the exact semantics of the given data structure. A global lock is carefully placed at the data structure linearization points, locking both the linked-list and the data structure simultaneously. The lock allows only one \processor to update both the data structure and the linked-list in isolation. 

A given \processor has to acquire the lock before it tries to linearize on any given \substructure. Note that, \window search is independent of the lock. \putop operations happen at the head or tail of the list for LIFO or FIFO measurements respectively (left or right for deque). A \getop operation searches for the given item on the linked list, deletes it, and returns its distance from the respective access point, head or tail (error distance). For counter measurements, we replace the linked-list with a single fetch and add (\faa) counter. Both counters are updated in isolation using a global lock like explained above. The error distance is calculated from the difference between the two counter values.

Experiment results are then plotted using logarithmic scales, throughput (solid lines) and error distance (dotted lines) sharing the x-axis.
\subsection{Dimension Tunability}
\begin{figure*}
\begin{minipage}{1\textwidth}
\centering
\includegraphics[scale=0.5,trim={0 1.0cm 0 0},clip]{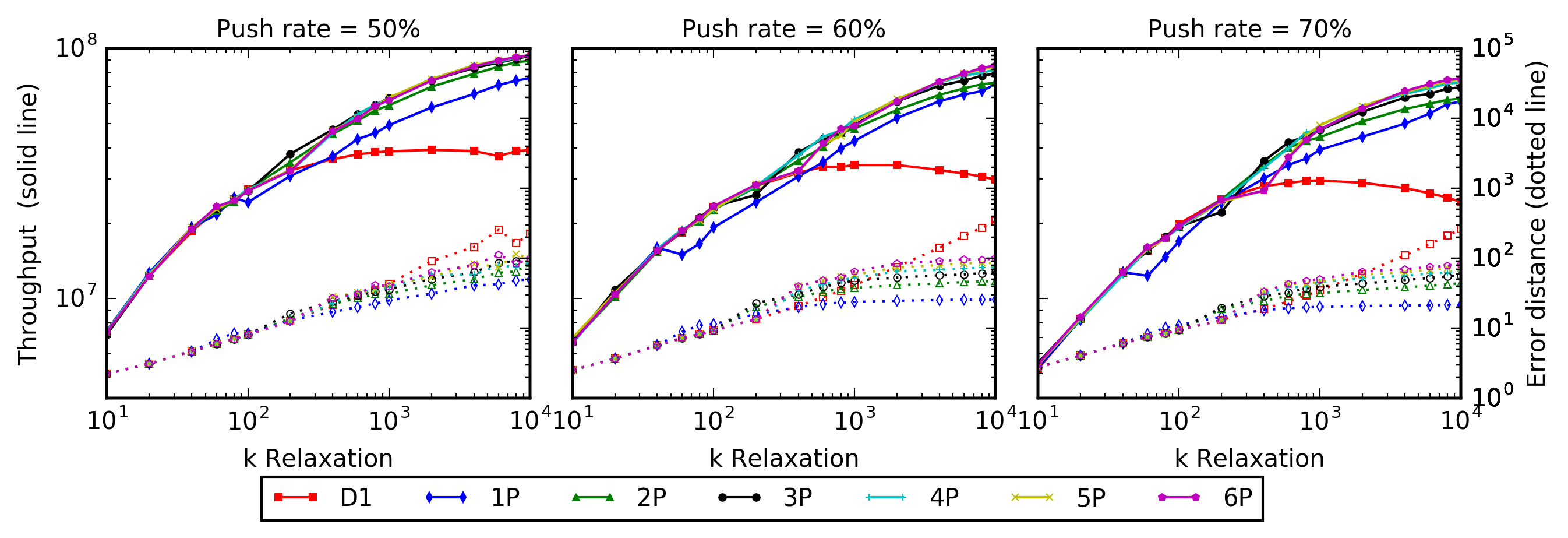}
\subcaption{\singlesocket}
\label{fig:wincoupleduma}
\end{minipage}
\begin{minipage}{1\textwidth}
\centering
\includegraphics[scale=0.5]{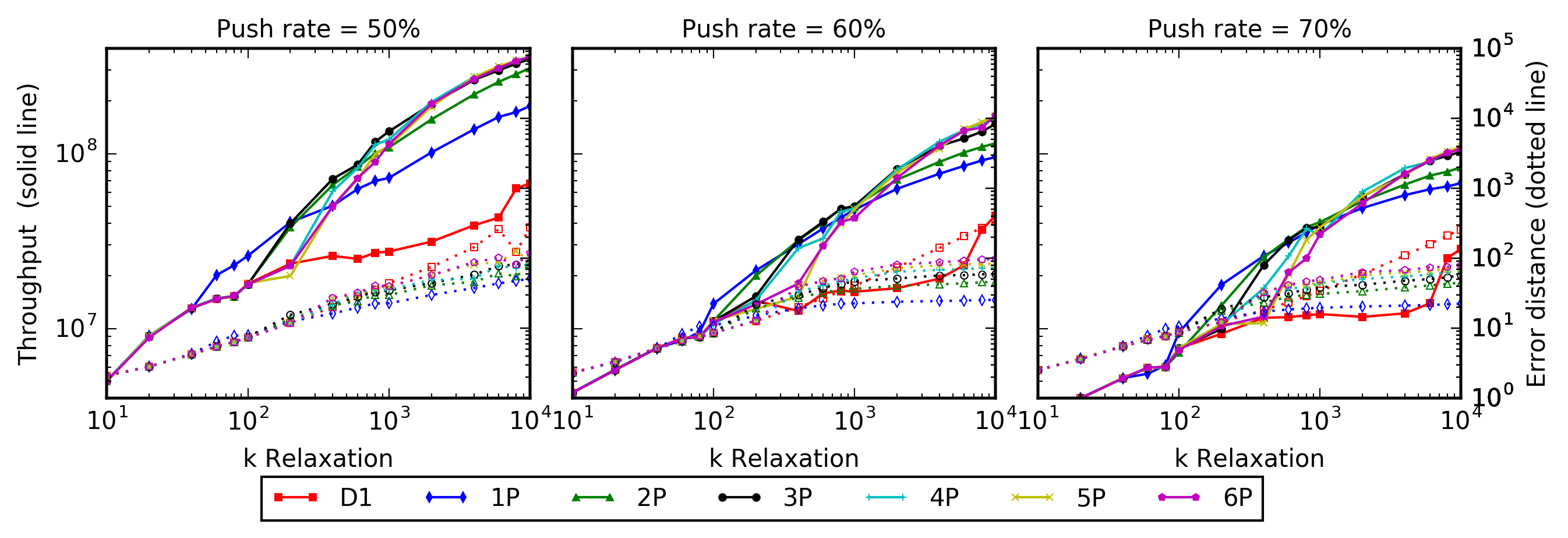}
\subcaption{\multisocket}
\label{fig:wincouplednuma}
\end{minipage}
\caption{\sDDc exemplifying the \wincoupled performance with different \width configurations ($P=16$).}
\label{fig:wincoupled}
\vspace{10pt}
\begin{minipage}{1\textwidth}
\centering
\includegraphics[scale=0.5,trim={0 1.0cm 0 0},clip]{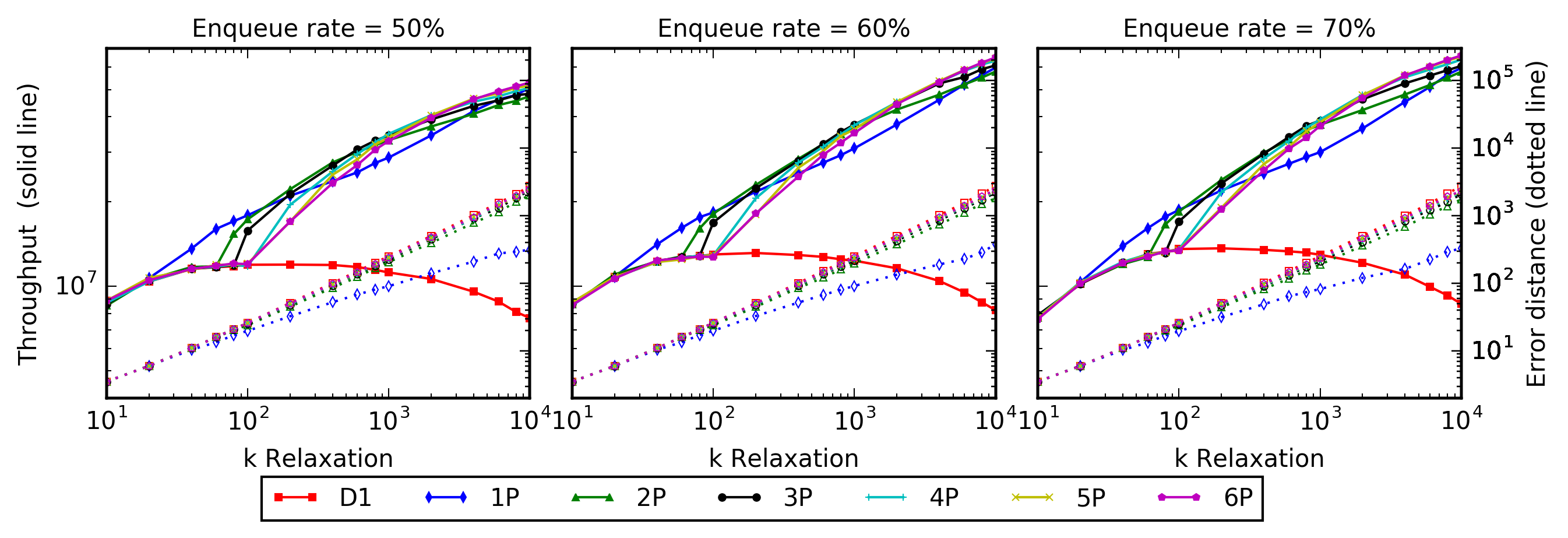}
\subcaption{\singlesocket}
\label{fig:windecoupleduma}
\end{minipage}
\begin{minipage}{1\textwidth}
\centering
\includegraphics[scale=0.5]{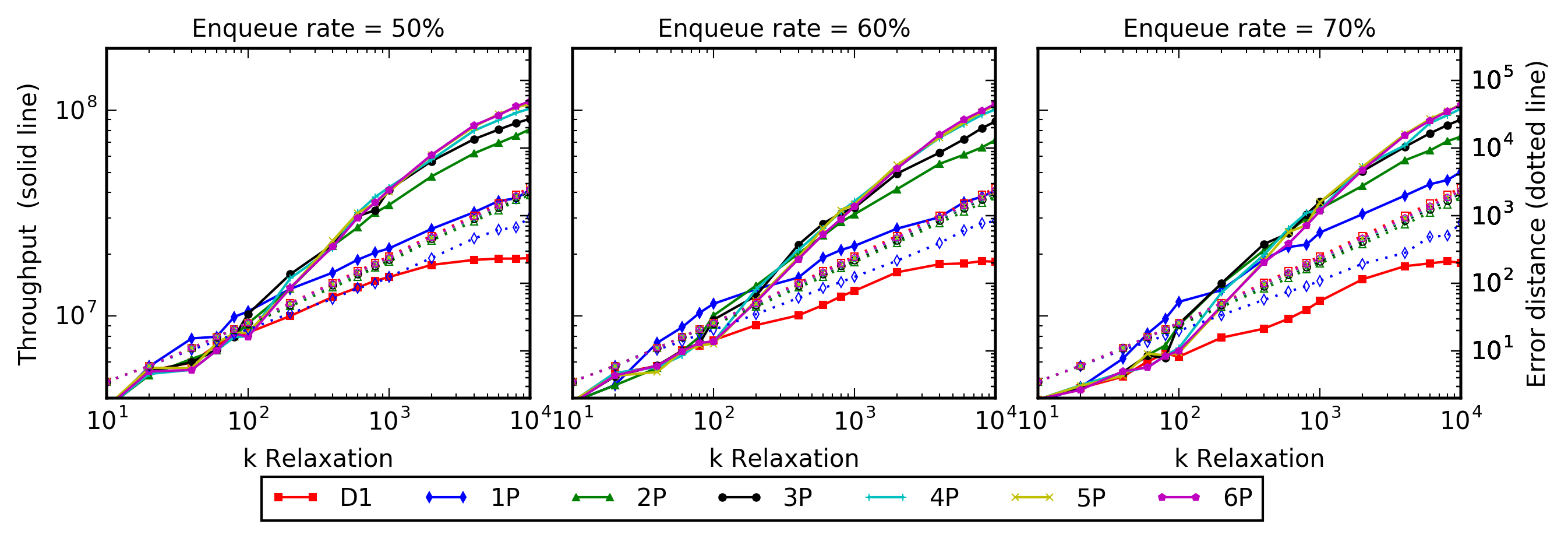}
\subcaption{\multisocket}
\label{fig:windecouplednuma}
\end{minipage}
\caption{\qDDd exemplifying the \windecoupled performance with different \width configurations ($P=16$).}
\label{fig:windecoupled}
\end{figure*}

Our design framework is tunable, giving designers the ability to manage performance optimizations for different execution environments and workloads,  within a given tight \relaxation bound ($k$). To evaluate this, we experiment with different \width and \depth parameter configurations, as shown in Figures \ref{fig:wincoupled} and \ref{fig:windecoupled} for \sDDc (representative of \wincoupled) and \qDDd (representative of \windecoupled) respectively. Curve (D1) depicts the case for fixed $\depth = 1$ which also represents a case of \relaxing in one dimension (increasing the number of \substructures). The other curves (1P,2P,3P,4P,5P and 6P) depict execution in two dimensions with a set maximum $\width \in \{1P,2P,3P,4P,5P, 6P\}$, where $P$ is the number of \processors. As \relaxation increases, we increase \width until the maximum \width, then switch to increasing \depth within the specified \relaxation bound. For simplicity, \width is described as a multiple of the number of \processors. It should however be noted that \width can be configured to be independent of the number of \processors.

D1 presents the lowest throughput as $k$ increases. This is attributed to the increasing \width proportional to $k$, leading to increased \hopos and lack of locality exploitation. On the other hand, we observe improved throughput performance for two dimensional executions. For all measured $k$, we observe that there is no consistent optimal \width configuration. This implies that, an optimal configuration is dependant on other factors, including; $k$ \relaxation, type of workload, plus \accuracy vs throughput trade-off. There are also notable differences between \singlesocket and \multisocket results. This calls for a multi-objective optimisation model, which is beyond the scope of this article. 

With respect to the evaluated cases, we observe that $\width = 3P$ provides a fair balance between \accuracy and throughput performance especially for \sDDc as shown in Figure \ref{fig:wincoupled}. However for smaller $k$, there are varying high throughput points. Since we do not have an optimisation model, we empirically obtain the high throughput \width configurations for different executions.

However for \multisocket, we note that smaller \width ($1P$) can achieve higher throughput performance. This is attributed to the high inter socket communication cost, having smaller \width allows for the exploitation of locality through \relaxing more in the \depth dimension. Exploiting locality reduces the communication between sockets, in turn, avoiding the inter socket communication cost.
\subsection{Scaling With Relaxation}
\label{sec:rlxevaluation}
\begin{figure*}
\begin{minipage}{1\textwidth}
\centering
\includegraphics[scale=0.5,trim={0 1.0cm 0 0},clip]{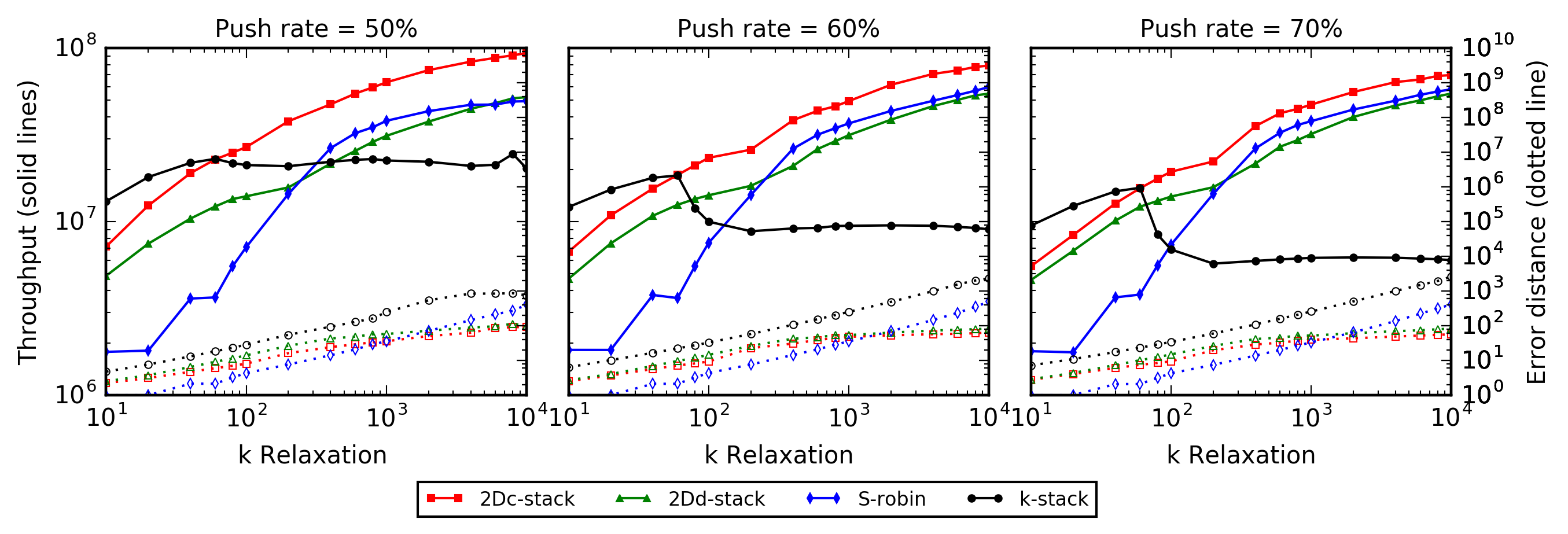}
\subcaption{\singlesocket}
\label{fig:sumarelaxationp16}
\end{minipage}
\begin{minipage}{1\textwidth}
\centering
\includegraphics[scale=0.5]{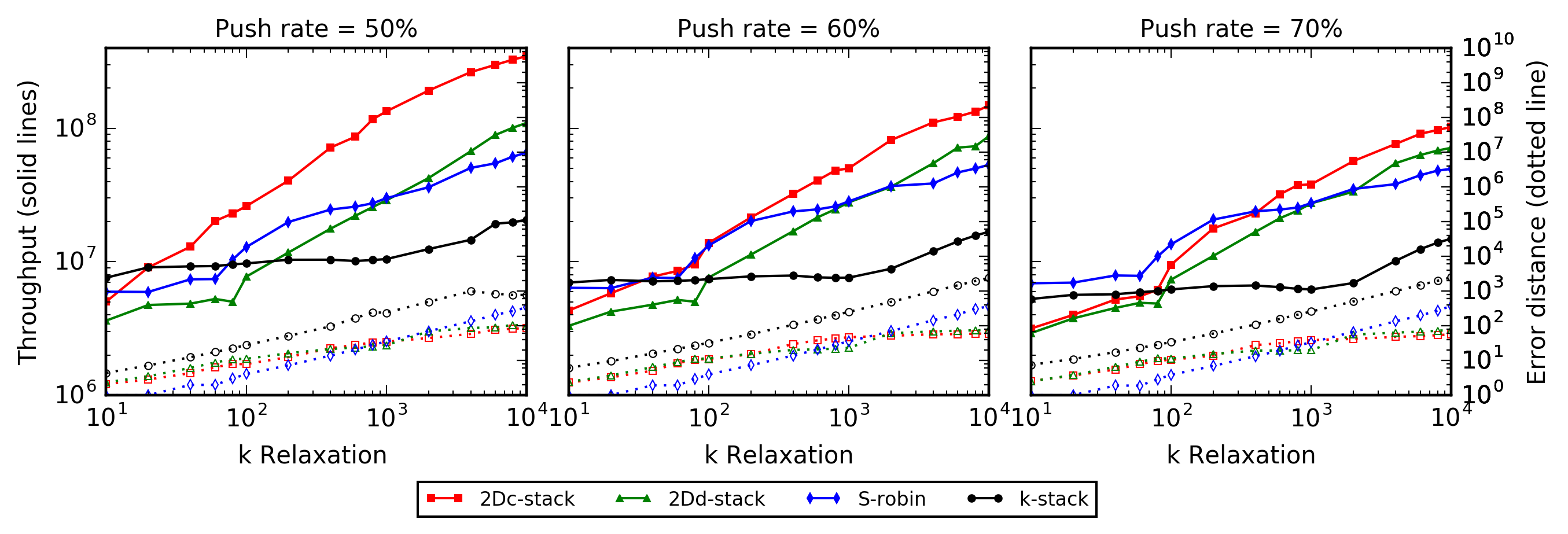}
\subcaption{\multisocket}
\label{fig:snumarelaxationp16}
\end{minipage}
\caption{Stack throughput and observed \accuracy as $k$ bound relaxation increases ($P=16$).}
\label{fig:srelaxationp16}
\vspace{10pt}
\begin{minipage}{1\textwidth}
\centering
\includegraphics[scale=0.5,trim={0 1.0cm 0 0},clip]{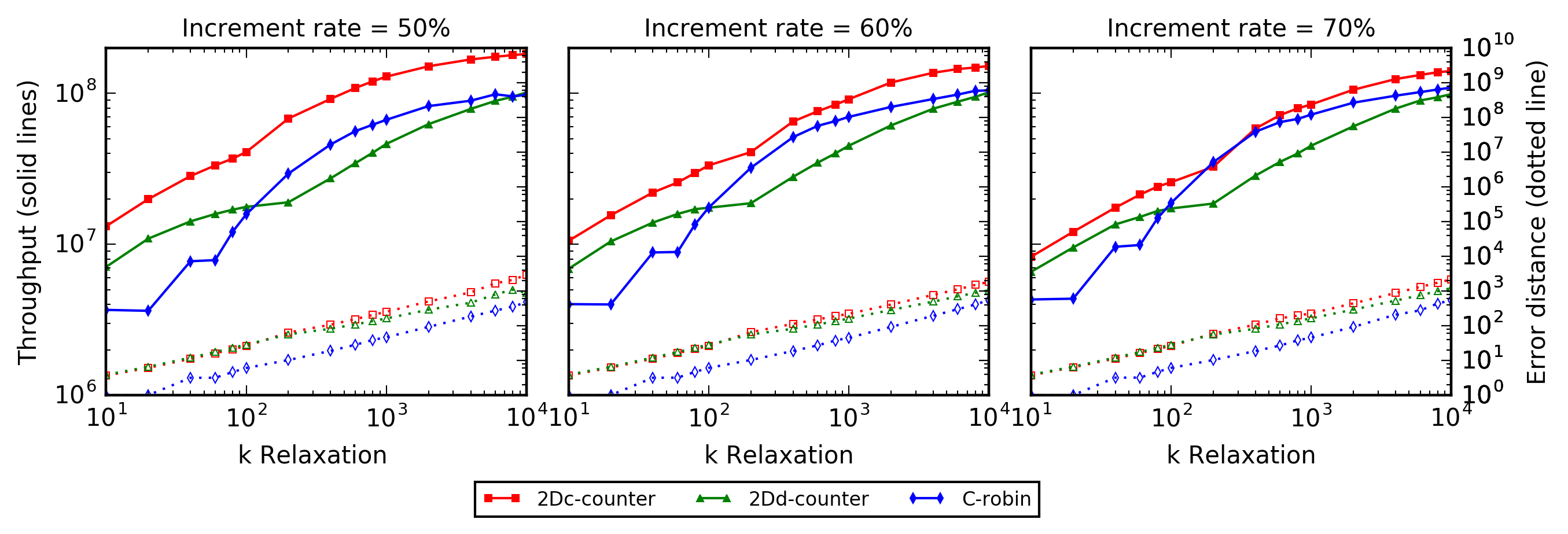}
\subcaption{\singlesocket}
\label{fig:cumarelaxationp16}
\end{minipage}
\begin{minipage}{1\textwidth}
\centering
\includegraphics[scale=0.5]{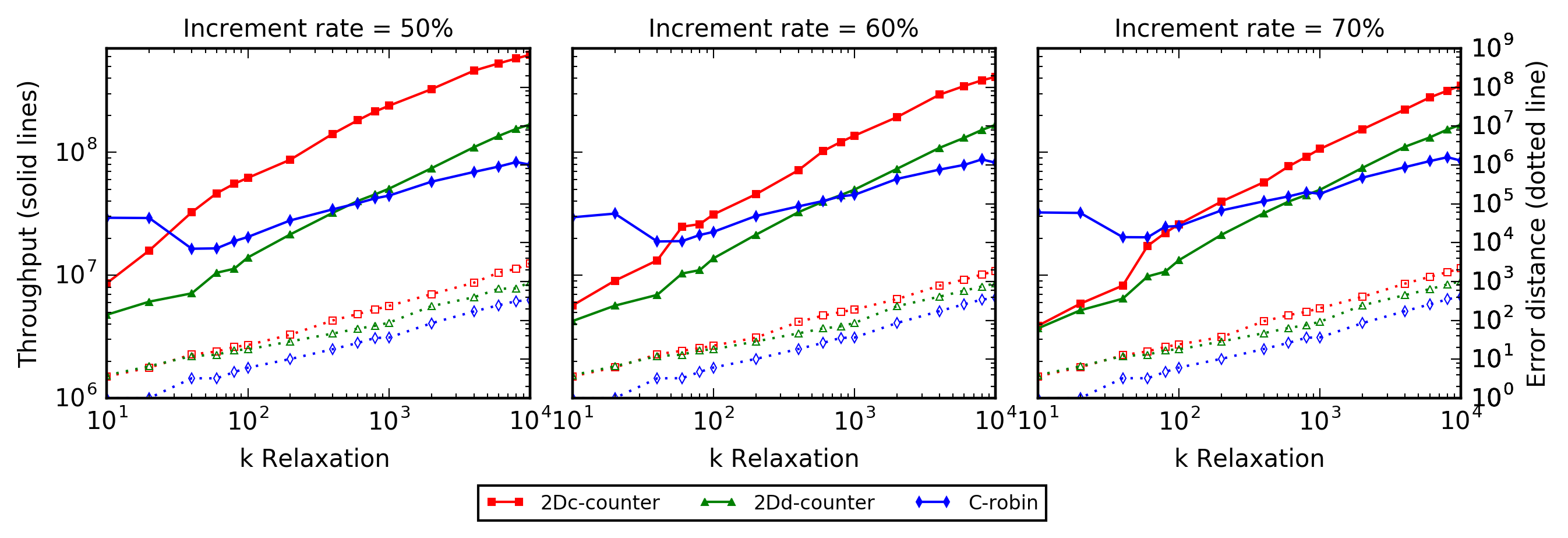}
\subcaption{\multisocket}
\label{fig:cnumarelaxationp16}
\end{minipage}
\caption{Counter throughput and observed \accuracy as $k$ bound relaxation increases ($P=16$).}
\label{fig:crelaxationp16}
\end{figure*}

\begin{figure*}
\begin{minipage}{1\textwidth}
\centering
\includegraphics[scale=0.5,trim={0 1.0cm 0 0},clip]{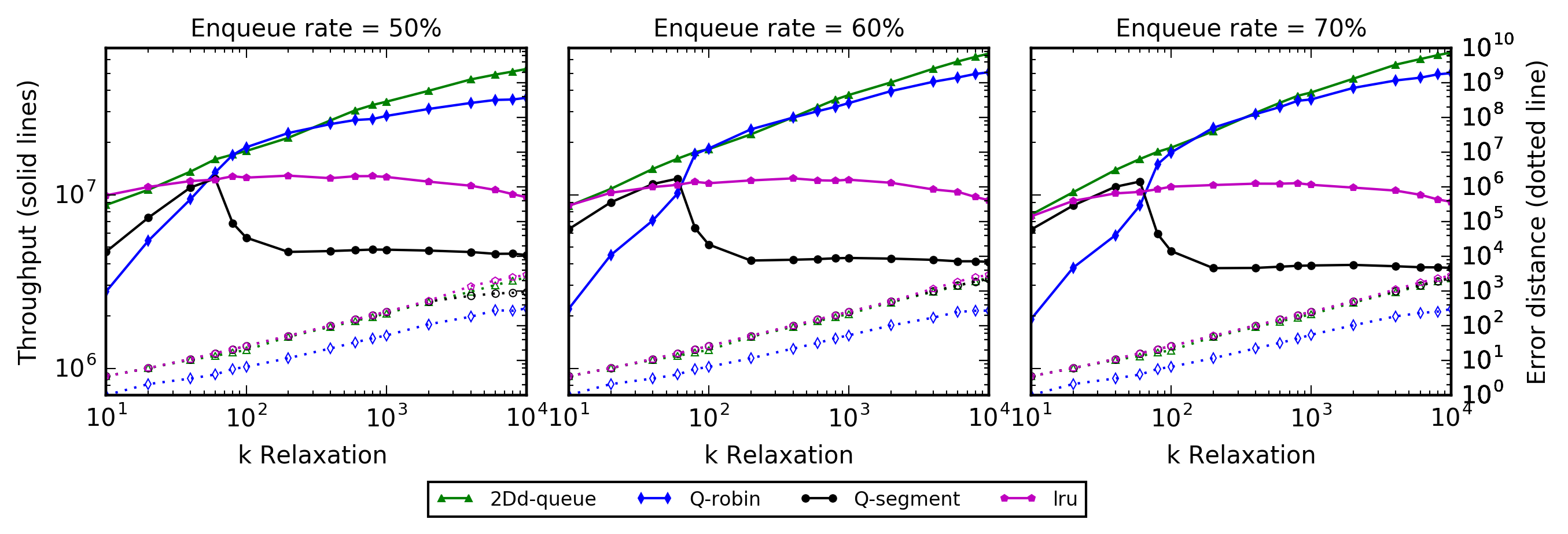}
\subcaption{\singlesocket}
\label{fig:qumarelaxationp16}
\end{minipage}
\begin{minipage}{1\textwidth}
\centering
\includegraphics[scale=0.5]{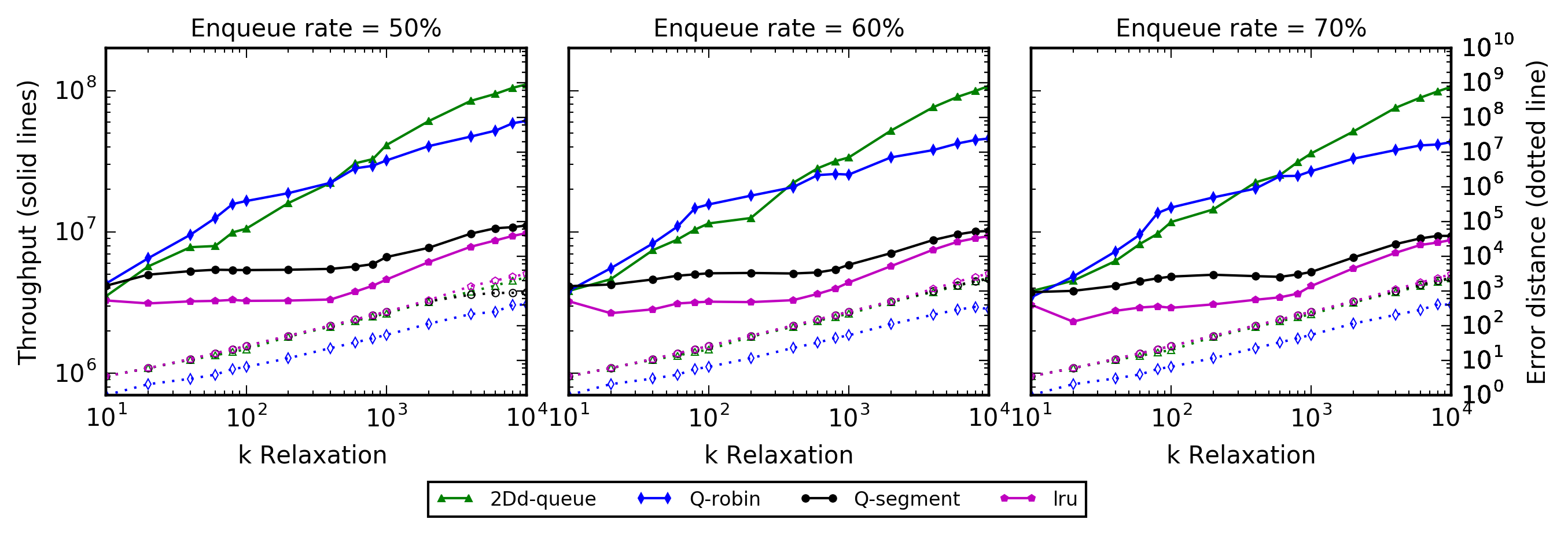}
\subcaption{\multisocket}
\label{fig:qnumarelaxationp16}
\end{minipage}
\caption{Queue throughput and observed \accuracy as $k$ bound relaxation increases ($P=16$).}
\label{fig:qrelaxationp16}
\vspace{10pt}
\begin{minipage}{1\textwidth}
\centering
\includegraphics[scale=0.5,trim={0 1.0cm 0 0},clip]{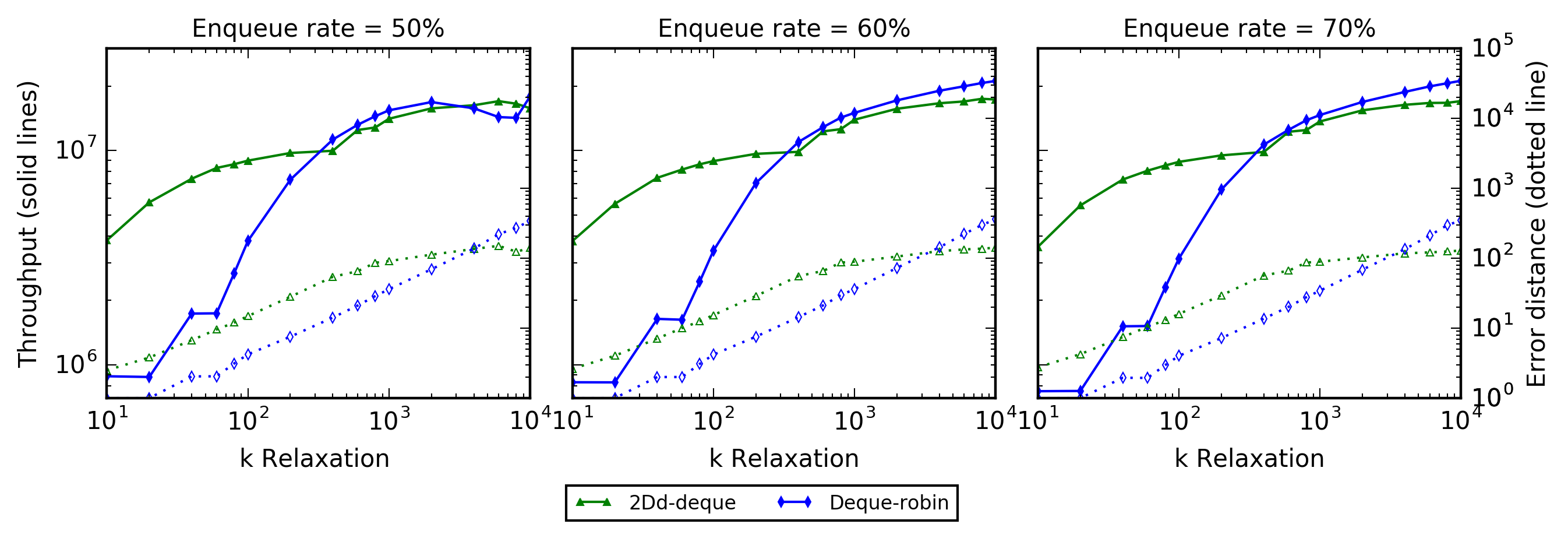}
\subcaption{\singlesocket}
\label{fig:dumarelaxationp16}
\end{minipage}
\begin{minipage}{1\textwidth}
\centering
\includegraphics[scale=0.5]{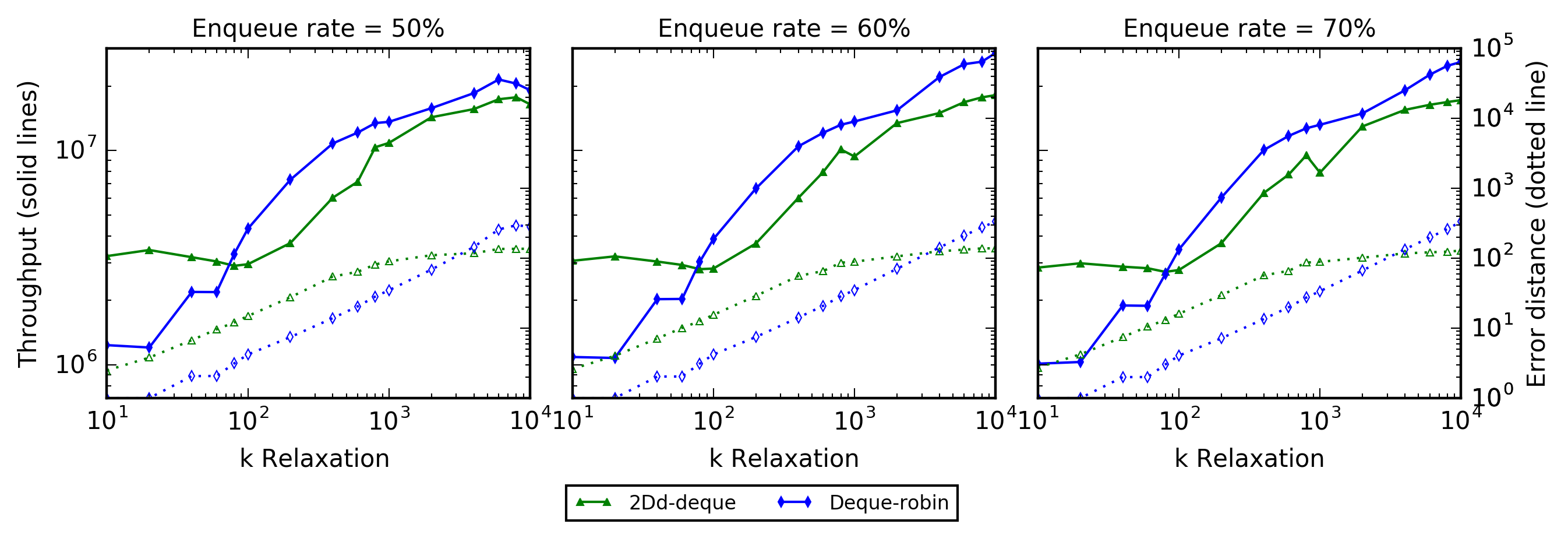}
\subcaption{\multisocket}
\label{fig:dnumarelaxationp16}
\end{minipage}
\caption{Deque throughput and observed \accuracy as $k$ bound relaxation increases ($P=16$).}
\label{fig:drelaxationp16}
\end{figure*}

In order to evaluate monotonicity with increasing \relaxation bound ($k$), we fix the number of \processors to $16$. This is to match the number of cores available on \multisocket without hyper-threading. Results are presented in Figures \ref{fig:srelaxationp16}, \ref{fig:qrelaxationp16}, \ref{fig:crelaxationp16} and \ref{fig:drelaxationp16} for stack, queue, counter and deque respectively. 

First, we observe the difference between \wincoupled and \windecoupled for \sDDc and \sDDd respectively in Figure \ref{fig:srelaxationp16}. \sDDc consistently outperforms \sDDd due to the reduced \window \shifting updates. With \sDDc, a given \processor can locally operate on the same \substak longer since operation counts cancel out each other leaving the \substak in a valid state. The longer a given \substak stays valid, the higher the chances of exploiting locality. This advantage is more evident with symmetric workloads (50\% push rate). As the workload becomes more asymmetric (70\% push-rate), less \pusho counts are cancelled out by \popo counts. This implies that, the \window gets full more frequent leading to increased \window \shifts. With 100\% asymmetric workloads, \sDDc and \sDDd present similar execution behaviour. The same is observed for \cDDc and \cDDd in Figure \ref{fig:crelaxationp16}.

All multi \substructure based algorithms increase their \width (number of \substructures) as $k$ increases to reduce contention and allow for increased disjoint \access as shown in Table \ref{table:dsrlx}. However, for \kstack, \qsegment, and \lru, \hopos increase as \width increases, this explains their observed low throughput gain. \srobin, \qrobin and \crobin are not affected by \hopos. However, for smaller $k$ values, they suffer from high contention arising from contending \processors retrying on the same \substructure until they succeed. As contention vanishes with high $k$ values, throughput gain saturates due to lack of locality. \roundrobin algorithms take advantage of the hardware prefetching available on both \singlesocket and \multisocket machines to reduce on the downside effect of lack or locality. This explains the observed throughput gain as \width increases.   

$2D$ algorithms maintain throughput gain through limiting \width to a size beneficial to reducing contention and switch to adjusting the \depth to reduce \hopos. For our evaluation, $2D$ algorithms' \width is configured as shown in Table \ref{table:dsrlx}. Once the algorithm attains the configured \width, it switches to increasing \depth as $k$ increases. The \depth parameter allows $2D$ algorithms to maintain throughput gain (monotonicity) through exploiting locality while reducing latency. This is observed for both \singlesocket and \multisocket machines.

In terms of \accuracy, we observe an almost linear decrease in \accuracy as $k$ increases for all algorithms. However, for \sDDc we observe a reduced rate of \accuracy loss when the algorithm switches to increasing \depth.

\subsection{Scaling With \Processors}

\begin{figure*}
\begin{minipage}{1\textwidth}
\centering
\includegraphics[scale=0.5,trim={0 1.0cm 0 0},clip]{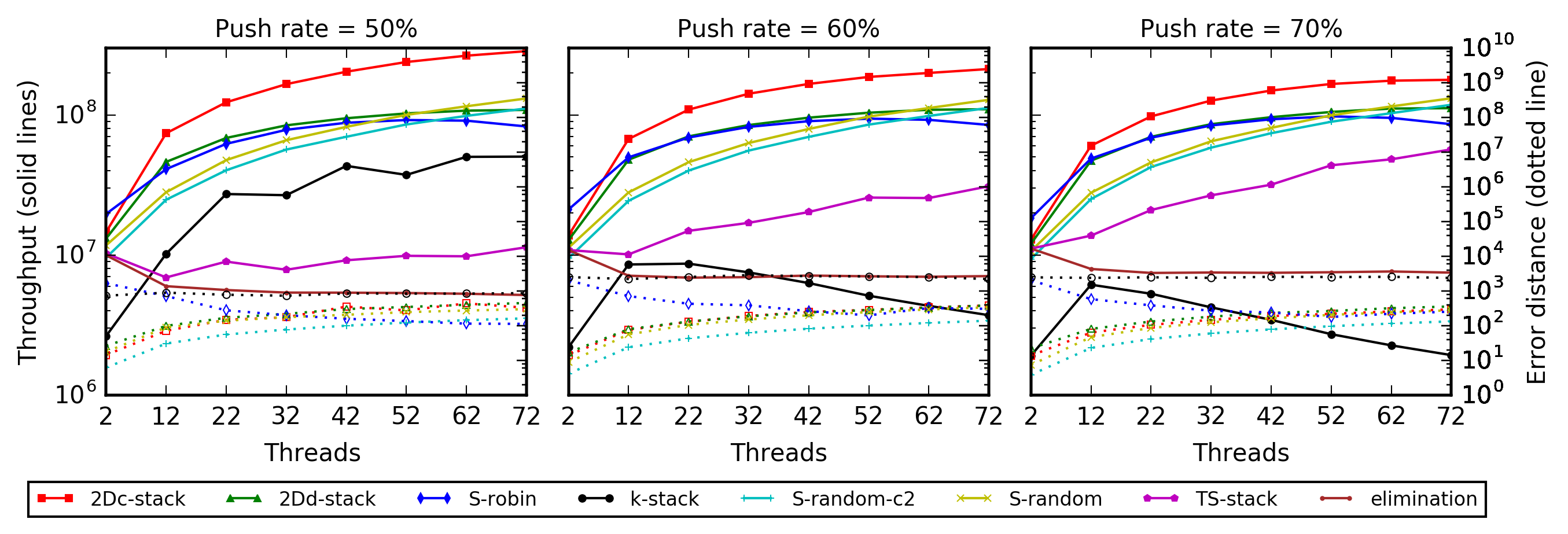}
\subcaption{\singlesocket}
\label{fig:sumaconcurrencyk10000}
\end{minipage}
\begin{minipage}{1\textwidth}
\centering
\includegraphics[scale=0.5]{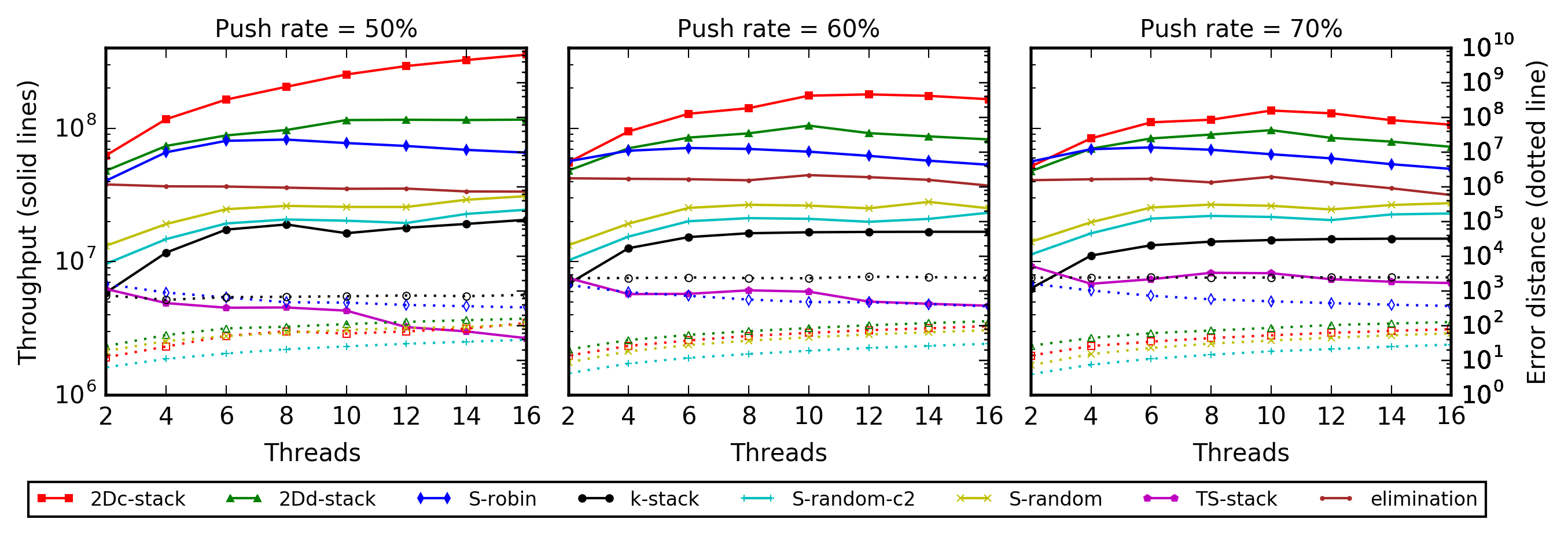}
\subcaption{\multisocket}
\label{fig:snumaconcurrencyk10000}
\end{minipage}
\caption{Stack throughput and observed \accuracy as the number of \processors increases ($k=10^4$).}
\label{fig:sconcurrencyk10000}
\vspace{10pt}
\begin{minipage}{1\textwidth}
\centering
\includegraphics[scale=0.5,trim={0 1.0cm 0 0},clip]{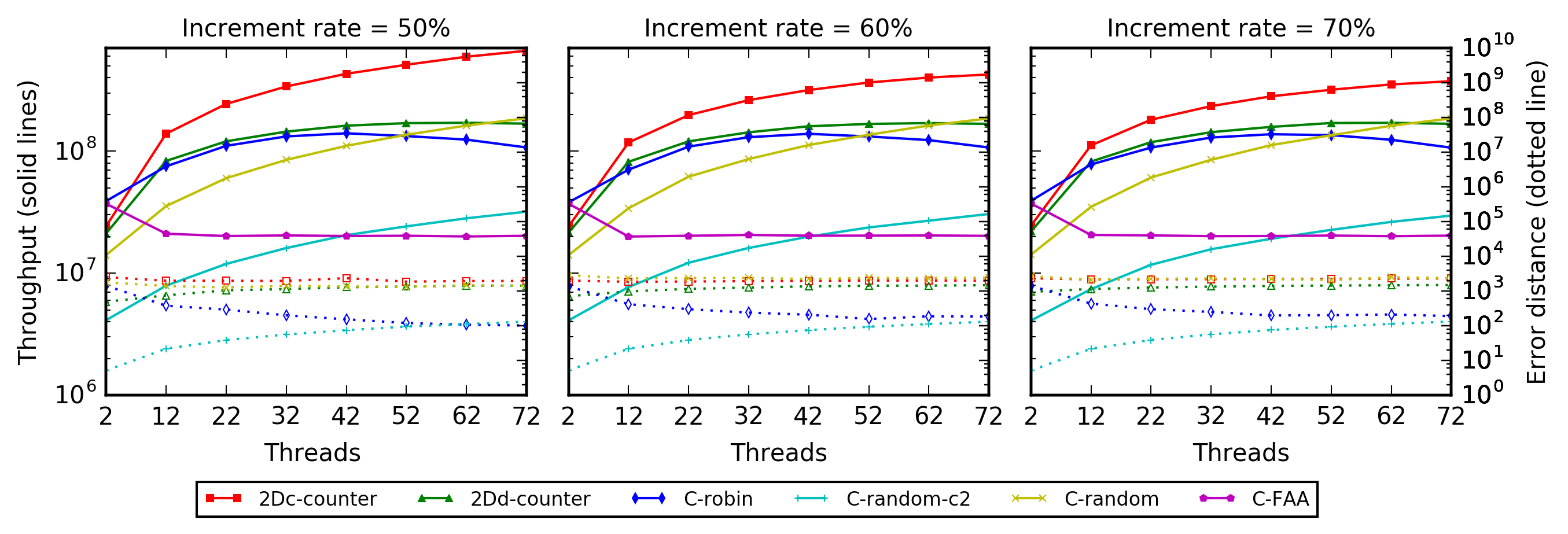}
\subcaption{\singlesocket}
\label{fig:cumaconcurrencyk10000}
\end{minipage}
\begin{minipage}{1\textwidth}
\centering
\includegraphics[scale=0.5]{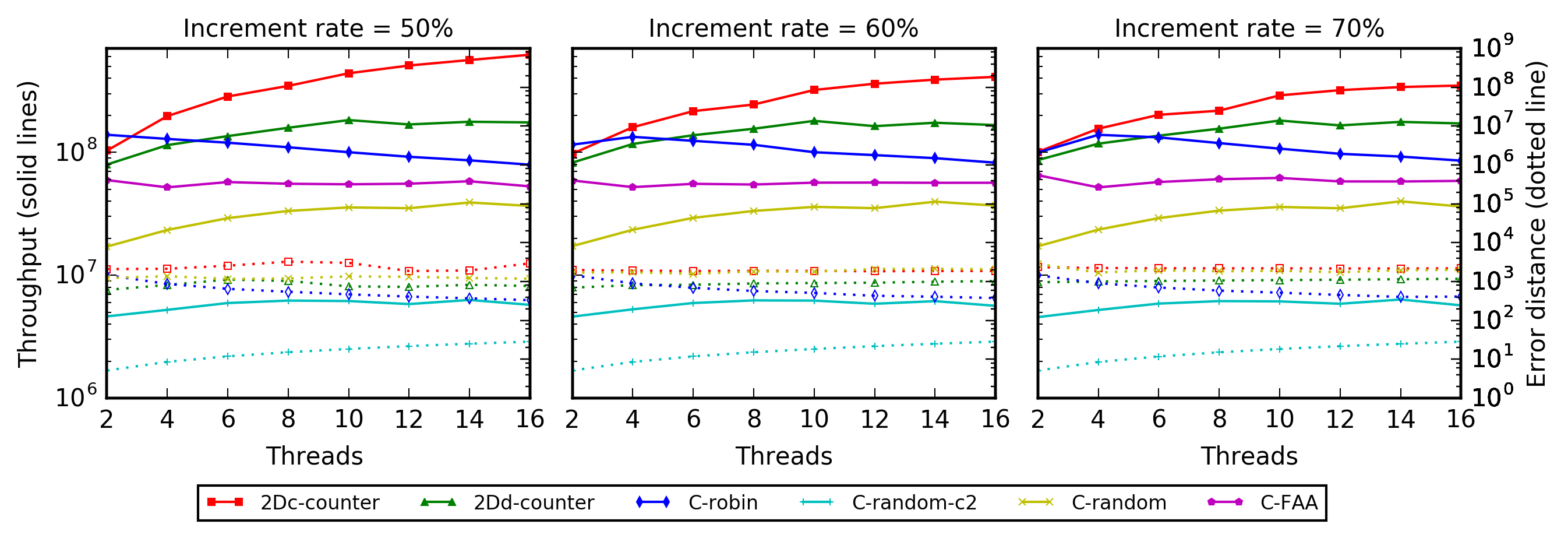}
\subcaption{\multisocket}
\label{fig:cnumaconcurrencyk10000}
\end{minipage}
\caption{Counter throughput and observed \accuracy as the number \processor increases ($k=10^4$).}
\label{fig:cconcurrencyk10000}
\end{figure*}

\begin{figure*}
\begin{minipage}{1\textwidth}
\centering
\includegraphics[scale=0.5,trim={0 1.0cm 0 0},clip]{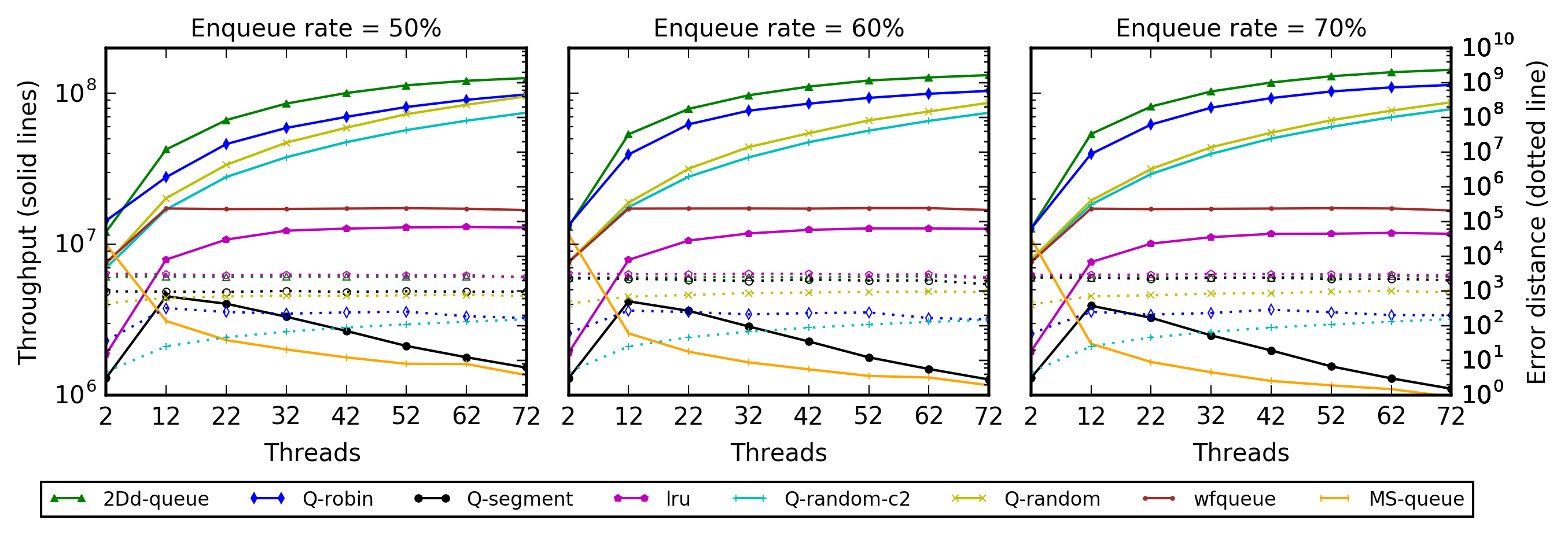}
\subcaption{\singlesocket}
\label{fig:qumaconcurrencyk10000}
\end{minipage}
\begin{minipage}{1\textwidth}
\centering
\includegraphics[scale=0.5]{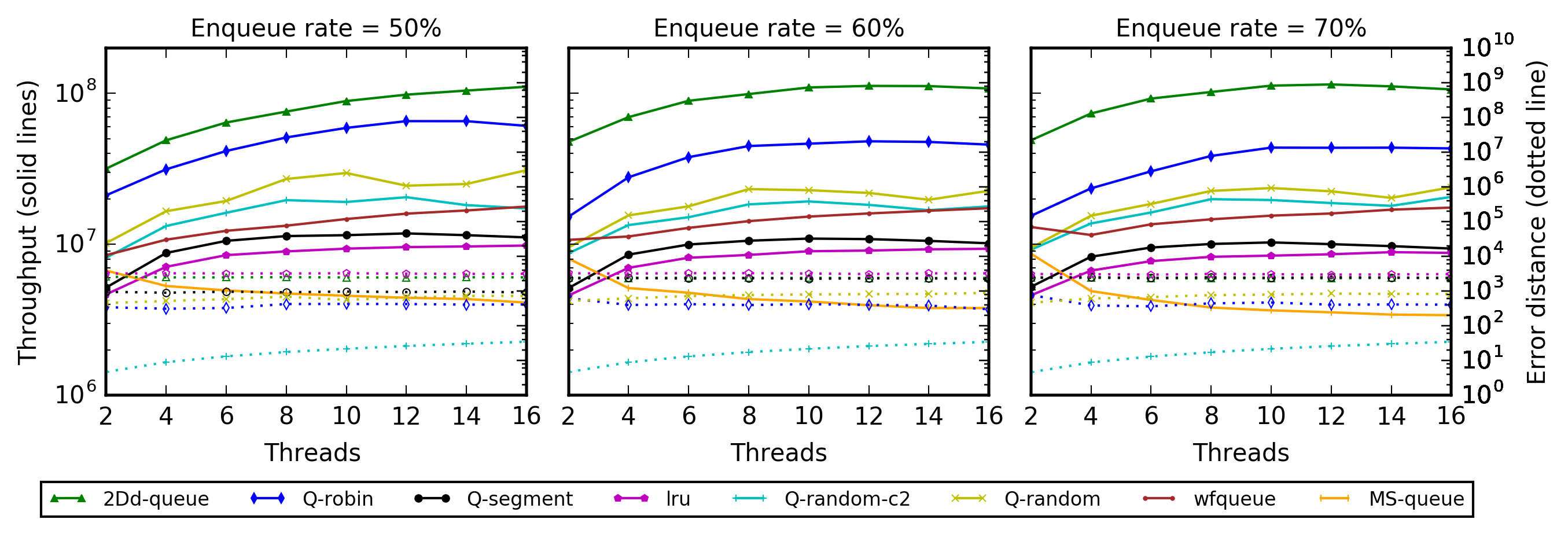}
\subcaption{\multisocket}
\label{fig:qnumaconcurrencyk10000}
\end{minipage}
\caption{Queue throughput and observed \accuracy as the number of \processors increases ($k=10^4$).}
\label{fig:qconcurrencyk10000}
\vspace{10pt}
\begin{minipage}{1\textwidth}
\centering
\includegraphics[scale=0.5,trim={0 1.0cm 0 0},clip]{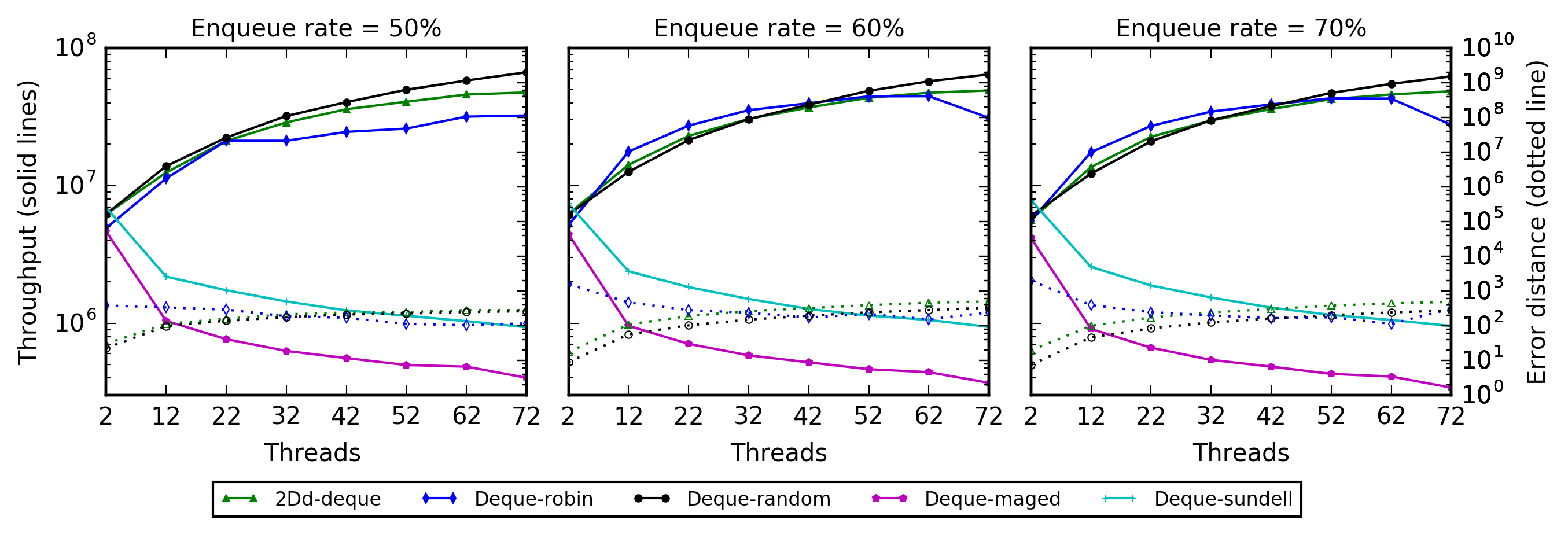}
\subcaption{\singlesocket}
\label{fig:dumaconcurrencyk10000}
\end{minipage}
\begin{minipage}{1\textwidth}
\centering
\includegraphics[scale=0.5]{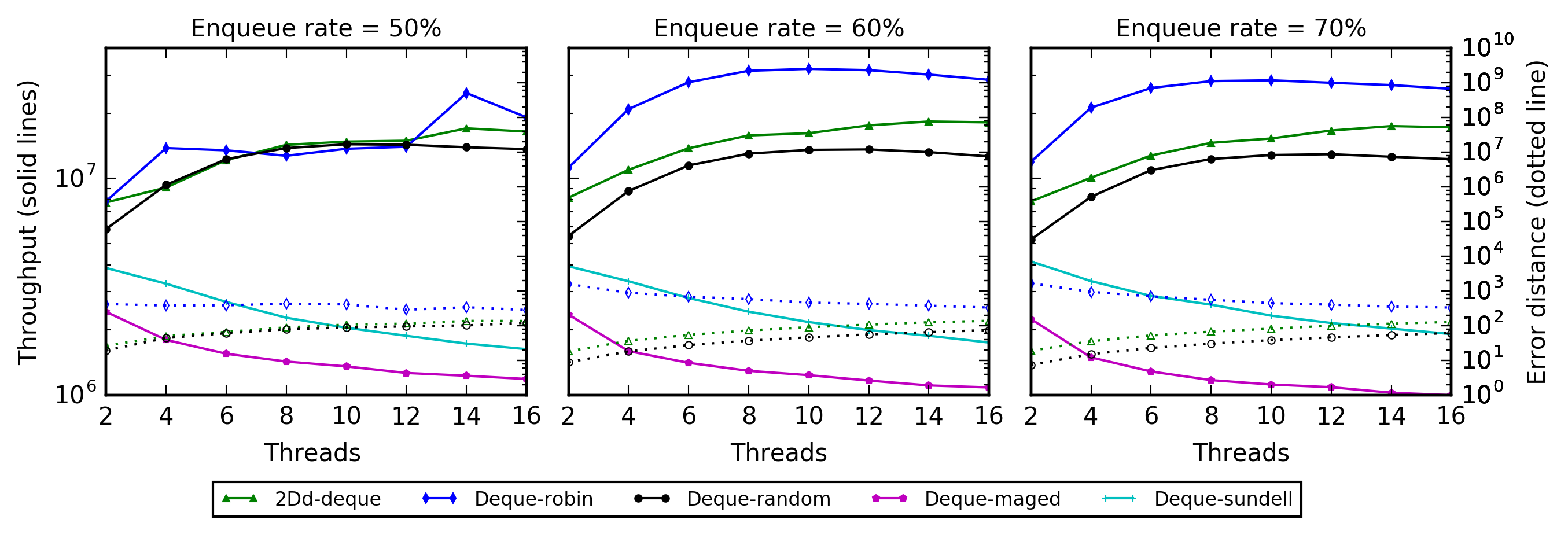}
\subcaption{\multisocket}
\label{fig:dnumaconcurrencyk10000}
\end{minipage}
\caption{Deque throughput and observed \accuracy as the number of \processors increases ($k=10^4$).}
\label{fig:dconcurrencyk10000}
\end{figure*}

To evaluate the scalability of our design as the number of \processors increases, we fix the \relaxation bound to ($k=10^4$) and vary the number of \processors as shown in Figures \ref{fig:sconcurrencyk10000}, \ref{fig:qconcurrencyk10000} and \ref{fig:cconcurrencyk10000} for stack, queue and counter respectively. The reason for $k=10^4$ is to reduce the effect of contention due to small \width at lower $k$ values. This helps us focus on scalability effects. $2D$ algorithms' \width is configured as shown in Table \ref{table:dsrlx}. \random and \randomc algorithms' \width is set to $3P$, as the optimal balance between throughput and \accuracy \cite{don2dstackreport} since both of them do not provide a deterministic $k$ \relaxation bound but rather a probabilistic one where applicable \cite{Alistarh:2017:PCP:3087801.3087810,DBLP:conf/esa/Williams0D21}.

\kstack and \qsegment maintain a constant segment size as the number of \processors increases. This increases the rate at which segments get filled up, leading to a high frequency of \hopos and segment maintenance cost especially for asymmetric workloads. As observed, throughput gain quickly saturates even for a lower number of \processors leading to limited scalability.

The scalability of \lru is mostly limited by the global counter used to calculate the last recently used \subq. For every operation, the \processor has to increment the global counter using a \faa instruction, turning the counter into a scalability bottleneck. This can be observed when \lru performance is compared to that of a single \faa counter (\cfaa). \wfqueue suffers from the same \faa counter sequential bottleneck.

\tsstack's throughput is limited by the \popo search retries, searching for the newest item. Moreover, \popo operations might contend on the same newest items if there are not enough concurrent \pusho operations. Also, \popo lacks locality, which explains the drop in throughput on the \multisocket machine, due to the high inter-socket communication costs. We observe that throughput increases with increased \pusho rate. This is due to increased local processing and increased number of generated young items, leading to reduced \processor contention for \popo operations.

For \roundrobin algorithms, the \width is inversely proportional to the number of \processors (See Theorem~\ref{th:srobin}). As the number of \processors increases, \width reduces leading to increased contention. This explains the observed drop in throughput for a high number of \processors, especially for the \srobin and the \crobin algorithms due to their \substructure single \access design. The effect of lack of locality can be reduced by hardware pre-fetching, a feature available on both machines. This can also explain the \roundrobin better performance compared to the performance of the other algorithms that lack locality.

\random and \randomc algorithms are affected by the lack of locality, which is evident by the difference between \singlesocket and \multisocket results. We observe that the performance difference between \random and $2D$ algorithms increases on the \multisocket machine as compared to that on the \singlesocket machine. This demonstrates how much $2D$ algorithms gain from exploiting locality when executing on a \multisocket machine. Locality helps to avoid paying the high inter-socket communication cost through improved caching behaviour \cite{Hackenberg:2009:CCA:1669112.1669165,David:2013:EYA:2517349.2522714,Schweizer:2015:ECA:2923305.2923811,DBLP:conf/spaa/RukundoAT22}.


\section{Conclusion}
\label{sec:conclusion}
In this work, we have shown that semantics \relaxation has the potential to monotonically trade relaxed semantics of concurrent data structures for achieving throughput performance within tight \relaxation bounds. This has been achieved through an efficient two-dimensional framework that is simple and easy to implement for different data structures. 
We demonstrated that, by deriving two-dimensional lock-free designs for stacks, FIFO queues, dequeues and shared counters. 


Our experimental results have shown that \relaxing in one dimension, restricts the capability to control \relaxation behaviour in-terms of throughput and \accuracy. Compared to previous solutions, our framework can be used to extend existing data structures with minimal modifications while achieving better performance in terms of throughput and \accuracy. 

\bibliographystyle{plain}
\bibliography{mybib}
\end{document}